\definecolor{myBlue}{RGB}{0, 48, 143}
\definecolor{neededBlue}{RGB}{105,130,182}
\definecolor{tg}{RGB}{51,255,51}
\tikzset{snake it/.style={decorate, decoration=snake}}
\pgfplotsset{compat=1.17} 
\newtheorem{theorem}{Theorem}
\newtheorem{conjecture}{Conjecture}
\newtheorem{corollary}{Corollary}
\newtheorem{lemma}{Lemma}
\newtheorem{proposition}{Proposition}
\theoremstyle{definition}
\newtheorem{definition}{Definition}
\theoremstyle{remark}
\newtheorem{remark}{Remark}
\newcommand{\B}{\{0,1\}}
\newcommand{\poly}[1]{\textnormal{{\rm poly}}\!\left(#1\right)}
\newcommand{\cl}[1]{\textnormal{{\bf #1}}}
\newcommand{\clw}[2]{{\bf #1}{\rm -}{\rm #2}}
\newcommand{\Gate}[1]{{\fontfamily{cmr}\selectfont\textsc{#1}}}
\newcommand{\tn}[1]{\textnormal{#1}}
\newcommand{\e}{{\rm e}}
\renewcommand{\i}{{\rm i}}
\newcommand{\g}{\textsl{g}}
\newcommand{\hw}{\textnormal{hw}}
\renewcommand{\ket}[1]{|#1\rangle}
\renewcommand{\bra}[1]{\langle#1|}
\renewcommand{\sc}[1]{\textnormal{\textsc{#1}}}
\renewcommand{\H}{\mathcal{H}}
\DeclareMathOperator{\Gauss}{\mathcal{G}}
\DeclareMathOperator{\Fend}{\mathcal{F}}
\DeclareMathOperator{\pd}{\phantom{\dagger}}
\DeclareFontShape{T1}{ntxtlf}{m}{sl}{<-> \ntx@scaled ptmro8t}{}
\DeclareFontShape{T1}{ntxtlf}{b}{sl}{<-> \ntx@scaled ptmbo8t}{}
\DeclareFontShape{T1}{ntxtlf}{bx}{sl}{<-> ssub * ntxtlf/b/sl}{}
\newcommand{\manualtocentry}[2]{%
  \noindent%
  \hyperref[#1]{\ref*{#1}.\quad #2}%
  \nobreak{\dotfill}\nobreak%
  \pageref{#1}%
  \\[0.1cm]
}
\newcommand{\manualtocsubentry}[2]{%
  \phantom{~}\hspace{3em}%
  \hyperref[#1]{\ref*{#1}.\quad #2}%
  \nobreak{\dotfill}\nobreak%
  \pageref{#1}%
  \\[0.1cm]
}
\newcommand{\altmanualtocentry}[2]{%
  \noindent%
  \hyperref[#1]{#2}%
  \nobreak{\dotfill}\nobreak%
  \pageref{#1}%
  \\[0.1cm]
}
\newcommand{\manualtocentryapp}[2]{%
  \noindent%
  \hyperref[#1]{\ref*{#1}.\quad #2}%
  \nobreak{\dotfill}\nobreak%
  \pageref{#1}%
  \\[0.1cm]
}
\newcommand{\manualtocsubentryapp}[2]{%
  \phantom{~}\hspace{3em}%
  \hyperref[#1]{\ref*{#1}.\quad #2}%
  \nobreak{\dotfill}\nobreak%
  \pageref{#1}%
  \\[0.1cm]
}
\newcommand{\manualtocsubsubentryapp}[2]{%
  \phantom{~}\hspace{6em}%
  \hyperref[#1]{\ref*{#1}.\quad #2}%
  \nobreak{\dotfill}\nobreak%
  \pageref{#1}%
  \\[0.1cm]
}
\def\l@subsubsection#1#2{}
\begin{document}

\title{Physically-Motivated Guiding States for Local Hamiltonians}

\author{Gabriel Waite}\email{gabrielwaite1999@outlook.com}
\affiliation{Centre for Quantum Computation and Communication Technology}
\affiliation{Centre for Quantum Software and Information, 
School of Computer Science, Faculty of Engineering \& Information Technology, University of Technology Sydney, NSW 2007, Australia
}

\author{Karl Lin}
\affiliation{Centre for Quantum Computation and Communication Technology}
\affiliation{Centre for Quantum Software and Information, 
School of Computer Science, Faculty of Engineering \& Information Technology, University of Technology Sydney, NSW 2007, Australia
}

\author{Samuel J Elman}\email{samuel.elman@uts.edu.au}
\affiliation{Centre for Quantum Computation and Communication Technology}
\affiliation{Centre for Quantum Software and Information, 
School of Computer Science, Faculty of Engineering \& Information Technology, University of Technology Sydney, NSW 2007, Australia
}

\author{Michael J Bremner}\email{michael.bremner@uts.edu.au}
\affiliation{Centre for Quantum Computation and Communication Technology}
\affiliation{Centre for Quantum Software and Information, 
School of Computer Science, Faculty of Engineering \& Information Technology, University of Technology Sydney, NSW 2007, Australia
}

\begin{abstract}
    We study the computational complexity of the \textsc{Guided Local Hamiltonian} problem: 
    given a local Hamiltonian $H$ together with a classical description of a guiding state that has non-negligible overlap with the ground state of $H$, estimate the ground-state energy within inverse-polynomial precision.
    This setting captures real-world scenarios in quantum chemistry and many-body physics, where trial states derived from classical heuristics can be used to guide quantum algorithms.
    We identify families of physically-motivated guiding states for which the computational hardness of ground-state energy estimation persists in the guided setting.
    Extending prior results for semi-classical subset states, we prove \textbf{BQP}-hardness for classes including fixed-weight states, matrix product states, Gaussian states, and Fendley states.
    Our hardness results are obtained via refined Feynman-Kitaev circuit-to-Hamiltonian constructions that explicitly expose the structural role of the guiding state in the reduction.
    Complementing these results, we give a constructive proof of \textbf{BQP} containment when the guiding state admits a polynomial-size classical description, establishing \textbf{BQP}-completeness for the canonical formulation of the problem.
    Our results show that quantum advantage persists for the newly introduced state classes, and classical methods also remain viable when said guiding states admit appropriate descriptions.
    Together, our results identify a \emph{Goldilocks zone} of guiding states that are efficiently preparable, succinctly described, and sample-query accessible, within which quantum advantage for ground-state estimation can be meaningfully assessed.
    We additionally formalise the \textsc{Guided Fermi-Hubbard Hamiltonian} problem and prove \textbf{BQP}-completeness on 2D square and triangular lattices, both with and without magnetic fields, when provided with an appropriate fermionic guiding state.
\end{abstract}
\maketitle
{
\begin{spacing}{0.8}
\begin{center}
    \manualtocentry{sec:intro}{Introduction}
    \manualtocentry{sec:bqp_hardness}{Hardness of the Guided Local Hamiltonian Problem}
        \manualtocsubentry{sec:history_state_hardness}{Warm Up with the History State}
        \manualtocsubentry{sec:guiding_state_framework}{Guiding State Construction Framework}
        \manualtocsubentry{sec:2-local_reduction}{Reduction to 2-Local Hamiltonians}
    \manualtocentry{sec:state_type_variations}{State Type Variations and Implications}
        \manualtocsubentry{sec:fixed_weight_states}{Fixed-Weight States}
        \manualtocsubentry{sec:mps_states}{Matrix Product States}
        \manualtocsubentry{sec:gaussian_states}{Gaussian States}
        \manualtocsubentry{sec:main_result_a}{Main Result}
        \manualtocsubentry{sec:implications_extensions}{Implications and Extensions}
    \manualtocentry{sec:bqp_containment}{Containment of the Guided Local Hamiltonian Problem in BQP}
        \manualtocsubentry{sec:qpe}{Quantum Phase Estimation}
        \manualtocsubentry{sec:efficient_state_preparation}{Efficient State Preparation}
        \manualtocsubentry{sec:main_result_b}{Main Result}
    \manualtocentry{sec:classical_tractability}{Classical Tractability within the Goldilocks Zone}
        \manualtocsubentry{sec:goldilocks_zone}{The Goldilocks Zone}
        \manualtocsubentry{sec:dequantisation_algorithm}{Dequantisation Algorithm}
    \manualtocentry{sec:fh_model}{Complexity of the Fermi-Hubbard Model with a Guiding State}
        \manualtocsubentry{sec:duality_mapping}{Heisenberg-Fermi-Hubbard Duality}
        \manualtocsubentry{sec:fermionic_guiding_states}{Fermionic Guiding States}
        \manualtocsubentry{sec:fermi_hubbard_local_fields}{Fermi-Hubbard with Local Fields}
        \manualtocsubentry{sec:main_result_c}{Main Results}
    \manualtocentry{sec:conclusion}{Conclusion}
    \altmanualtocentry{sec:acknowledgments}{Acknowledgments}
    \altmanualtocentry{sec:refs}{References}
    \altmanualtocentry{app:toc}{Appendices}
\end{center}
\end{spacing}
}
\newpage
\section{Introduction}\label{sec:intro}
A result of quantum complexity theory is the intractability of computing the ground-state energy of an arbitrary local Hamiltonian~\cite{kitaev2002classical}.
This complexity persists even for physically-motivated Hamiltonians, such as those relevant to quantum chemistry~\cite{schuch2009computational,cubitt2018universal,ogorman2021electronic}. 
Under the widely believed assumption that $\cl{BQP} \neq \cl{QMA}$, no efficient quantum algorithm is expected to solve this problem without additional information.
To address this challenge, the problem \sc{Guided Local Hamiltonian} was introduced~\cite{richter2007two,gharibian2023dequantizing,cade2023improved}, where the input includes a classical description of a \emph{guiding state} that has non-trivial overlap with the ground state.
The computational problem is to estimate the ground-state energy of a local Hamiltonian, given access to this guiding state.

Trial states to guide ground-state energy searches are often constructed using classical heuristic algorithms.
To narrow the search space, methods like density functional theory (DFT)~\cite{tsuneda2014density,bogojeski2020quantum}, density matrix renormalisation group (DMRG)~\cite{white1992density}, and Hartree-Fock~\cite{arute2020hartree} leverage structural approximations such as mean-field ansatze or interaction strength bounds.
Additional methods have used active space truncations, based on physical intuition, to reduce the computational overhead~\cite{clinton2024towards}.
Yet, selecting optimal bases with which to perform these pre-computations is generally \clw{QMA}{hard}~\cite{ogorman2021electronic}, underscoring the challenge of circumventing worst-case complexity bounds.

The practical success of guided energy estimation raises the question of whether a theoretical framework can be developed to analyse its performance --- particularly in worst-case settings and across varying parameter regimes.
To study this meaningfully, we must place natural restrictions on the class of guiding structures considered.
In this work, we focus primarily on guiding states that are physically motivated, admit succinct classical descriptions, and can be efficiently prepared.
Richter~\cite{richter2007two} was the first to formally study the use of guiding states in a complexity-theoretic context.
It was shown that both polynomial-time preparable states and simple product-states could serve as guiding states; provided a sufficient overlap with the ground state, this renders Hamiltonian ground-state energy estimation \clw{BQP}{complete}.
Building on this, recent work has further explored \emph{assisted} variants of the \sc{Local Hamiltonian} problem.
\citet{gharibian2023dequantizing} introduced the \sc{Guided Local Hamiltonian} problem, where the input includes a state with (at least) inverse-polynomial overlap with the true ground state.
It was shown this problem is \clw{BQP}{hard}, even for $2$-local Hamiltonians~\cite{cade2023improved} and specific local stoquastic Hamiltonians~\cite{waite2025guided}, yet classically tractable under bounded-precision and overlap constraints~\cite{gharibian2023dequantizing}.
These results suggest a quantum advantage in regimes requiring inverse-polynomial precision, which is relevant to the ``chemical accuracy'' in quantum chemistry.\footnote{Chemical accuracy is not universal; the required precision typically depends on experimental or computational needs.}
\citet{zhang2024dequantized} recently extended the classical tractability to a broader class of Hamiltonians via the use of randomised imaginary-time evolution.
This approach, however, requires the guiding state to satisfy both overlap and circuit-depth constraints.

In this work, we study the \sc{Guided Local Hamiltonian} problem under physically-motivated restrictions of the guiding state.
This problem augments standard ground-state energy estimation by including, as part of the input, a guiding state via a succinct classical description that is promised to have non-negligible overlap with the true ground state --- see \cref{def:guided_local_hamiltonian} for a formal statement of this computational problem.
While prior work established hardness and classical tractability results for broadly defined guiding states, it remained unclear which aspects of these results persist under structural and physical constraints that arise naturally in quantum many-body systems.
We introduce guiding states with classical encodings that extend beyond those previously considered~\cite{gharibian2023dequantizing,cade2023improved,waite2025guided}, necessitating constructive proofs of polynomial-time decidability and polynomial-time reductions to establish \cl{BQP} containment and \clw{BQP}{hardness} respectively.
Our work provides a broader set of parameters to explore for both theoretical and practical efforts~\cite{nguyen2025theoretical}.

Our primary technical contribution concerns the robustness of \clw{BQP}{hardness} when restricting the structure of the guiding state.
Extending earlier results for semi-classical subset states, we prove \clw{BQP}{hardness} for new families of structured guiding states, including fixed-weight states, matrix product states, Gaussian states, and Fendley states --- see \cref{sec:state_type_variations} for formal definitions of the aforementioned states and remarks on their physical relevance.
Our hardness proofs rely on refined Feynman-Kitaev circuit-to-Hamiltonian constructions that expose how the structure of the guiding state constrains, for example, admissible clock encodings and local transitions.
Rather than exploiting specialised properties of individual state families, our analysis shows that hardness persists under general structural requirements, demonstrating that quantum hardness is not an artefact of unphysical or highly unstructured guiding states.
As an initial application of this framework, we first analyse the \clw{BQP}{hardness} construction using the standard history state as the guiding state.
This serves both to illustrate the general reduction strategy and to isolate the structural features of the history state that are essential for preserving hardness.
To motivate the subsequent guiding state constructions, we introduce a windowing technique that restricts attention to carefully chosen support sectors of the full history state.
This perspective allows us to identify which components of the circuit history can be modified or omitted, such as gates that act trivially or are sufficiently close to the identity, without compromising the validity of the reduction.
By characterising these permissible simplifications, we obtain a method for constructing alternative guiding states that retain the necessary overlap and structural properties required for \clw{BQP}{hardness}.
Subsequent use of perturbative gadgets~\cite{kempe2006complexity,oliveira2008complexity} allows us to reduce the locality of the Hamiltonians involved, yielding \clw{BQP}{hardness} results for $2$-local Hamiltonians under each guiding state family.
A high-level summary of the reduction procedure is provided in \cref{fig:hardness-summary}.

Another contribution of this work is to make explicit the containment of the \sc{Guided Local Hamiltonian} problem within \cl{BQP} when the guiding state is provided via a suitable polynomial-size classical description.
Although \clw{BQP}{completeness} was previously claimed for related formulations, earlier works did not provide a constructive proof of containment for the canonical input model~\cite{gharibian2023dequantizing,cade2023improved}.
We show that such containment is not automatic: a classical description of a state does not, by itself, guarantee efficient quantum preparation.
By identifying sufficient conditions under which guiding states can be efficiently prepared from their descriptions, we provide a concrete and rigorous \cl{BQP} containment algorithm that clarifies implicit assumptions made in prior literature and establishes \clw{BQP}{completeness} for the extended families we introduce.
More generally, a succinct classical description of a state family does not by itself imply the existence of a uniform and efficient preparation procedure.
In particular, such descriptions may be non-uniform across system sizes, with no known algorithm that generates the description or a corresponding preparation circuit as a function of $n$ (see \cref{rmk:uniformity} for further discussion).
This distinction is especially relevant for guiding state families such as Fendley states, where efficient and uniform preparation from known descriptions remains an open question~\cite{ruh2025furthering}.
It is thus important to clarify the assumptions under which \cl{BQP} containment holds.

We additionally investigate when the \sc{Guided Local Hamiltonian} problem admits efficient classical algorithms.
We show that several of the physically-motivated guiding state families introduced in this work also preserve known classical tractability results, provided the guiding state admits efficient sample- and query-access~\cite{vandennest2011simulating} under comparable input assumptions.
This motivates the definition of a shared regime in which quantum and classical algorithms operate on the same form of input information \cite{kieferova2022assume}.
We refer to this regime as the \emph{Goldilocks zone}: guiding states that are succinctly described, efficiently preparable, and classically accessible.
Within this zone, meaningful comparisons between quantum and classical computational power can be made without relying on asymmetric input models.

Our results on the constructive \cl{BQP} containment also clarify the relationship to the classical result of Ref.~\cite{zhang2024dequantized}, which proves the existence of a classical approximation algorithm for the ground-state energy of local Hamiltonians whose operator norms scale physically, provided the guiding state is prepared by a constant-depth quantum circuit.
For the guiding state families introduced in this work, as well as for several previously studied cases, we show that efficient preparation generally requires circuit depth growing with system size, placing them outside the regime covered by constant-depth dequantisation algorithms.
While this does not preclude the existence of other classical algorithms or special cases where known guiding states admit constant-depth preparations, our results highlight the importance of circuit depth as a parameter governing the classical tractability of constructive state preparation.

Finally, we consider the extension of the guided problem to physically-motivated settings involving interacting fermionic systems.
Specifically, we study the \textsc{Guided Fermi-Hubbard Hamiltonian} problem, where the input is a Fermi-Hubbard Hamiltonian defined on a 2D square or triangular lattice, together with a fermionic guiding state that has non-negligible overlap with the ground state --- see \cref{def:guided-fermi-hubbard} for a formal statement.
We prove there exists a polynomial-time computable reduction establishing the \clw{BQP}{hardness} of this problem, both with and without local fields present, using a perturbative analysis of the models kinetic and potential terms; notably, we do not rely on the Feynman-Kitaev circuit-to-Hamiltonian construction.
The proof strategy utilises the standard duality mapping between the Fermi-Hubbard model and the antiferromagnetic Heisenberg model~\cite{auerbach1994interacting, hubbard1963electron}.
The guiding state families we construct have a natural association with the existing semi-classical subset states and subsequently admit succinct classical descriptions.
Our result closes an open problem raised in Ref.~\cite{cade2023improved} concerning the extension of the complexity results for the \sc{Guided Local Hamiltonian} beyond the standard spin models.

Open questions remain, particularly about whether there are rigorous algorithms for constructing guiding states relevant to this regime \cite{bansal2009classical,buhrman2025beating}.
Recent work suggests that a study on the complexity of finding extremal states from specific classes may be fruitful.
For example, \citet{kallaugher2024complexity} proved \clw{NP}{completeness} for deciding the energy of extremal product states.
Interestingly, our results on Gaussian guiding states complement this result by proving the \sc{Gauss Local Hamiltonian} problem is \clw{NP}{complete} (see \cref{app:STLHP} for details).
Using more fine-grained tools from parametrised complexity theory, \citet{bremner2025parameterized} showed that estimating the lowest energy for superposition states parameterised by Hamming weight is in \cl{QW$[1]$} and hard for \cl{QM$[1]$}.
Additionally, recent results have demonstrated \clw{MA}{completeness} for Hamiltonians with succinctly represented ground states~\cite{waite2025complexityb,jiang2025local}.  
These variants expand the potential scope for the applicability of classical heuristics to problems beyond the standard \sc{Local Hamiltonian} problem.

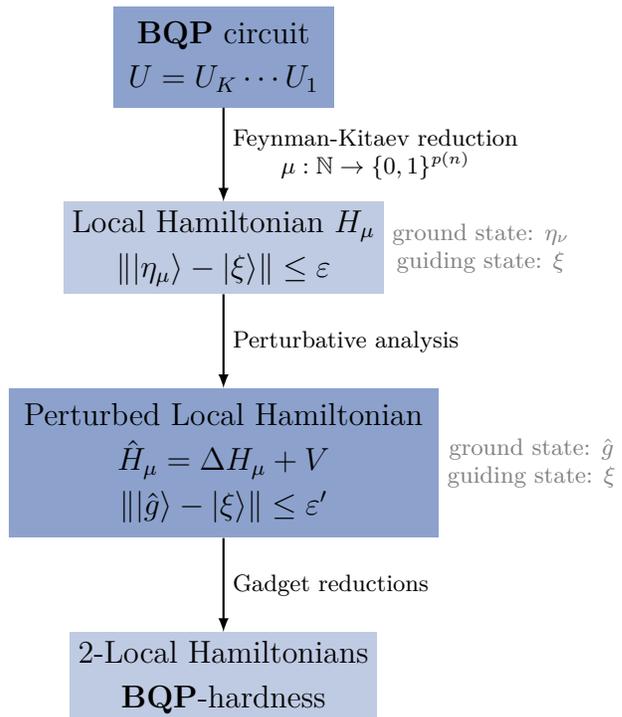
\begin{figure}[!ht]
    \centering
    \begin{tikzpicture}
        \pic{summary};
    \end{tikzpicture}
    \caption{
        Summary of the steps taken to prove \clw{BQP}{hardness} for the \sc{Guided Local Hamiltonian} problem. 
        We reduce from arbitrary \cl{BQP} circuits to a local Hamiltonian $H_{\mu}$, defined for a specific clock register encoding $\mu$. 
        Next, we construct a guiding state $\xi$ that is guaranteed to have overlap with the ground state $\eta_\nu$ of the Hamiltonian $H_{\mu}$. 
        Then we perform a perturbative analysis to show that the ground state of the perturbed Hamiltonian $\hat{H}_{\mu}$ is close to the guiding state $\xi$. 
        Further gadget reductions are employed to reduce the locality to $2$, concluding the proof of \clw{BQP}{hardness}.
        }
    \label{fig:hardness-summary}
\end{figure}

\subparagraph{Prior Work.}
Richter~\cite{richter2007two} proposed the problem \sc{Local Hamiltonian$^*$} as an extension of the standard \sc{Local Hamiltonian} problem, where the input includes a state which has at least inverse-polynomial overlap with a low-energy state of the Hamiltonian.
Standard arguments concerning eigenvalue estimation and small extensions to the Feynman-Kitaev circuit-to-Hamiltonian reduction show the problem to be \cl{BQP}-complete.
Not much structure was placed on the type of guiding state, other than that its efficiency concerning preparability.
The \sc{Guided Local Hamiltonian} problem was later introduced by \citet{gharibian2023dequantizing}, who established its definition and initial hardness result. 
Subsequent works, such as Ref.~\cite{cade2023improved}, extended these results to lower locality and physically relevant Hamiltonian families. 
The guiding state types considered in these works were motivated by a result showing \cl{QMA} is unchanged if proof states are replaced with subset states~\cite{grilo2015qma}.
Such states admit a specific structure, different from the state types previously considered.

Ref.~\cite{cade2022complexity} further explored the use of guiding states in the context of excited states as well as the canonical setting; though this problem altered the input format of the guiding state.
Ref.~\cite{zhang2024dequantized} proposed a dequantisation algorithm, based on standard cluster expansion techniques, broadening the scope of classical tractability~\cite{gharibian2023dequantizing}. 
A step in this algorithm performs the mapping $H \mapsto U^\dagger H U$, where $U$ is the quantum circuit preparing the guiding state.
Simple light cone arguments demonstrate that for classical efficiency, $U$ must be constant-depth.
Beyond this, we would need to perform a \emph{quantum polynomial-time reduction} to transform the Hamiltonian.
Our work builds on these results, offering a detailed analysis of the complexity of guiding state variations from the perspective of both classical and quantum algorithms.

This work differs from Ref.\cite{deshpande2022importance}'s \sc{Ground State Description} and Ref.\cite{weggemans2024guidable}'s \sc{Guidable Local Hamiltonian} problems, which extend Ref.~\cite{bravyi2015monte}, but do not treat the guiding state as part of the input of the problem's instance; rather, a guiding state is promised to exist.
Ref.~\cite{weggemans2024guidable} focuses on guidable states motivated by physical constraints, e.g., matrix product, stabiliser, and IQP states, but within a different complexity framework.
In contrast, our problem formulation is distinct: we analyse a different class of guiding states, under a different set of assumptions, and with a different problem input model.
While the settings share commonalities, our results are logically independent and complementary to theirs, with no contradictions.
We also highlight that our construction admits guiding state families that have not been previously addressed.

\subparagraph{Outline.}
The remainder of this work is structured as follows.
In \cref{sec:bqp_hardness}, we present our framework for establishing \clw{BQP}{hardness} results for the \sc{Guided Local Hamiltonian} problem.
We begin by illustrating the reduction strategy using the standard history state as the guiding state.
This serves both to illustrate an easy example of a valid guiding state and to isolate the structural features of the history state that are essential for preserving hardness.
Then, in \cref{sec:state_type_variations}, we present our main \clw{BQP}{hardness} results for various physically-motivated guiding state families.
\cref{sec:bqp_containment} provides a constructive proof of the \cl{BQP} containment for the canonical problem formulation, under suitable assumptions on the guiding state description.
\cref{sec:classical_tractability} discusses the classical tractability of the \sc{Guided Local Hamiltonian} problem under comparable input assumptions and introduces the \emph{Goldilocks zone} of guiding states.
\cref{sec:fh_model} presents our results on the \textsc{Guided Fermi-Hubbard Hamiltonian} problem, establishing \clw{BQP}{completeness} for the model in a specific regime.
Finally, we conclude in \cref{sec:conclusion} with some remarks and open problems.
Where appropriate, we have deferred proofs to the appendices.
Furthermore, in the appendices, we provide supplementary material including formal definitions, proofs, and additional discussions that support the main text.
In particular, we define polynomial-time generated quantum circuits, the complexity class \cl{BQP} in \cref{app:complexity}, and various access models for quantum states in \cref{app:sample-query}.

\section{Hardness of the Guided Local Hamiltonian Problem}\label{sec:bqp_hardness}
In what follows we denote the eigensystem of a Hamiltonian $H$ as $\{(\lambda_j, \ket{\phi_j})\}_{j=0}^{2^n-1}$, where $\lambda_0$ and $\ket{\phi_0}$ are the ground-state energy and ground state respectively.
We define the spectral gap as $\gamma \coloneqq \min_j \{\lambda_j - \lambda_0 : \lambda_j > \lambda_0\}$.
A Hamiltonian over $n$ qubits is said to be $k$-local if it can be expressed as a sum of (self-adjoint) terms, each acting non-trivially on at most $k$ qubits.
Denote the fidelity between two states $\ket{\psi}$ and $\ket{\phi}$ as $F_{\psi,\phi} = \abs{\braket{\psi}{\phi}}^2$.
We formally define the main computational problem studied in this work.

\begin{restatable}[\sc{[State Type] Guided Local Hamiltonian} problem]{definition}{guidedlocalhamiltonian}\label{def:guided_local_hamiltonian}
    Given a $k$-local Hamiltonian $H$ acting on $n$ qubits such that $\norm{H} \leq 1$, parameters $a,b \in [0,1]$ such that $b-a \geq 1/\poly{n}$ and a classically efficient description of a state $\ket{\xi}$, with the promise that $\norm{\ketbra{\phi_0}\ket{\xi}}^2 \geq \delta$ for some $0 <\delta < 1$, decide whether $\lambda_0 \leq a$ or $\lambda_0 \geq b$, promised one is true.
\end{restatable}

A \clw{BQP}{hardness} result for the above problem entails encoding an arbitrary \cl{BQP} computation into a $k$-local Hamiltonian $H$ such that the ground-state energy $\lambda_0$ reflects the outcome of the computation.
In particular, if the computation accepts, then $\lambda_0 \leq a$, and if it rejects, then $\lambda_0 \geq b$, for a suitable choice of $a$ and $b$.
Equivalently, we embed the circuit's acceptance probability into the low-energy spectrum of $H$.
In addition to this encoding, we must also construct a guiding state $\ket{\xi}$ of the specified type that has non-negligible overlap with the ground state $\ket{\phi_0}$ of $H$.
Furthermore, we must express a classically efficient description of $\ket{\xi}$ that forms part of the input to the problem instance.
Since we are aiming for polynomial-time Karp reductions, all steps of the construction must be computable in polynomial time.

We begin by illustrating a sketch of the general reduction strategy using the standard history state as the guiding state.
This serves both to illustrate an easy example of a valid guiding state and to isolate the structural features of the history state that are essential for preserving hardness.
To establish this result, we overview the Feynman-Kitaev circuit-to-Hamiltonian construction starting from an arbitrary \cl{BQP} circuit.
We then identify which components of the circuit history can be modified or omitted, such as gates that act trivially or are sufficiently close to the identity, without compromising the validity of the reduction.
As a result, we obtain a general formula for constructing alternative guiding states that retain the necessary overlap and structural properties required for \clw{BQP}{hardness}.
In the subsequent sections we apply this framework to various physically-motivated guiding state families.

\subsection{Warm Up with the History State}\label{sec:history_state_hardness}
Let $x \in \B^*$ be a problem instance for an arbitrary promise problem in \cl{BQP} with $\mathcal{U}_x$ the corresponding polynomial-time generated quantum circuit that decides $x$.
Via repetition and majority voting, we can assume that $\mathcal{U}_x$ accepts with probability at least $1 - \epsilon$ if $x$ is a \sc{yes} instance, and at most $\epsilon$ if $x$ is a \sc{no} instance, for some $\epsilon = 2^{-\poly{|x|}}$.
We assume that $\mathcal{U}_x = U_K \cdots U_1$ is a quantum circuit composed of $K = \poly{|x|}$ gates acting on $|x| + \poly{|x|}$ qubits, where each gate $U_t$ is drawn from a universal gate set $\mathcal{G}$ such that each $U_t$ acts on at most $2$ qubits.
Take the first qubit, denoted as $1$, to be the output qubit of the computation --- i.e., measuring this qubit in the computational basis reveals the outcome of the computation.
The idea is to construct a local Hamiltonian $\hat{H}$ such that the ground state encodes the history of the computation performed by $\mathcal{U}_x$.
Note that in contrast to standard constructions working from Merlin-Arthur circuits~\cite{kitaev2002classical}, we do require a proof state (see \cref{app:bqp_hardness_framework} for further discussion).

Define the local Hamiltonian over the space $(\mathbb{C}^2)^{\otimes(n + m + K+1)}$ where $n \coloneqq |x|$ is the number of input qubits, $m \coloneqq \poly{|x|}$ is the number of ancilla qubits, and the remaining $K+1$ qubits form the clock space that tracks the time step of the computation, encoded in unary.
Following the Feynman-Kitaev construction~\cite{kitaev2002classical,feynman1986quantum}, we define the local Hamiltonian $\hat{H} = \Delta H + V$, where
\begin{align*}
    H &\coloneqq H_{\tn{in}} + H_{\tn{prop}} + H_{\tn{clock}}, &
    V &\coloneqq H_{\tn{out}}, 
\end{align*}
where each term is defined to penalise invalid configurations of the computation, and $\Delta > 0$ is a scaling parameter.
The specific forms of these terms are presented in a moment; for now, we focus on constructing the guiding state.
The frustration-free ground state of $H$ is the well-known history state:
\begin{equation*}
    \ket{\eta} \coloneqq \frac{1}{\sqrt{K+1}} \sum_{t=0}^{K} \ket{\varphi_t}\ket{t},
\end{equation*}
where $\ket{\varphi_t} \coloneqq U_t \ket{\varphi_{t-1}}$ is the state of the computation after $t$ time steps and $\ket{\varphi_0} \coloneqq \ket{x}\ket{0^m}$ is the initial state.
Perturbing $H$ by $V$ introduces an energy penalty for rejecting computations, thereby shifting the eigensystem of $\hat{H}$ to reflect the acceptance probability of $\mathcal{U}_x$.
Importantly, the history state $\ket{\eta}$ is not the ground state of $\hat{H}$, but has non-negligible overlap with it.
The following lemma summarises these properties.

\begin{restatable}{lemma}{lmahistorystateguiding}\label{lma:history_state_guiding}
    Let $L \subseteq \B^*$ be a \cl{BQP} promise problem.
    Take $\mathcal{U}_x$ be a sequence of $K = \poly{|x|}$ unitary gates for instance $x \in L$.
    Using the Feynman-Kitaev construction, define the Hamiltonian
    \begin{equation*}
        \hat{H} = \Delta(H_{\tn{in}} + H_{\tn{clock}} + H_{\tn{prop}}) + V_{\tn{out}}.
    \end{equation*}
    Taking $\Delta > 112 K^3$ and $V_{\tn{out}} = \ketbra{0}_1 \ketbra{K}_C$ such that $\norm{V_{\tn{out}}} = 1$, results in $\hat{H}$ having a one-dimensional ground space spanned by $\ket{\hat{\phi}_0}$.
    Furthermore, it follows that 
    \begin{equation*}
        \norm{\ket{\hat{\phi}_0} - \ket{\eta}} = O(1/\poly{|x|}),
    \end{equation*}
    where $\ket{\eta}$ is the history state of the circuit.
\end{restatable}

The proof of \cref{lma:history_state_guiding} is provided in \cref{app:bqp_hardness_framework}.\footnote{In fact, we prove a stronger result using the clock mapping introduced in the next subsection.}
Since the history state and the ground state of $\hat{H}$ are close in norm, it follows that they also have non-negligible overlap.
Thus, the history state $\ket{\eta}$ can serve as a valid guiding state for the \sc{Guided Local Hamiltonian} problem.

The history state is a member of a family of states we call the \emph{Unitarily Transformed Subset States}~\cite{waite2025guided}, defined as:
\begin{equation*}
    \ket{S_{U}} \coloneqq \frac{1}{\sqrt{\abs{S}}}\sum_{z \in S} U_z \ket{z},
\end{equation*}
where $U = \{U_z\}_{z \in S}$ is a set of unitaries that act on the state $\ket{z}$ and $S$ is a subset of at most a polynomial number of computational basis states.
Given the classical description of the set $S$ and the unitaries $U_z$, it is possible to prepare the state $\ket{S_{U}}$ efficiently.
In \cref{app:history-state-prep} we construct a polynomial-size quantum circuit that prepares $\ket{\eta}$ from its classical description, i.e., the list of gates in $\mathcal{U}_x$, the unary clock encoding, and the input $x$.
As we will see in the sequel, this family of guiding states is necessarily different than those previously considered in the literature~\cite{gharibian2023dequantizing,cade2023improved}, thus requiring a constructive proof of \cl{BQP} containment in order to establish \clw{BQP}{completeness}.
It is unsurprising that the history state can serve as a guiding state, given that it completely encodes the \cl{BQP} computation.
Equivalently, it is trivial to know the history state of a given \cl{BQP} circuit and thus by fixing a time encoding, we can prove there exists a local Hamiltonian whose ground state is close to this history state.

Having established that the history state is a valid guiding state, proving \clw{BQP}{completeness} follows by concluding valid choices of the parameters $a$ and $b$, appropriately renormalising the Hamiltonian, preparing the guiding state, and performing quantum energy estimation.
In what follows we demonstrate how these steps can be carried out for more specific families of guiding states; since the history state is recovered as a special case, we defer explicit treatment of completeness. 

We now identify which structural properties of the history state are essential for preserving hardness.
Moreover, we can determine which components of the circuit history can be modified or omitted, such as gates that act trivially or are sufficiently close to the identity, without compromising the validity of the reduction.
For example, if a large number of gates in $\mathcal{U}_x$ are close to the identity, then we can consider a \emph{windowed} version of the history state that only includes time steps where non-trivial gates are applied.
On the other hand, if there is a region of sequential near-identity gates, we can consider a windowed history state that only focuses on this region.
Alternate choices of windowing are discussed in \cref{app:alt_idling_reductions}.
In the following subsection we formalise this idea by introducing one possible minimal representation that suffices as a framework for constructing alternative guiding states while retaining the necessary overlap and structural properties required for \clw{BQP}{hardness}.

\subsection{Guiding State Construction Framework}\label{sec:guiding_state_framework}
We now present a general framework for constructing guiding states that retain the necessary overlap and structural properties required for \clw{BQP}{hardness}.
The key idea is to introduce a windowing technique that restricts attention to carefully chosen subspaces of the full history state.
This perspective allows us to identify which components of the circuit history can be modified or omitted, such as gates that act trivially or are sufficiently close to the identity, without compromising the validity of the reduction.
By systematically characterising these permissible simplifications, we obtain a method for constructing alternative guiding states that retain the necessary overlap and structural properties required for \clw{BQP}{hardness}.
We begin from the same setting as in \cref{sec:history_state_hardness}, with $\mathcal{U}_x = U_K \cdots U_1$ a quantum circuit composed of $K = \poly{|x|}$ gates acting on $n + m$ qubits, where $n \coloneqq |x|$ is the number of input qubits and $m \coloneqq \poly{|x|}$ is the number of ancilla qubits.
In this part we formally define the components of the Feynman-Kitaev construction.

\subsubsection{Selecting a Clock Encoding}
Suppose the clock register is encoded via a mapping $\mu : \mathbb{N} \to \B^{p(n)}$, where $p(n)$ is a polynomially-bounded function yielding an integer value.
The mapping $\mu$ defines a chosen decimal-to-binary encoding of the clock register.
Let $\boldsymbol{0}, \boldsymbol{1}, \dots, \boldsymbol{t}, \dots, \boldsymbol{K}$ denote the decimal values of the clock states.\footnote{We use bold notation to denote decimal values of clock states and non-bold notation for indexing time steps, i.e., $\ket{\boldsymbol{1}} + \cdots + \ket{\boldsymbol{K}} = \sum_{t=1}^{K} \ket{\boldsymbol{t}}$.}
We denote each clock qubit as $C_k$ (collectively as $C$), e.g.,
\begin{equation*}
    \ket{\boldsymbol{t}}_C = \bigotimes_{k \in [p(n)]} \ket{\mu_k(\boldsymbol{t})}_{C_k} = \ket{\mu(\boldsymbol{t})}_C,
\end{equation*}
where the binary element $\mu_k(\boldsymbol{t})$ is the $k$-th bit of the binary representation of $\boldsymbol{t}$ under encoding $\mu$.

We define the Hamiltonian $\hat{H}_\mu = \Delta H_{\mu} + V_\mu$ over the space $(\mathbb{C}^2)^{\otimes(n + m + p(n))}$.
Let us denote each register as: the input register $I$ (of size $n$), the ancilla register $A$ (of size $m$), and the clock register $C$ (of size $p(n)$, where $p(n)$ is a polynomially-bounded function).
It follows that
\begin{align*}
    H_{\mu} &\coloneqq H_{\tn{in},\mu} + H_{\tn{prop},\mu} + H_{\tn{clock}, \mu}, &
    V_\mu &\coloneqq H_{\tn{out},\mu}, 
\end{align*}
with each term defined as
\begin{align*}
    H_{\tn{in}} &= \Pi^{(\tn{in})}_{I,A} \ketbra{\boldsymbol{0}}_{C}, 
        & H_{\tn{out}} &= \ketbra{0}_1 \ketbra{\boldsymbol{K}}_{C},\\
    H_{\tn{clock}} &= \Pi^{(\tn{clock})}_{C}, 
        & H_{\tn{prop}} &= \sum_{t=1}^{K} H_{\tn{prop},t},
\end{align*}
where 
\begin{align*}
    \Pi^{(\tn{in})}_{I,A} &= \sum_{j \in [n]} \ketbra{\bar{x}_j}_j + \sum_{j \in [m]} \ketbra{1}_j, \\
    H_{\tn{prop},t} &= \ketbra{\boldsymbol{t}}_{C} + \ketbra{\boldsymbol{t+1}}_{C} \\
        &\qquad- U_{t} \ketbra{\boldsymbol{t}}{\boldsymbol{t-1}}_{C} + U_{t}^\dagger \ketbra{\boldsymbol{t-1}}{\boldsymbol{t}}_{C}.
\end{align*}
We have not yet specified $\Pi^{(\tn{clock})}$, as this will depend on the specific clock encoding $\mu$.
Moreover, it is not necessarily true that any term above is local; this will also depend on the choice of clock encoding $\mu$.
Letting $\chi$ denote the labels $L = \{\tn{in}, \tn{prop}, \tn{clock}, \tn{out}\}$, each term $H_{\chi,\mu}$ has the Hilbert space decomposition $\mathcal{H}^\chi = \mathcal{L}_{-}^\chi \oplus \mathcal{L}_{+}^\chi$, where $\mathcal{L}_{-}^\chi$ is the null space of $H_{\chi,\mu}$ and $\mathcal{L}_{+}^\chi$ is its orthogonal complement.
The common null space of $H_{\mu}$ is given by $\mathcal{L}_{-} = \bigcap_{\chi \in L \setminus \{\tn{out}\}} \mathcal{L}_{-}^\chi$ which is spanned by the history state
\begin{equation*}
    \ket{\eta_\mu} \coloneqq \frac{1}{\sqrt{K+1}} \sum_{t=0}^{K} \ket{\varphi_t}_{I,A} \ket{\boldsymbol{t}}_C.
\end{equation*}
Note that $H_{\mu}$ is frustration-free by construction, i.e., $\mel{\eta_\mu}{H_{\chi,\mu}}{\eta_\mu} = 0$ for all $\chi \in L \setminus \{\tn{out}\}$, and has a non-degenerate ground space $\mathcal{L}_{-}$.
Furthermore, the spectral gap $\gamma$ of $H_{\mu}$, i.e., the difference between the smallest and second smallest eigenvalues, is lower bounded as $\Omega(1/K^3)$.

\subsubsection{Perturbative Analysis}
The energy penalty term $V_\mu$ perturbs the eigensystem of the unperturbed Hamiltonian $H_\mu$ so that its low-energy spectrum reflects the acceptance probability of the circuit $\mathcal{U}_x$.
We analyse this perturbation using first-order perturbation theory and the Schrieffer-Wolff (SW) transformation to control both the low-energy eigenvalues and eigenstates of the perturbed Hamiltonian $\hat{H}_\mu = \Delta H_\mu + V_\mu$.

Let $P$ denote the projector onto the ground space of $H_\mu$.
Recall that $H_\mu$ is frustration-free and has a unique ground state given by the history state $\ket{\eta_\mu}$, hence $P = \ketbra{\eta_\mu}{\eta_\mu}$.
By construction, the expectation value of the penalty term on the history state satisfies
\begin{equation*}
    \mel{\eta_\mu}{V_\mu}{\eta_\mu} = \frac{\Pr[\tn{reject}(\mathcal{U}_x)]}{K+1}.
\end{equation*}
Thus, to first order, the ground-state energy of $\hat{H}_\mu$ is proportional to the rejection probability of the underlying quantum circuit.

We employ the Schrieffer-Wolff transformation~\cite{bravyi2016complexity} to analyse how the perturbation $V_\mu$ affects both the low-energy eigenvalues and eigenstates of $\hat{H}_\mu$.
This transformation block-diagonalises $\hat{H}_\mu$ with respect to the decomposition $\mathcal{H} = \mathcal{L}_{-} \oplus \mathcal{L}_{+}$, where $\mathcal{L}_{-}$ is the low-energy subspace of $H_\mu$.
Concretely, one constructs an anti-Hermitian operator $S$ such that the unitary transformation ${\rm e}^{S}$ approximately eliminates couplings between $\mathcal{L}_{-}$ and its orthogonal complement.
Truncating the series
\begin{equation*}
    \e^{-S} \hat{H}_\mu \e^{S} = \hat{H}_\mu + \comm{\hat{H}_\mu}{S} + \frac{1}{2}\comm{\comm{\hat{H}_\mu}{S}}{S} + \cdots
\end{equation*}
and projecting onto $\mathcal{L}_{-}$ yields an effective Hamiltonian
\begin{equation*}
    H_{\mathrm{eff}} \coloneqq P \bigl({\rm e}^{-S} \hat{H}_\mu {\rm e}^{S}\bigr) P,
\end{equation*}
whose spectrum approximates the low-energy spectrum of $\hat{H}_\mu$ up to inverse-polynomial error, provided the perturbation strength is chosen appropriately (see the proof of \cref{lma:history_state_guiding} in \cref{app:bqp_hardness_framework}).

This analysis allows us to choose the scaling parameter $\Delta$ such that $V_\mu$ does not significantly mix $\mathcal{L}_{-}$ with $\mathcal{L}_{+}$, ensuring that the ground state of $\hat{H}_\mu$ remains close to the history state $\ket{\eta_\mu}$ and preserving the promise gap.
Consequently, \cref{lma:history_state_guiding} extends directly to the perturbed Hamiltonian $\hat{H}_\mu$ for any clock encoding $\mu$.
The qualitative effect of scaling and perturbation on the spectrum is illustrated in \cref{fig:energy_spectrum}.

\begin{figure}[!ht]
    \centering
    \begin{tikzpicture}
        \pic[]{altspectrum};
    \end{tikzpicture}
    \caption{
        Visualisation of spectral gap changes under scaling and perturbations. 
        The vertical axis represents energy levels, while horizontal bars are for reference. 
        The left diagram represents the scaled system $\Delta H$ with gap $\Delta \gamma$. 
        The right diagram illustrates the effect of a perturbation $V$ on $\Delta H$, where the gap may shrink to $\hat{\gamma}$.}
        \label{fig:energy_spectrum}
\end{figure}

\subsubsection{Windowed History States}
We now introduce a windowing technique that restricts attention to carefully chosen subspaces of the full history state.
This perspective allows us to identify which components of the circuit history will allow for the construction of alternative guiding states while retaining the necessary overlap and structural properties required for \clw{BQP}{hardness}.
As we have just seen, the choice of clock encoding $\mu$ can affect several aspects of the construction, including the locality of the Hamiltonian terms and the structure of the history state.
It is clear that non-trivial choices of $\mu$ can lead to non-local Hamiltonian terms, which are not desirable for understanding the complexity of local Hamiltonian problems.
Thus we proceed under the assumption that $\mu$ is chosen such that all terms of $\hat{H}_\mu$ are $k$-local for some constant $k = O(1)$, though we do not place further restrictions on $\mu$, yet.

The following lemma formalises the windowing technique by considering a \emph{pre-idled} \cl{BQP} circuit, where the first $N$ gates are identity operations.
This allows us to decompose the history state into two distinct regions: an initial idle region where the state remains unchanged, and a subsequent active region where the computation evolves:
\begin{multline*}
    \ket{\eta_\mu} = \frac{1}{\sqrt{K + N + 1}} \sum_{t=0}^{N} \ket{x,0^m}\ket{\mu(\boldsymbol{t})}  \\ 
        + \frac{1}{\sqrt{K + N + 1}}\sum_{t=N+1}^{K+N} \ket{\varphi_t}\ket{\mu(\boldsymbol{t})} .
\end{multline*}
For a sufficiently large idle region, we can consider a windowed version of the history state that focuses on the initial idle region.
More specifically, we characterise guiding states via uniform superpositions over subsets of the idle region of the history state.
The key set is defined as
\begin{equation}\label{eq:R-state-subset}
        R_{\sigma, \mu'}^\nu = \{\sigma\} \times \{\mu'(\boldsymbol{t}) : t\in[\nu]\},
\end{equation}
where $\sigma \in \B^{|x| + m}$, $\nu \in \mathbb{N}$ and $\mu' : \mathbb{N} \to \B^{r(|x|)}$, for some polynomially-bounded function $r$.
Guiding states whose computational basis state support overlap sufficiently with this set can be shown to have non-negligible overlap with the ground state of $\hat{H}_\mu$.
In later sections and in the Appendices, we will explore more exotic choices that extend beyond computational basis states~\cite{waite2025guided}.

Though it is clear that we do not necessarily require a one-to-one correspondence between the guiding state and the idle region, as we will see later, this construction provides a useful framework for identifying valid guiding states.
The following lemma formalises this idea.

\begin{restatable}{lemma}{lmageneralguidingstate}\label{lma:general_guiding_state}
    Let $\hat{H}_\mu = \Delta H_\mu + V_\mu$ be a Hamiltonian with a one-dimensional ground space spanned by $\ket{\hat{\phi}_0}$.
    Let the ground state of $H_\mu$ be the history state
    \begin{equation*}
        \ket{\eta_\mu} = \frac{1}{\sqrt{K+1}} \sum_{t=0}^{K} U_{t} \cdots U_{1} \ket{x,0^m}\ket{\mu(\boldsymbol{t})},
    \end{equation*}
    where $m = \poly{|x|}$ and $K = \poly{|x|}$. 
    Assume the first $N$ unitaries are identity gates, i.e., $U_{t} = I$ for $t\in[N]$.

    Suppose
    \begin{equation*}
        \ket{\hat{R}_{\sigma, \mu'}^\nu} \coloneqq \frac{1}{\sqrt{|R_{\sigma, \mu'}^\nu}|} \sum_{x \in R_{\sigma, \mu'}^\nu} \ket{x},
    \end{equation*}
    is a state defined over a set
    \begin{equation}
        R_{\sigma, \mu'}^\nu = \{\sigma\} \times \{\mu'(\boldsymbol{t}) : t\in[\nu]\},
    \end{equation}
    where $\sigma \in \B^{|x| + m}$, $\nu \in \mathbb{N}$ and $\mu' : \mathbb{N} \to \B^{r(|x|)}$, for some polynomially-bounded function $r$ yielding an integer value.
    
    Then, there exists a choice of $R_{\sigma, \mu'}^\nu$ such that
    \begin{equation*}
        F_{\hat{R}_{\sigma, \mu'}^\nu, \g} \geq 1 - \frac{1}{q(|x|)},
    \end{equation*}
    for some polynomial $q$.
\end{restatable}

The proof of \cref{lma:general_guiding_state} is provided in \cref{app:bqp_hardness_framework}.

\subsubsection{Scope of the Framework}
Combining the above results, we have established a general framework for constructing guiding states that retain the necessary overlap and structural properties required for \clw{BQP}{hardness}.
It can be shown that for an arbitrary promise problem in \cl{BQP}, there exists a choice of clock encoding $\mu$, yielding a $k$-local Hamiltonian $\hat{H}_\mu$, and a set $R_{\sigma, \mu'}^\nu$ such that the corresponding guiding state $\ket{\hat{R}_{\sigma, \mu'}^\nu}$ has non-negligible overlap with the ground state of $\hat{H}_\mu$ and choices of $a,b \in [0,1]$ with $b-a \geq 1/\poly{n}$ such that the \sc{$R_{\sigma, \mu'}^\nu$ Guided Local Hamiltonian} problem is \clw{BQP}{hard}.
Moreover, this reduction is Karp as all steps of the construction can be computed in polynomial time.

\begin{restatable}{theorem}{thmRstatehardness}\label{thm:R_state_hardness}
    For any $\delta \in (0,1-1/\poly{n})$, there exists $a,b \in [0,1]$ with $b-a \geq 1/\poly{n}$ such that the \sc{$R_{\sigma, \mu'}^\nu$ Guided Local Hamiltonian} problem is \clw{BQP}{hard}.
\end{restatable}

The proof of \cref{thm:R_state_hardness} is outlined in \cref{app:bqp_hardness_framework}.
Information about the structure and form of potential guiding states is encoded in the set $R_{\sigma, \mu'}^\nu$.
For example:
\begin{inparaenum}[(a)]
    \item states with minimal structure: identifying the tightest state that recovers \clw{BQP}{hardness},\footnote{This is non-obvious as restricting states in certain manners does not necessarily imply that \clw{BQP}{hardness} is preserved.}
    \item gadget states: additions to $R$ that allow the use of perturbative gadgets reductions,
    \item states with interesting attributes: restricted entanglement, no entanglement, or close to a Haar random state, and
    \item physically motivated states: states relating to what we see in quantum computing ansatz and physical considerations on systems.
\end{inparaenum}
However, certain constructions may yield unphysical attributes or fail to retain the required properties for efficient state preparation; for example, the history state expressed in the same manner as semi-classical subset states will likely not be an efficient encoding~\cite{gharibian2023dequantizing}.
An additional attribute we explore further in \cref{sec:fh_model} is the extension of the framework to fermionic systems, specifically the Fermi-Hubbard model.
It remains an open question to expand these results to more general electronic structure Hamiltonians.

\subsection{Reduction to 2-Local Hamiltonians}\label{sec:2-local_reduction}
Recall that the locality of the Hamiltonian constructed in \cref{thm:R_state_hardness} is determined by the encoding $\mu$.
We prove how perturbative gadgets can be used to reduce the locality of $\hat{H}_\mu$ to $2$-local while preserving the overlap properties of the guiding state.
More specifically, we perform polynomial-time reductions that preserve \clw{BQP}{hardness} for the \sc{$R_{\sigma, \mu'}^\nu$ Guided Local Hamiltonian} problem and the non-negligible overlap between the guiding state and the ground state of the resultant Hamiltonian.
Such reductions entail the following criteria: 
\begin{inparaenum}[(i)] 
    \item efficient classical description of the resultant guiding state, and
    \item sufficient overlap between the guiding state and the ground state of the resultant Hamiltonian.
\end{inparaenum}
Note that there is no reason to preserve the family of guiding states under these reductions, only the above two criteria; however, as we will see later, it is possible to preserve the family of guiding states in certain cases.
To prove the second condition we will invoke the following result.

\begin{lemma}[{\cite[Lemma~3]{waite2025guided}}]\label{lma:geometric-lemma}
    Let $\ket{a}, \ket{b}, \ket{c} \in (\mathbb{C}^2)^{\otimes n}$ be normalised states, such that $\norm{\ket{a} - \ket{b}} \leq x$ and $\abs{\braket{b}{c}}^2 \geq y$. Then:
    \begin{equation*}
        \begin{cases}
            (\sqrt{y} - x)^2 \leq \abs{\braket{a}{c}}^2 \leq (\sqrt{y} +x)^2 & \tn{if}~ x \leq \sqrt{y}, \\[0.25cm]
            0 \leq \abs{\braket{a}{c}}^2 \leq (\sqrt{y} + x)^2 & \tn{if}~ x > \sqrt{y}.
            \end{cases}
    \end{equation*}
\end{lemma}

To perform these reductions, we leverage the concept of Hamiltonian simulation~\cite{cubitt2018universal,bravyi2016complexity} and perturbation gadgets~\cite{oliveira2008complexity,kempe2006complexity}.
Consider two Hamiltonians $H_A$ and $H_B$ acting on Hilbert spaces $\mathcal{H}_A$ and $\mathcal{H}_B$, respectively.
An isometric encoding map $\mathcal{E} : \mathcal{H}_A \to \mathcal{H}_B$ and Hamiltonian $H_B$ are said to be an $(\eta,\epsilon)$-simulator for $H_A$ if there exists an isometry $\mathcal{V} : \mathcal{H}_A \to \mathcal{H}_B$ such that: the image of $\mathcal{V}$ is the low-energy subspace of $H_A$, $\norm{H_A - \mathcal{V}^\dagger H_B \mathcal{V}} \leq \epsilon$, and $\norm{\mathcal{E} - \mathcal{V}} \leq \eta$.
Equivalently, the low-energy subspace of $H_B$ approximates $H_A$, up to controlled errors.
To prove our hardness results we always take $\eta, \epsilon = O(1/\poly{n})$.

A consequence of this simulation is that the $j$-th smallest eigenvalues of $H_A$ and $H_B$ are close, satisfying $\norm{\lambda_j(H_B) - \lambda_j(H_A)} \leq \epsilon$, where $\lambda_j(H)$ denotes the $j$-th smallest eigenvalue of $H$~\cite[Lemma 1]{bravyi2016complexity}.
Additionally, it can be shown that the ground states of $H_A$ and $H_B$ are close under the encoding~\cite[Lemma 2]{bravyi2016complexity}.

From a complexity-theoretic perspective, this notion of simulation ensures that any algorithm capable of solving the local Hamiltonian problem for $H_B$ can be used to solve it for $H_A$ with comparable accuracy.
This hardness preservation is formalised in the following proposition.

\begin{proposition}[{\cite[Proposition~3]{waite2025guided}}]\label{prop:simulator-hardness}
    Let $H_B$ be a $(\eta,\epsilon)$-simulator for $H_A$.
    Then $H_B$ is at least as hard as $H_A$.
\end{proposition}

\begin{proof}
    Take an instance of $H_A$ to be defined as $x\coloneqq(H_A, a,b)$ such that $b-a \geq 1/\poly{n}$.
    Since $H_B$ is a $(\eta,\epsilon)$-simulator for $H_A$, we set the parameters $b' = b - \epsilon$ and $a' = a + \epsilon$.
    Setting $\epsilon < (b-a)/2$ ensures that $b'-a' = b-a - 2\epsilon \geq 1/\poly{n}$, meeting the criterion for a valid instance of $H_B$.
    It is not hard to see that in the event of a \sc{yes} case for $H_A$, i.e.
    $\lambda(H_A) \leq a$ then it must be that $\lambda(H_B) \leq a'$.
    Similarly the converse holds for the \sc{no} case.
\end{proof}

To reduce the locality of $\hat{H}_\mu$, to $2$-local, we use repeated applications of the gadget constructions of Ref.~\cite{oliveira2008complexity}.
Specifically, the \textsl{subdivision} gadget is used to reduce the locality of $k$-local Hamiltonians to $3$-local Hamiltonians, and the \textsl{3-to-2-local} gadget is used to further reduce $3$-local Hamiltonians to $2$-local Hamiltonians.
For a $k$-local Hamiltonian we require $O(\log k)$ applications of the subdivision gadget followed by a single application of the \textsl{3-to-2-local} gadget.
The consequence of these reductions is a polynomial blow-up in interaction strengths and the number of qubits.

\begin{theorem}
    Let $\mu$ be a mapping such that $\hat{H}_\mu$ is a $O(1)$-local Hamiltonian with a one-dimensional ground space spanned by $\ket{\hat{\phi}_0}$ and a spectral gap $\hat{\gamma}$ (of the form defined in \cref{thm:R_state_hardness}).
    Suppose there exists a choice of $R_{\sigma, \mu'}^\nu$ such that $F_{\hat{R}_{\sigma, \mu'}^\nu, \hat{\phi}_0} \geq 1 - O(1/\poly{|x|})$.
    Then, there exists a polynomial-time reduction from $\hat{H}_\mu$ to a $2$-local Hamiltonian $\tilde{H}$ with a one-dimensional ground space spanned by $\ket{\tilde{\phi}_0}$ such that the state $\ket{\bar{R}_{\sigma, \mu'}^\nu} = \ket{\hat{R}_{\sigma, \mu'}^\nu}\ket{0^{\ell(|x|)}}$, for some polynomial $\ell$, satisfies
    \begin{equation*}
        F_{\bar{R}_{\sigma, \mu'}^\nu, \tilde{\phi}_0} \geq 1 - \frac{1}{q(|x|)},
    \end{equation*}
    for some polynomial $q$.
\end{theorem}

\begin{proof}
    The proof is a consequence of the gadget techniques of Ref.~\cite{oliveira2008complexity} applied via the Schrieffer-Wolf transformation~\cite{hamiltonianjungle2023}, and the use of \cref{lma:geometric-lemma} to bound the overlap between the guiding state and the ground state of the resultant Hamiltonian.
    The description of the resultant guiding state follows from the Cartesian product of the original guiding state support with the ancillae support; it is easy to see that this description is polynomial in size.
\end{proof}

Notice that even though a series of perturbation gadget reductions have been used, we have only attached a polynomial number of $\ket{0}$ ancillae to the system.
Additionally, due to the style of the reduction, we must end again with a manual renormalisation of the Hamiltonian to ensure $\norm{H} \leq 1$.

\begin{restatable}{corollary}{GadgetRstateCorollary}
    For any $\delta \in (0,1-1/\poly{n})$, there exists $a,b \in [0,1]$ with $b-a \geq 1/\poly{n}$ such that the \sc{$\bar{R}_{\sigma, \mu}^\nu$ Guided Local Hamiltonian} problem is \clw{BQP}{hard} for $2$-local Hamiltonians.
\end{restatable}

More exotic gadget constructions can be considered, such as those that reduce to geometrical structures~\cite{cubitt2016complexity,piddock2017complexity,oliveira2008complexity} or specific interaction types~\cite{piddock2017complexity,schuch2009computational,biamonte2008realizable}.
However, these constructions often introduce additional complexities that may affect the structure of the guiding state and break outside a given class of interest.
Specifically, the inclusion of quantum gates and/or constant-size isometries.
This is a consequence of requiring the attachment of polynomially-many ancillae such as $\ket{+}$ or $\ket{\Psi^-}$ states.
It is easy to see that states of the form $\ket{+}^{\otimes p(n)}$, for some polynomial $p$, are not easily efficiently described via computational basis state support.
It is an open question whether there exists a geometrical reduction that preserves the state type.

\subsubsection{Reduction to Lattice Geometries}\label{sec:lattice_reduction}
A straightforward reduction exists from $5$-Local Hamiltonians on a sparse graphical geometry to $2$-Local Hamiltonians on a square lattice~\cite{oliveira2008complexity}.
This requires a polynomial number of ancillae and extra (\Gate{Swap}) gates that interleave the original gates; \cref{fig:circuit} gives a small example of this idea.
From this geometrical configuration, the subsequent local Hamiltonian will have qubits interacting only in a localised region.
However, for unspecified clock mappings, this reduction may not be valid; those mappings that produce high vertex-degree graphs will not be appropriate.
This is because the spatially sparse geometry required for the defined local Hamiltonian requires very few long-range interactions.

The perturbative gadgets of Ref.~\cite{oliveira2008complexity} are designed to reduce the locality and geometry of the Hamiltonian.
A given circuit is first transformed to a sparse geometry, and then the locality is reduced to $2$-local.
Subsequent gadgets can be applied to reduce the degree of the interaction graph and the number of edge-crossings, ultimately yielding a $2$-local Hamiltonian on a square lattice.
At each step of the reduction, the guiding state can be preserved by attaching $\ket{0}$ ancillae, thereby retaining the overlap properties required for \clw{BQP}{hardness}.
Furthermore, only at most a polynomial number of gadgets are required, ensuring the overall reduction remains efficient.
An alternative approach to geometrical reductions was proposed by~\citet{zhou2021strongly}, however, this relies on the use of a local encoding that can destroy any physical relevance of the guiding state.
We proceed using the more general case of the reduction.

\begin{figure}[!ht]
    \centering
    \begin{tikzpicture}
        \pic[scale=0.9]{circuit};
    \end{tikzpicture}
    \caption{
        A small example of a circuit transformation from a general circuit to a circuit laid out on a sparse geometry. 
        In the latter, each qubit is acted on by at most three gates. 
        The dots represent the rest of the circuit, not shown; the pattern continues for all gates.}
    \label{fig:circuit}  
\end{figure}

\begin{proposition}[\cite{oliveira2008complexity,waite2025complexitya}]\label{prop:sparse_geometry}
    Let $\mathcal{U}_x$ be a sequence of $K = \poly{|x|}$ unitary gates, laid out on a sparse geometry, for instance $x \in \B^*$.
    There is an efficient algorithm that produces a new sequence $\mathcal{Q}_x$ comprising a sequence of $Q = \Theta(K)$ unitary gates and $\Theta(K)$ ancillae qubits such that the circuit is laid out on a sparse geometry.
    The new circuit preserves the acceptance statistics of the original circuit.
\end{proposition}

Following the procedure outlined in Ref.~\cite{oliveira2008complexity}, it is possible to preserve the family of guiding states $\bar{R}_{\sigma, \mu}^\nu$ and yield hardness results for $2$-local Hamiltonians on square lattice geometries; this can be improved to triangular lattices following either of Refs.~\cite{waite2025complexitya,cubitt2016complexity}.

\begin{corollary}
    For any $\delta \in (0,1-1/\poly{n})$, there exists $a,b \in [0,1]$ with $b-a \geq 1/\poly{n}$ such that the \sc{$\bar{R}_{\sigma, \mu}^\nu$ Guided Local Hamiltonian} problem is \clw{BQP}{hard} for $2$-local Hamiltonians on square or triangular lattice geometries.
\end{corollary}

Unfortunately, this reduction will not hold for the physically-motivated guiding states we consider later in the next section.
The reason is due to the form of the $H_{\text{clock}}$ Hamiltonian terms we use.

\section{State Type Variations and Implications}\label{sec:state_type_variations}
We focus on representative choices that yield guiding states with minimal structure, gadget-type constructions, and states possessing physically or computationally meaningful features.
For each case, we verify the conditions of \cref{lma:general_guiding_state}, thereby establishing \clw{BQP}{hardness} of the corresponding \sc{Guided Local Hamiltonian} instances.

We do not claim these constructions are exhaustive.
Alternative clock encodings, local Hamiltonian terms, and choices of $R_{\sigma, \mu'}^\nu$ may give rise to further guiding state families.
For example, Hartree-Fock states, simple product states defined by a product of $X$ gates, are a trivial but naturally occurring family of guiding states that can be shown to yield \clw{BQP}{hardness} via the framework, yielding weaker overlap guarantees~\cite{richter2007two}.
Moreover, exhibiting a single non-trivial guiding state within a family suffices to show the family does not trivialise the problem --- more complex or structured members of the same family will also retain \clw{BQP}{hardness}, albeit potentially with weaker overlap or fidelity guarantees.
Accordingly, we emphasise identifying minimally structured examples that preserve computational hardness.

Although guiding state families can exhibit mutual representations, this does not render the corresponding hardness results redundant.
Our aim is not to classify these families as abstract quantum states, but to show that distinct and physically meaningful \emph{input models} for guiding states each suffice to support \clw{BQP}{hardness}.
Hardness is a property of the problem encoding, including the form of the guiding state description and access model, rather than of state equivalence under unitary transformations.
In particular, our constructions proceed by modifying the clock encoding within the Feynman-Kitaev framework, thereby inducing different classical descriptions and structural constraints on the guiding state, rather than by applying unitary transformations to a fixed instance.
As illustrated in \cref{fig:hardness-growth}, our hardness results identify representative guiding states that serve as minimal origins for broader families whose members inherit the same complexity classification, without requiring equivalence of their classical descriptions.

\begin{figure}
    \centering
    \begin{tikzpicture}
        \pic[]{hardnessgrowth};
    \end{tikzpicture}
    \caption{
        Schematic illustration of representative guiding state families within the complexity landscape of the \sc{Guided Local Hamiltonian} problem.
        Each black point denotes a specific guiding state construction established in this work; here, a fixed-weight state $\ket{\xi_{\rm fw}} \in \mathcal{F}(\hat{X})$ and a matrix product state $\ket{\xi_{\rm MPS}} \in \mathcal{F}({\rm MPS}^*)$ --- for which \clw{BQP}{hardness} (and, where applicable, \clw{BQP}{completeness}) is proven.
        The shaded cones indicate upward regions within each family: more structured or expressive members of the same family inherit the corresponding hardness properties, though quantitative parameters such as overlap or fidelity may degrade.
        Intersections of cones represent families that are equivalent at the level of computational hardness, without implying equivalence of their classical descriptions or preparation procedures.
        The placement of the origin points emphasises that our results identify minimal, representative constructions rather than exhaustive characterisations of the families.
    }
    \label{fig:hardness-growth}
\end{figure}

For comparative purposes, we formally define the two main state families for which \clw{BQP}{hardness} is already known~\cite{gharibian2023dequantizing,cade2023improved,waite2025guided}.

\subparagraph{Semi-Classical Subset States.}
Let $S \subseteq \B^n$ be a subset of binary strings.
The \emph{subset state} over $S$ is defined as
\begin{equation*}
    \ket{\hat{S}} \coloneqq \frac{1}{\sqrt{\abs{S}}}\sum_{x\in S} \ket{x}.
\end{equation*}
A subset state is completely defined by the subset $S$.
A more general state defined by a subset $S$ is one of the form $\ket{S} = \sum_{x\in S} \alpha_x \ket{x}$.
To completely specify this state, it suffices to define the set of pairs $\{(\alpha_x,x)\}_{x\in S}$.
A \emph{semi-classical subset state} (SCSS) is a subset state over a subset $C \subset \B^n$ such that $|C| = \poly{n}$, i.e.,
\begin{equation*}
    \ket{\hat{C}} \coloneqq \frac{1}{\sqrt{\abs{C}}}\sum_{x\in C} \ket{x}.
\end{equation*}
A classically efficient description of a semi-classical subset state is a list of binary strings in $C$.

\subparagraph{Semi-classical Encoded Subset State.}
A simple modification to semi-classical subset states is the \emph{semi-classical encoded subset state} (SCESS).
An SCESS is a subset state over $C \subset \B^n$ such that $|C| = \poly{n}$, where we allow for a set of isometries $\mathcal{V} = \{V_j\}_j$, where for each $j$, $V_j : \mathbb{C}^2 \mapsto (\mathbb{C}^2)^{\otimes m_j}$ such that $m_j = O(1)$.
Thus, the SCESS over $C$ with $\mathcal{V}$ is defined as
\begin{equation*}
    \ket{C_{\mathcal{V}}} \coloneqq \frac{1}{\sqrt{\abs{C}}}\sum_{x\in C} \bigotimes_{j} V_j\ket{x_j}.
\end{equation*}
We say that the computational basis state images lie in $\B^M$, where $M = \sum_j m_j$.
The SCESS is completely defined by the subset $C$ and the set of isometries $\mathcal{V}$.
Furthermore, there is a classically efficient description of an SCESS, which is a list of binary strings in $C$ and a list of isometries in $\mathcal{V}$.
In the appendices we consider further generalisations of SCESSs.

We now introduce several new families of guiding states that fit within the framework established in \cref{sec:bqp_hardness}.
For each family, we demonstrate how the conditions of \cref{lma:general_guiding_state} are satisfied, thereby establishing \clw{BQP}{hardness} for the corresponding \sc{Guided Local Hamiltonian} problem.
Additionally, we provide insights into the physical and computational implications of these state choices.
Rather than concluding individual theorems for each state type, we state a general theorem at the end of the section.

\subsection{Fixed-Weight States}\label{sec:fixed_weight_states}
The first state type variation we consider is the fixed-weight state.
A fixed-weight state is a state that is a superposition of computational basis states with Hamming weight $k\in [n]$.
Let $X_{n,k}$ denote a subset of binary strings of length $n$ with Hamming weight $k$.
We formally denote these states as
\begin{equation*}
    \ket{X_{n,k}} \coloneqq \sum_{x\in X_{n,k}} \alpha_{x}\ket{x}.
\end{equation*}
For a given $k$, there are at most $\binom{n}{k}$ computational basis states in the superposition.
We also define \emph{uniform fixed-weight states}, denoted $\ket{\hat{X}_{n,k}}$.
Note that the coefficients $\alpha_x$ can be zero for certain $x \in X_{n,k}$.
Let $\mathcal{F}(X_{n,k})$ and $\mathcal{F}(\hat{X}_{n,k})$ denote the sets of all fixed-weight states and uniform fixed-weight states, respectively.

Fixed-weight states carry the physical interpretation of superposition states with a fixed number of excitations.
Such objects are natural to consider in the context of quantum chemistry and quantum many-body physics, and importantly, do not rule out any complicated structure, such as restricting the amount of entanglement.
It is known that entangled states~\cite{dur2000three}, certain atomic structures~\cite{weng2025high} and Hamiltonian ground states can be represented as fixed-weight states~\cite{gharibian2012hardness,yu2025samplebased}.
These states also allow for natural quantum generalisations of concepts in parameterised complexity theory~\cite{bremner2025parameterized,bremner2022quantum}.

Using the framework established in \cref{sec:bqp_hardness}, we can prove that the \sc{Fixed Weight Guided Local Hamiltonian} problem is \clw{BQP}{hard} by adapting the clock construction.
In the unary encoding case we have
\begin{equation*}
    H_{\tn{clock}} = \sum_{t=1}^{K-1} \ketbra{0}_{C_t}\ketbra{1}_{C_{t+1}},
\end{equation*}
where $C_t$ is the $t$-th clock qubit.
This penalises incorrect clock states such as $\ket{101}$ in the case where $T=3$.
For fixed-weight clock states, this clock-penalising term is not sufficient.
Since we only want weight-$1$ clock states, we must penalise states that have a Hamming weight greater than $1$.
For clarity, the progression of the clock state from decimal-to-binary is as follows:
\begin{align*}
    \ket{\boldsymbol{0}} = \ket{100\dots 0}, \ket{\boldsymbol{1}} = \ket{010\dots 0}, \dots, \ket{\boldsymbol{K}} = \ket{000\dots 1}.
\end{align*}
The register $C$ is therefore defined over $K+1$ qubits.
We denote this encoding as the \emph{fixed-weight clock encoding} and the corresponding mapping as $w$.

In this scenario, we rely on a long-range interaction, \footnote{Note that the energy of $H_{\tn{clock}}$, parameterised by Hamming weight $w$, is: $E_{\tn{clock}}(w) = (w-1)^2$.}
\begin{equation*}
    H_{\tn{clock}} = 2\sum_{t<t'} \ketbra{1}_{C_t}\ketbra{1}_{C_{t'}} - \sum_{t=1}^{K+1} \ketbra{1}_{C_t} + I.
\end{equation*}
Note that this term is still $2$-local, as it only acts on two clock qubits at a time.

Thus, using the additional following local Hamiltonian terms:
 \begin{align*}
    H_{\tn{in}} = \Pi^{(\tn{in})}_{I,A}\ketbra{1}_{C_1}, 
        &\quad H_{\tn{out}} = \ketbra{0}_1\ketbra{1}_{C_{K+1}}, \\
    H_{\tn{prop}} &= \sum_{t=1}^{K-1} H_{\tn{prop},t},
\end{align*}
where
\begin{align*}
    H_{\tn{prop},t} &= \ketbra{1}_{C_t}\ketbra{0}_{C_{t+1}}  + \ketbra{0}_{C_t}\ketbra{1}_{C_{t+1}}  \\
        &\qquad- U_t \ketbra{0}{1}_{C_{t}}\ketbra{1}{0}_{C_{t+1}} \\
        &\qquad- U_t^\dagger \ketbra{1}{0}_{C_{t}}\ketbra{0}{1}_{C_{t+1}}.
\end{align*}
This enforces the ground space of $H = H_{\tn{in}} + H_{\tn{prop}} + H_{\tn{clock}}$ to be spanned by history states with fixed-weight clock registers.
Moreover, defining $\sigma = x \| 0^m$, we choose a set 
\begin{equation*}
    R^{N}_{\sigma,w} \coloneqq \{\sigma\} \times \{w(\boldsymbol{t}) : t\in[N]\},
\end{equation*}
where $N$ is the number of initial identity gates in the circuit.
A valid guiding state is then
\begin{equation*}
    \ket{\xi_{\rm fw}} = \ket{\hat{R}^{N}_{\sigma,w}} \in \mathcal{F}(\hat{X}_{n,\hw(x)+1}),
\end{equation*}
where $\hw(x)$ is the Hamming weight of $x$.
The classical description of this state is efficient since $N = \poly{n}$.

It is clear from the setup that the weight of the guiding state is not constant.
This is the reason we refer to this problem as the \sc{Fixed Weight Guided Local Hamiltonian} problem and not the \sc{Weight-$k$ Guided Local Hamiltonian} problem since weight-$k$ states were defined with the idea that $k = O(1)$~\cite{bremner2025parameterized}.
As it turns out, the uniform fixed weight states are optimal in the sense of minimal structure required to achieve the desired complexity classifications.
There is not much more structure that can be removed from the state.
We have the uniform amplitude condition, the fixed Hamming weight, polynomially-sized support, and efficient classical description with efficient sampling-access.
This second condition was referred to above as hardening the structure and spread of the computational basis states.
It represents that a small connected sector of the Hilbert space is being used.
Since the Hamming weights are the same, on the Boolean hypercube the states are not nearest-neighbours.

\subsection{Matrix Product States}\label{sec:mps_states}
The second state type variation considered are matrix product states (MPSs).
The general structure for a matrix product state is of the form
\begin{equation*}
    \ket{\Psi} = \sum_{\underline{\sigma}} {\rm Tr}\big[(\prod_{j\in[n]} A^{\underline{\sigma}_j}) \big]\, \ket{\underline{\sigma}}.
\end{equation*}
MPS are completely specified by the set of tensors $\{A_j^{\underline{\sigma}_j}\}$ and the physical qudits $\underline{\sigma}_j$.
It requires a classical space complexity of $\Theta(n\,\chi^2\,\dim{(\underline{\sigma}_j)})$ to specify the state (where $\chi$ is the bond dimension).
This is, of course, efficient provided both the bond dimension and physical dimension are bounded by a polynomial in $n$.

Matrix product states are natural choices for guiding state types in quantum chemistry applications.
These states are ubiquitous in the study of quantum many-body systems and have been used extensively in the context of variational quantum algorithms~\cite{white1992density,perez2007matrix,guo2023differentiable,xu2024mps}.
It has also been demonstrated that MPSs are good candidate states for reaction chemistry simulations due to their straightforward preparability and applicability for certain types of multi-configuration molecules~\cite{morchen2024classification} and Heisenberg spin chains~\cite{jaderberg2025variational}.

Fairly simple structured states admit an MPS description.
Furthermore, it is straightforward to define the set of tensors required to construct an MPS that is a tensor product of basic states.
It is well-known that there is a constant-size bond dimension description of Dicke states (for constant $k$), specifically,
\begin{equation*}
    \ket{\tilde{D}^n_k} = \sum_{\substack{x \in \B^n \\ \hw(x) = k}} \ket{x} = \sum_{\underline{\sigma}} {\rm Tr}\big[(\prod_{j\in[n]} A^{\underline{\sigma}_j}) \,B\big]\, \ket{\underline{\sigma}},
\end{equation*}
where 
\begin{align*}
    A^{\underline{\sigma}_j} &= \begin{cases}
        I_{k+1} & \tn{if}~ \underline{\sigma}_j = 0\\
        J_{k+1}(0) & \tn{if}~ \underline{\sigma}_j = 1
    \end{cases}, \\
    B_{i,j} &= \begin{cases}
        1 & \tn{if}~ (i,j) = (1,k+1) ~\tn{or}~ (i,j) = (k+1,1)\\
        0 & \tn{otherwise}
    \end{cases}.
\end{align*}
The $J_{k+1}(0)$ is the $(k+1)$-dimensional Jordan block with eigenvalue $0$.
A translation-invariant MPS is one where the tensors $A^{\underline{\sigma}_j}$ are the same for all $j$.
The above MPS is an almost-translation-invariant MPS.

Recall that we defined a weight-preserving mapping $w$ used for the fixed weight states.
For this mapping, we want a uniform superposition of a subset of Dicke states for a fixed weight.
This implies we must find a way to truncate the MPS to account for the weight-preserving mapping --- let us index each of the tensors.
Let $\lambda$ represent the cut-off index, then the MPS 
\begin{equation}\label{eq:good-MPS}
    \sum_{\underline{\sigma}} {\rm Tr}\big[(\prod_{j=1}^{\lambda} A^{\underline{\sigma}_j}) \, (\prod_{j=\lambda+1}^{n} \tilde{A}_j^{\underline{\sigma}_j}) \, B\big]\, \ket{\underline{\sigma}},
\end{equation}
where $\tilde{A}_j^{\underline{\sigma}_j}$ are zero if $\underline{\sigma}_j = 1$ for $j > \lambda$, is sufficient to produce the desired superposition.
For the case of weight-$1$ states, the bond dimension is $2$, physical dimension is $2$.
We refer to the family of MPSs of the form \cref{eq:good-MPS} as $\mathcal{F}({\rm MPS}^*)$.
Under an appropriate normalisation, we can construct a valid guiding state $\ket{\xi_{\rm MPS}}$ from $\mathcal{F}({\rm MPS}^*)$.
The classical description of this state is efficient since $\lambda = \poly{n}$.

\subsection{Gaussian States}\label{sec:gaussian_states}
Our third class consists of Gaussian states. 
Gaussian states are physically motivated by their relationship to free-fermion Hamiltonians and the Jordan-Wigner transformation~\cite{jordan1928uber}. 
We provide a formal definition in \cref{app:gaussian} and only a brief overview here.
The ground states of Hamiltonians that admit a free fermion solution via a Jordan-Wigner-like map are precisely \textit{Gaussian states}. 
Such Hamiltonians can be diagonalised by circuits composed of match gates, a special set of unitaries that preserve fermionic bilinearity. 
Since match gates are efficiently simulable~\cite{valiant2008holographic}, these Hamiltonians admit classical solution methods, and Gaussian states can be prepared from basis states by the same circuits.

The physical motivation for considering Gaussian states as a guiding state comes from Ref.~\cite{bravyi2017complexity}, where it was shown that the overlap between the low-energy subspace of the free model, constructed by removing interaction terms, and the ground state of the interacting model is polynomially-small with the number of modes removed.
As we discuss in the sequel, these states are also natural candidates for extensions of the \sc{Guided Local Hamiltonian} problem to fermionic systems, such as electronic structure Hamiltonians~\cite{ogorman2021electronic}, as they can be represented using fermionic operators.
We additionally prove that finding an extremal Gaussian state for a given local Hamiltonian is hard for the class \cl{NP} (see \cref{lma:gauss_calc} in \cref{app:STLHP}), suggesting that obtaining a Gaussian guiding state is a non-trivial task.

We employ the same time encoding as used for the fixed weight states and via an identification that states of the form 
\begin{equation*}
    \ket{\psi}=\frac{1}{\sqrt{N+1}}\sum_{t=0}^{N}\ket{x}_I\ket{0^m}_A\ket{w(\boldsymbol{t})}_C
\end{equation*}
is a Gaussian state in the sense that it is the ground state of the following Hamiltonian:
\begin{multline*}
    H=-\sum_{j\in I}(-1)^{x_j}2Z_{I_j}-\sum_{j\in A}2Z_{A_j}-\sum_{j=N+1}^{L}2Z_{C_j}\\-\sum_{j=0}^{N}(X_{C_j}X_{C_{j+1}}+Y_{C_j}Y_{C_{j+1}}-2Z_{C_j})
\end{multline*}
where the addition in the final sum is modulo $N+1$. 
The frustration graph for this model is a line graph and is therefore simplicial and claw-free. 
The Jordan-Wigner mapping of this Hamiltonian is:
\begin{multline*}
    H=\sum_{j\in I}(-1)^{x_j}\, 2\, a_{I_j}^\dagger a_{I_j}^{\pd}+\sum_{j\in A} 2 \,a_{A_j}^\dagger a_{A_j}^{\pd} + \sum_{j=N+1}^{L} 2\,a_{C_j}^\dagger a_{C_j}^{\pd}\\
        -\sum_{j=0}^{N}\left(a^\dagger_{C_j} a_{C_{j+1}}^{\pd} +a^\dagger_{C_{j+1}}a_{C_j}^{\pd} +2\,a_{C_j}^\dagger a_{C_j}^{\pd}\right),
\end{multline*}
which is clearly bilinear in fermions.

\subsubsection{Fendley States} 
More recently, \citet{fendley2019free} discovered a model which provably admits no solution via a Jordan-Wigner-like transformation, and yet still has a free spectrum. 
This model has since been generalised to a whole class of models~\cite{elman2021free,chapman2023unified}, the ground states of which are precisely \textit{Fendley states}, see \cref{app:fendley} for a more rigorous definition.

We note that the set of Fendley states is a superset of the Gaussian states since these are a more general family of free-fermion states. 
Thus, the physical motivation for using Fendley states as guiding states can be inherited directly from above with the added boon that fewer Hamiltonian terms need be removed to turn a general Hamiltonian into one which may be solved using Fendley's method.

The following result therefore holds in the absence of the condition that the classical description can be used for sample-query access.
This implies that Fendley states are not useful for dequantisation algorithms and instead are more useful for quantum algorithms.

\subsection{Main Result}\label{sec:main_result_a}
Combining the constructions above with \cref{thm:R_state_hardness}, we obtain the following main result.

\begin{theorem}\label{thm:bqp_hardness}
     For any $\delta \in (0,1-1/\poly{n})$, there exists $a,b \in [0,1]$ with $b-a \geq 1/\poly{n}$ such that the \sc{Guided Local Hamiltonian} problem is \clw{BQP}{hard} for $2$-local Hamiltonians using either: 
     \begin{inparaenum}
         \item[{\rm (a)}] fixed-weight states,
         \item[{\rm (b)}] matrix product states,
         \item[{\rm (c)}] Gaussian states, and
         \item[{\rm (d)}] Fendley states.
     \end{inparaenum}
\end{theorem}

We summarise the implications of \cref{thm:bqp_hardness} and discuss potential extensions in the following subsection.

\subsection{Implications and Extensions}\label{sec:implications_extensions}
A central motivation for considering alternative guiding state families is to identify the minimal structural requirements necessary to retain the computational complexity of the \sc{Guided Local Hamiltonian} problem.
In particular, while semi-classical subset states provide a general and flexible framework, they may exhibit more structure than is strictly required to achieve the necessary overlap with the ground state.
Our results demonstrate that fixed-weight states constitute among the simplest guiding states sufficient to establish \clw{BQP}{completeness}.

Moreover, we show that guiding states with uniform amplitudes are optimal in the sense that they maximise overlap with the Feynman-Kitaev history state.
This follows from a straightforward optimisation: given a subset state $\ket{S}=\sum_{x\in S}\alpha_x\ket{x}$, the quantity
\begin{equation*}
    \abs{\sum_{x \in S \cap \mathrm{supp}(\ket{\eta})} \alpha_x}^2
\end{equation*}
is maximised under the normalisation constraint $\sum_{x\in S}|\alpha_x|^2=1$ precisely when the amplitudes are uniform.
A formal proof is provided in \cref{app:uniform}; the intuition for the result stems from the fact that, for large system sizes, the dominant portion of the history state itself is a semi-classical subset state.
Allowing larger support beyond the pre-idling region was previously exploited in proving classical hardness for stoquastic Hamiltonians~\cite{waite2025guided}.

An instructive aspect of our constructions is the relationship between semi-classical subset states and fixed-weight states.
For a subset of instances, one can define a bijective (unitary) correspondence between these guiding states, accompanied by an appropriate transformation of the Hamiltonian.
This correspondence is made explicit in our \clw{BQP}{hardness} proof and yields a polynomial-time reduction between the two formulations.
However, because the required Hamming weight depends on the instance, this does not lead to containment in parameterised classes such as \cl{XP}~\cite{bremner2022quantum}.
Further details concerning \emph{weight-$k$} guiding states are provided in \cref{app:weight_k}.

These observations highlight that the complexity of the problem depends not on abstract state equivalence, but on the encoding and input model.
They further motivate exploring structured regions of the guiding state parameter space, situated between highly unstructured subset states and overly expressive constructions.

A natural extension in this direction is provided by \emph{multi-alphabet subset states} (see \cref{def:multi_alphabet_subset_states}), which generalise subset states by allowing local alphabets beyond the computational basis.
Such states admit succinct descriptions and can be efficiently prepared using constant-depth circuits following computational basis preparation.
They encompass a broad class of guiding states and bear strong resemblance to advanced semi-classical encoded subset states (SCESS).
However, allowing arbitrarily many alphabets risks encoding problem-specific structure directly into the input, thereby undermining physical relevance and interpretability.

The necessity of extending fixed-weight constructions becomes apparent when incorporating perturbative gadget reductions.
Additional ancilla qubits introduced by such gadgets generally violate strict fixed-weight constraints.
To accommodate this, we introduce \emph{weight-$(k\!\to\!q)$ encoded states} (see \cref{def:weight_k_to_q_encoded_states}), which preserve fixed-weight structure under local isometric embeddings.
These encoded states retain the physical interpretation of fixed particle number while enabling more sophisticated reductions.

Relaxing this constraint further, we define \emph{windowed weight states} (see \cref{def:windowed_weight_states}), which allow superpositions over a polynomially bounded range of Hamming weights.
Physically, this corresponds to exploring a controlled range of occupancies, akin to active space methods in quantum chemistry.
Although this relaxation reduces the achievable fidelity, it remains sufficient to recover \clw{BQP}{completeness}, even for geometrically constrained Hamiltonians such as those defined on square lattices.

Finally, while multi-alphabet encodings can be used to absorb additional ancilla structure and preserve hardness, such constructions approach the boundary of meaningful input models.
In particular, hardness arising solely from basis misalignment, rather than intrinsic interaction complexity, may obscure the distinction between quantum and classical regimes.
This observation aligns with conjectures that the \sc{Guided Local Hamiltonian} problem becomes classically tractable for diagonal Hamiltonians when the guiding state is specified in the same basis~\cite{waite2025guided}.

\subsubsection{Relations to Other Hamiltonians}
The no low-energy trivial states (NLTS) theorem~\cite{freedman2014quantum, anshu2023nlts} identifies families of local Hamiltonians for which \emph{every} low-energy state, i.e., those within a constant additive error of the ground energy, must be non-trivial, in the sense that they cannot be prepared by constant-depth circuits acting on product states.
In contrast, our hardness constructions for the \sc{Guided Local Hamiltonian} problem demonstrate a subtler form of complexity: the existence of local Hamiltonians for which \emph{trivial} states (such as product states) achieve only a polynomially-small overlap with the ground state, whereas guiding states with non-trivial structure achieve fidelity inverse-polynomially close to one.

Thus, the Hamiltonians arising in our \clw{BQP}{hardness} reductions are \emph{not} NLTS in the strict sense --- trivial states do provide some non-negligible guidance --- but they nonetheless exhibit a regime where achieving algorithmically meaningful overlap (for polynomially-small precision) requires moving beyond trivial ansatze.
This suggests that rigorous algorithms for constructing guiding states cannot rely solely on trivial guiding states if they aim to succeed with high-quality approximations.

As a final remark, let $\boldsymbol{\rm H}_{\mu,N}^{x}$ denote the set of local Hamiltonians built using the Feynman-Kitaev construction (as outlined above) on input $x$, with mapping $\mu$ and $N$ pre-idling steps.
The state $\ket{R_{\sigma, \mu}^\nu}$ serves as a guiding state for every $H \in \boldsymbol{\rm H}_{\mu,N}^{x}$, without requiring any modification or unitary transformation.
That is, we always use the pair $(\ket{R_{\sigma, \mu}^\nu}, H)$, not $(U\ket{R_{\sigma, \mu}^\nu}, H')$ for some unitary $U$.
This is strictly stronger than the observation that unitarily equivalent Hamiltonians share guiding states up to rotation.
Instead, we identify a single fixed state that guides a large set of Hamiltonians directly, highlighting a deeper structural uniformity across the family.
\cref{cor:pinned_bqp_comp} is an extension of the idea that a fixed guiding state, when composed with a fixed set of qubits, can be used to guide a wider family of Hamiltonians.

\section{Containment of the Guided Local Hamiltonian Problem in BQP}\label{sec:bqp_containment}
The construction of trial states to guide the search for low-energy solutions is a valuable task in ground-state energy estimation problems.
Though widely used in practice, we lack a comprehensive theoretical understanding of which classes of states are most effective for this purpose when presented with an arbitrary local Hamiltonian.
Moreover, deciding which basis to use when developing potential guiding states is itself a computationally hard problem~\cite{barahona1982computational,ogorman2021electronic}.

To alleviate the challenge of finding a suitable guiding state from scratch, the \sc{Guided Local Hamiltonian} problem assumes that such a state is provided as part of the input.
The question then becomes: given a classical description of a guiding state with non-negligible overlap with the ground state, can we efficiently estimate the ground-state energy using a quantum computer?
The problem's utility, when precision and overlap are at least inverse-polynomial, hinges on a quantum algorithm that decides the promise problem efficiently, i.e., a \cl{BQP} algorithm.
Such an algorithm must prepare a trial state $\ket{\psi}$ from the classical description and estimate the ground-state energy within an additive error $|\lambda_0 - \hat{\lambda}| \leq \varepsilon$.
Unlike settings involving interactive proofs (Merlin-Arthur classes), the trial state must be constructed directly from the classical input.

As outlined in \cref{def:guided_local_hamiltonian}, we consider guiding states that are specified via a classical description rather than access to a black-box that prepares the state~\cite{cade2022complexity}.
While \clw{BQP}{hardness} has been shown for semi-classical subset states~\cite{gharibian2023dequantizing} and semi-classical encoded subset states~\cite{gharibian2022improved,cade2023improved,waite2025guided}, \cl{BQP} containment has so far been stated without a formal proof.
An important contribution of this work is to provide such an algorithm constructively.

In this section, we prove that both semi-classical subset states and, with few exceptions, the physically motivated structured variants considered in this work can be prepared efficiently from an appropriate classical description.
Combining this preparation procedure with repeated applications of quantum phase estimation yields a constructive \cl{BQP} algorithm for the \sc{Guided Local Hamiltonian} problem.
The preparation procedures for the various guiding state families are detailed in \cref{app:bqp_containment}.

We address \cl{BQP} containment in two parts.
First, we show that if a guiding state with inverse-polynomial overlap to the ground state is efficiently preparable and the Hamiltonian is efficiently accessible, then quantum phase estimation provides a \cl{BQP} algorithm for estimating the ground-state energy.
Second, we prove that the guiding state families considered here admit such efficient preparation procedures from their classical descriptions.

A subtle but essential point concerns the uniformity of the preparation procedure.
As discussed below, \cl{BQP} containment does not follow automatically from the existence of a succinct classical description of the guiding state, and requires additional uniformity assumptions.

\begin{remark}[Uniformity and Efficient Preparation]\label{rmk:uniformity}
    An instance of the \sc{Guided Local Hamiltonian} problem is of the form
    $\langle H, a, b, C_\xi \rangle$, where $C_\xi$ is a classical description of a guiding state $\ket{\xi}$.
    This description is part of the problem input and is not required to be generated uniformly across system sizes.
    However, to establish containment of the problem in \cl{BQP}, one must exhibit a \emph{uniform} polynomial-time quantum algorithm that correctly decides the problem for all valid instances.
    In particular, given $C_\xi$, the algorithm must be able to efficiently construct a quantum circuit $U_\xi$ that prepares $\ket{\xi}$ from the all-zero state.
    Thus, \cl{BQP} containment requires the existence of a uniform polynomial-time procedure that, on input $C_\xi$, outputs such a preparation circuit.
    In the absence of such a uniform compilation procedure, \cl{BQP} containment cannot be guaranteed.
\end{remark}

To illustrate the issue, consider the following pathological scenario.
For each system size $n$ and each instance $\langle H_n, a_n, b_n, C_{\xi_n} \rangle$, suppose there exists a polynomial-size quantum circuit $U_{\xi_n}$ that prepares the guiding state $\ket{\xi_n}$.
However, assume there is no uniform polynomial-time quantum algorithm that, given $C_{\xi_n}$, constructs $U_{\xi_n}$.
In this case, while each instance admits an efficient preparation circuit, these circuits cannot be generated by a single uniform algorithm across all instances.
If the circuits $U_{\xi_n}$ were instead supplied externally as size-dependent information, the resulting computational model would correspond to non-uniform computation, placing the problem closer to the class \cl{BQP}{/poly}, which permits polynomial-size advice strings depending only on the system size.
Therefore, to ensure \cl{BQP} containment, we must demonstrate a uniform polynomial-time compilation procedure for each guiding state family considered.

\subsection{Quantum Phase Estimation}\label{sec:qpe}
Quantum phase estimation (QPE) is a standard tool for estimating the eigenvalues of a unitary operator; the algorithm can be easily adapted for eigenvalue estimation of a Hermitian operator.
The algorithm outputs an estimate of the eigenvalue, corresponding to an input eigenstate, with high probability and high precision.
The precision achievable with a polynomial number of ancilla qubits is $2^{-\poly{n}}$, which suffices to resolve the ground-state energy to within the promised gap $b-a$.
Moreover, if the input state has overlap at least $\delta$ with the target eigenstate, it suffices to repeat the QPE procedure $\tilde{O}(\delta^{-1})$ times to have a high probability of success.
We formalise this in \cref{lma:qpe} below.

However, this assumes we can efficiently prepare such a state, which may not hold under restricted access models.
In the \sc{Guided Local Hamiltonian} problem, we are only given a classical description of the guiding state, not access to the state itself via a quantum oracle, say.
Thus, assuming for the moment that we can efficiently prepare the guiding state from its classical description, we now outline how QPE can be used to estimate the ground-state energy and prove \cl{BQP} containment.

We define a unitary operator, generated by the Hermitian operator $H$, via 
\begin{equation*}
    U = {\rm e}^{-\i H} = \sum_{j=0}^{2^n-1} {\rm e}^{-\i\lambda_j} \ketbra{\phi_j},
\end{equation*}
where $\{(\lambda_j,\ket{\phi_j})\}_{j=0}^{2^n-1}$ is the eigensystem of $H$.
Given an input eigenstate $\ket{\phi_j}$, the QPE algorithm outputs a bit string $\hat{\lambda}_j$ that encodes an approximation to the eigenvalue $\lambda_j$ of $H$.
If the eigenstate $\ket{\phi_j}$ is unknown or cannot be prepared, it is possible to use an approximate state $\ket{\xi}$ that has a guaranteed lower-bound on the overlap with the target eigenstate. 
Via repetition of the QPE algorithm, with an appropriate number of ancilla qubits, the energy of the approximate state can be estimated to within a desired precision.

\begin{lemma}[QPE~\cite{lin2022lecture}]\label{lma:qpe}
    Consider a $k$-local Hamiltonian $H$ over $n$ qubits with ground-state energy $\lambda_0$ and ground state $\ket{\phi_0}$.
    Let $\ket{\xi}$ be a state such that $F_{\xi,\phi_0} \geq \delta$, for some $\delta \geq 1/\poly{n}$.
    There is a quantum algorithm (quantum phase estimation) that obtains an $\varepsilon$-additive approximation to the ground-state energy $\lambda_0$, with probability at least $1-\eta$, requiring $O(\delta^{-1} \, \log(1/\eta))$ repetitions.
    The total cost of the algorithm is $O((\varepsilon \eta \delta^2)^{-1}\,(\log(1/\eta))^2)$.
\end{lemma}

A proof of this theorem can be found in Ref.~\cite{lin2022lecture}.
Provided the Hamiltonian $H$ is row-sparse and row-computable, the total cost for the QPE routine is polynomial when the parameters $\delta,\eta,\epsilon$ scale as inverse-polynomials~\cite{aharonov2003adiabatic}.
A large initial overlap requires fewer iterations of QPE and hence a decreased cost, while a polynomially-small overlap requires more iterations but with a cost at most polynomial.
This implies a bound from below of $1/\poly{n}$ on the overlap is needed to ensure the QPE algorithm can be applied.
By performing efficient state preparation in the event of a guarantee on a lower-bound on the overlap, the approximate state can be boosted closer to the ground state using techniques from Ref.~\cite{lin2020near}.

\subsection{Efficient State Preparation}\label{sec:efficient_state_preparation}
We show there exists a uniform polynomial-time quantum algorithm that, given a classical description of either a semi-classical subset state (see \cref{sec:state_type_variations}) or a semi-classical encoded subset state (see \cref{sec:state_type_variations}), constructs a quantum circuit that prepares the corresponding state.
We further show that even if the state preparation is imperfect, the overlap with the ground space can remain sufficient to apply QPE and resolve the problem.
In particular, if the prepared state $\ket{\psi}$ satisfies $\|\ket{\psi}-\ket{\xi}\|\le\varepsilon$ for some target state $\ket{\xi}$ and $\varepsilon\ge1/\poly{n}$, then its fidelity with the ground state is within $(1/\poly{n}, 1-1/\poly{n})$. 
This inverse-polynomial lower-bound on the overlap ensures that \cref{lma:qpe} can still be applied to estimate the ground-state energy.

We begin by stating the following lemma regarding semi-classical subset states.

\begin{restatable}{lemma}{EfficientStatePreparation}\label{lma:efficient_state_preparation}
    The state $\ket{\hat{C}}$ can be efficiently prepared from a classical description of the subset $C$.
\end{restatable}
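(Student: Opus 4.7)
The plan is to use a standard three-stage index-and-lookup scheme. Write $C = \{c_1, \ldots, c_m\}$ with $m = |C| = \poly{n}$, given as input as an explicit list of binary strings. On an index register of $\lceil \log_2 m \rceil$ qubits together with an $n$-qubit data register initialised to $\ket{0^n}$, I would: (i) prepare a uniform superposition $\frac{1}{\sqrt{m}}\sum_{j=1}^{m}\ket{j}\ket{0^n}$ on the index register; (ii) coherently write the $j$-th string of $C$ into the data register via a lookup unitary $U_C : \ket{j}\ket{0^n} \mapsto \ket{j}\ket{c_j}$; and (iii) uncompute the index register using the fact that $C$ is a set. The resulting state on the data register is precisely $\ket{\hat{C}}$.

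For step (i), when $m$ is a power of two a layer of Hadamards suffices; otherwise I would use a standard uniform-superposition preparation over $m$ basis states via $O(\log m)$ controlled rotations (e.g., the Grover--Rudolph recipe, or amplitude amplification on the top $m$ Hadamard amplitudes), which runs in time $\poly{\log m} = \poly{n}$. For step (ii), since $C$ is presented as an explicit classical list of size $\poly{n}$, the classical function $j \mapsto c_j$ is computable in polynomial time; hence $U_C$ is implementable as a reversible circuit of size $\poly{n}$, for instance as a cascade of bit-writes controlled on the index register. For step (iii), because the strings in $C$ are distinct, the map $j \mapsto c_j$ is injective, and its inverse $g : c_j \mapsto j$ (defined only on $C$) is efficiently computable by table lookup over the given list. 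Implementing a reversible circuit for $g$ that XORs $g(c_j)$ into the index register effects $\ket{j}\ket{c_j} \mapsto \ket{j \oplus g(c_j)}\ket{c_j} = \ket{0}\ket{c_j}$, disentangling the index register and leaving the data register in the desired state.

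The main obstacle, and the only non-trivial aspect of the argument, is the uncomputation in step (iii): if the registers are not exactly disentangled the data register would be mixed rather than pure. Injectivity of $j \mapsto c_j$ is therefore essential, and is guaranteed because $C$ is given as a \emph{set}. Equally essential is that the inverse lookup be realisable coherently from the classical description alone; this holds because $|C| = \poly{n}$ permits an explicit polynomial-size look-up table that can be compiled into a reversible circuit. The total gate and ancilla count scale as $\poly{n}$, and the full circuit is produced by a classical polynomial-time procedure from the description of $C$, which is exactly the notion of efficient preparation required for the ambient \cl{BQP}-containment argument.
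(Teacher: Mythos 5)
Your proof is correct, but it takes a genuinely different route from the paper. The paper also begins with a Grover--Rudolph preparation of the uniform superposition $\frac{1}{\sqrt{|C|}}\sum_{j}\ket{j}$, but then converts it \emph{in place} into $\ket{\hat{C}}$ by implementing a permutation of computational basis states: the permutation mapping each index $j$ to the corresponding $x\in C$ is decomposed into disjoint cycles, and each cycle is realised by a sequence of multi-controlled $X$ gates and controlled Gray-code rotations acting on the $n$ workspace qubits plus a single ancilla, for a total cost of $O(|C|\,n)$. You instead use an index-and-lookup scheme: keep a separate $\lceil\log_2|C|\rceil$-qubit index register, coherently write $c_j$ into a data register, and then uncompute the index via the inverse lookup $g:c_j\mapsto j$, which exists and is efficiently tabulated precisely because $C$ is a polynomial-size set of distinct strings. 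You correctly identify the only delicate point --- exact disentanglement of the index register, which hinges on injectivity of $j\mapsto c_j$ --- and your uncomputation step $\ket{j}\ket{c_j}\mapsto\ket{j\oplus g(c_j)}\ket{c_j}=\ket{0}\ket{c_j}$ handles it. The trade-off is that your construction is more elementary and self-contained (no cycle-decomposition machinery), at the cost of an extra $O(\log|C|)$-qubit register and multi-controlled lookup gates with $n$ controls whose decomposition gives a somewhat larger, though still polynomial, gate count than the paper's near-optimal $O(|C|\,n)$; for the purposes of the \cl{BQP}-containment argument both bounds are equally sufficient.
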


A full proof of this lemma appears in \cref{app:permutation-grover-rudolph}; we sketch the main ideas here.
Recall that semi-classical subset states are defined as uniform superpositions over a subset of computational basis states and thus have a straightforward structure and can be fully described using a polynomial number of bits. 
Specifically, each state is defined by a polynomial-size subset of binary strings $C \subseteq \B^n$.
When $|C| = \poly{n}$, we refer to $C$ as \emph{sparse}.

To prepare the state in polynomial time, we use the {\tt PermutationGrover-Rudolph} algorithm from Ref.~\cite{ramacciotti2024simple}.
When parameterised by the size of the subset $C$, the algorithm is near-optimal, running in linear time.
At a high level, this approach combines Grover-Rudolph state preparation with a permutation subroutine over computational basis states.
The Grover-Rudolph algorithm prepares a superposition state over a subset $B$ of the same cardinality as $C$, i.e.,
\begin{equation*}
    \ket{\tilde{B}} = \sum_{x\in B} \ket{x} = \sum_{k=0}^{\abs{B}-1} \ket{{\rm bin}(k)},
\end{equation*}
where ${\rm bin}(k) = b_0 b_1 \ldots b_{n-1}$ is the standard binary representation of the natural number $k = \sum_{j=0}^{n-1} b_j 2^j$.
An intuitive way to understand the Grover-Rudolph algorithm is via conditional rotations implied by prefix counting.
Moreover, a series of rotation gates are sequentially applied over $n$ rounds, where each given round is conditional on the previous.
This technique is sufficient to prepare large families of arbitrary quantum states, provided with information about the amplitudes, in exponential time.

The permutation subroutine is then used to map elements of $B$ to $C$.
There is no unique way to approach this, and hence, the primary technical hurdle is proving that this permutation subroutine can be implemented and found efficiently. 
The method used in Ref.~\cite{ramacciotti2024simple} relies on the decomposition of the permutation into (disjoint) cycles, e.g., $\sigma = c_0\,c_1\ldots c_m$, where $c_i$ is a cycle, of length $l_i$, that permutes $l_i$ elements over the set.
Each cycle is implemented via a unitary operator $U_c$ that performs a Gray code rotation over the bit strings in the cycle.
Moreover, $U_c$ is decomposed into a sequence of $l$ unitary rotations $g_j$, where each $g_j$ contains one multi-controlled $X$ gate (controlled by the bit string $x_j$) and a subsequent controlled Gray code rotation $V_j$.
The purpose of $g_j$ is to perform the permutation of $x_j$ to $x_{j+1}$ within the cycle $c$.
The full permutation is then implemented by applying the cycle operators in series (see \cref{app:bqp_containment} for details).
Assuming each gate has a cost of $O(1)$, both the classical and quantum complexities of the algorithm scale as $O(|C|n)$.

In addition to the standard semi-classical subset states, we also consider the preparation of \textit{semi-classical encoded subset states} (SCESS).

\begin{restatable}{lemma}{EfficientEncodedStatePreparation}\label{lma:efficient_encoded_state_preparation}
    The state $\ket{C_{\mathcal{V}}}$ can be efficiently prepared from a classical description of the subset $C$ and the set of isometries $\mathcal{V}$.
\end{restatable}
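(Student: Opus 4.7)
The plan is to reduce the preparation of $\ket{C_{\mathcal{V}}}$ to two composable stages: first build the underlying semi-classical subset state $\ket{\hat{C}}$ on an $n$-qubit register using the preceding lemma, then apply each local encoding isometry $V_j$ in place on its own constant-size block of qubits. Since the isometries act on disjoint supports and each has constant arity, their total cost is linear and the overall procedure inherits the polynomial cost of the SCSS preparation.

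First I would invoke the previous lemma to prepare $\ket{\hat{C}} = |C|^{-1/2}\sum_{x\in C}\ket{x_1 x_2 \cdots x_n}$ on $n$ register qubits in time $O(|C| n)$ via the {\tt PermutationGrover-Rudolph} routine. In parallel, I would allocate $M - n = \sum_j (m_j - 1) = O(n)$ ancilla qubits in $\ket{0}$, partitioned into blocks of size $m_j - 1$ aligned with each register qubit $j$. Second, I would dilate each isometry $V_j : \mathbb{C}^2 \to (\mathbb{C}^2)^{\otimes m_j}$ to a unitary $U_{V_j}$ on $m_j$ qubits by completing the image vectors $V_j\ket{0}, V_j\ket{1}$ to an orthonormal basis of $(\mathbb{C}^2)^{\otimes m_j}$, so that $U_{V_j}(\ket{b}\otimes\ket{0}^{\otimes m_j - 1}) = V_j\ket{b}$ for $b \in \{0,1\}$. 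Applying $\bigotimes_j U_{V_j}$ to $\ket{\hat{C}}$ tensored with the ancilla blocks then produces exactly $\ket{C_{\mathcal{V}}}$ by linearity, since each block transforms as $\ket{x_j}\ket{0}^{\otimes m_j - 1} \mapsto V_j\ket{x_j}$.

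The main (and mild) obstacle is ensuring that each $U_{V_j}$ can be compiled efficiently from the classical description of $V_j$ provided in $\mathcal{V}$. However, because $m_j = O(1)$, each $U_{V_j}$ acts on a constant number of qubits and can therefore be synthesised with $O(1)$ gates via any standard unitary decomposition; moreover, the disjointness of supports lets the $U_{V_j}$'s be applied in parallel. Combining the $O(|C| n)$ cost of preparing $\ket{\hat{C}}$ with the $O(n)$ overhead for the dilated isometries yields an overall complexity of $\poly{n}$, which establishes efficient preparation from the classical description of $(C, \mathcal{V})$ and supplies the $\ket{\psi}$ needed to invoke \cref{thrm:QPE}.
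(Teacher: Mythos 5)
Your proposal is correct and follows essentially the same route as the paper: prepare $\ket{\hat{C}}$ via the preceding lemma, then dilate each constant-arity isometry $V_j$ to a unitary on $m_j = O(1)$ qubits (padding with ancillae in $\ket{0}$), so that each costs $O(1)$ gates and the full layer costs $O(n)$. Your write-up is merely more explicit than the paper's about how the dilation is constructed (completing the orthonormal columns $V_j\ket{0}, V_j\ket{1}$ to a basis), which the paper summarises as embedding the isometry into a unitary.
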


The detailed proof of this lemma is provided in \cref{app:isometries}.
The description of these states comes in two parts: the subset $C$ and the set of isometries $\mathcal{V}$.
Since the set of isometries is restricted to be constant in size, the relevant decomposition can be implemented efficiently.
The total number of gates required to implement each isometry is $O(1)$, and thus the sequence of isometries can be efficiently implemented, requiring $O(n)$ gates. 

\subsection{Main Result}\label{sec:main_result_b}
Combining the state preparation procedures above with \cref{lma:qpe}, we obtain a constructive \cl{BQP} algorithm for the \sc{Guided Local Hamiltonian} problem under suitable guiding state families; we clarify the details below.

Given an instance $\langle H,a,b,C_\xi\rangle$, the algorithm proceeds as follows.
First, the guiding state $\ket{\xi}$ is prepared efficiently from its classical description (\cref{lma:efficient_state_preparation,lma:efficient_encoded_state_preparation}).
Quantum phase estimation (\cref{lma:qpe}) is then applied to $H$ using $\ket{\xi}$ as input, and the resulting energy estimate is compared to the thresholds $a$ and $b$.
This procedure is repeated a polynomial number of times, and the final decision is made by classical post-processing of the measurement outcomes.

For \sc{yes} instances, the ground-state energy satisfies $\lambda_0 \le a$ and the guiding state has overlap at least $\delta \ge 1/\poly{n}$ with the ground state.
Consequently, each execution of QPE yields an $\varepsilon$-accurate estimate of $\lambda_0$ with probability at least $\delta$, and repeating the procedure $O(\delta^{-1}\log(1/\eta))$ times ensures acceptance with probability at least $1-\eta$.

For \sc{no} instances, all eigenvalues satisfy $\lambda \ge b$, so any $\varepsilon$-accurate energy estimate exceeds the acceptance threshold except with probability at most $\eta$.
Thus the algorithm rejects with probability at least $1-\eta$.
Since each execution of QPE has polynomial cost and the total number of repetitions is polynomial, the overall procedure constitutes a uniform polynomial-time quantum algorithm with bounded error, and therefore lies in \cl{BQP}.

By combining \cref{lma:efficient_state_preparation,lma:efficient_encoded_state_preparation} with the fact that all physically motivated semi-classical guiding states defined in \cref{sec:bqp_hardness}, with the exception of Fendley states, admit efficient preparation from their respective classical descriptions (\cref{app:bqp_containment}), we formalise this conclusion as the following theorem.

\begin{theorem}\label{thm:bqp_contain}
    For any $\delta \in (1/\poly{n},1-1/\poly{n})$, there exist $a,b \in [0,1]$ with $b-a \geq 1/\poly{n}$ such that the \sc{Guided Local Hamiltonian} problem is contained in \cl{BQP} using either:
    \begin{inparaenum}
        \item[{\rm (a)}] SCSSs,
        \item[{\rm (b)}] SCESSs,
        \item[{\rm (c)}] fixed-weight states,
        \item[{\rm (d)}] MPSs, or
        \item[{\rm (e)}] Gaussian states.
    \end{inparaenum}
\end{theorem}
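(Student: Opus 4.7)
The plan is to prove each case by combining the efficient state-preparation procedures established in the preceding lemmas with the Quantum Phase Estimation routine from \cref{thrm:QPE}. Concretely, for each state type (a)--(e), I would exhibit a polynomial-time uniform family of quantum circuits that, on input the classical description of the guiding state $\ket{\zeta}$ and the Hamiltonian $H$, outputs an $\varepsilon$-additive estimate $\hat{\lambda}$ of $\lambda_0(H)$ with high probability; the verifier then accepts if $\hat{\lambda} \leq (a+b)/2$ and rejects otherwise. Since $b-a \geq 1/\poly{n}$, it suffices to take $\varepsilon < (b-a)/2 \geq 1/\poly{n}$, which is exactly the regime in which \cref{thrm:QPE} runs in polynomial time provided the input state has $1/\poly{n}$ overlap with the ground state.

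The core step is efficient preparation of a state $\ket{\tilde{\zeta}}$ close to $\ket{\zeta}$ from its classical description. For SCSSs I would invoke \cref{thmt@@EfficientStatePreparation} (the PermutationGrover--Rudolph construction), and for SCESSs invoke \cref{thmt@@EfficientEncodedStatePreparation}, which handles the additional constant-size isometries $\mathcal{V}$ at $O(n)$ total cost. For fixed-weight states I would treat them as a special case of SCSSs whose support lies in $X_{n,k}$, and for MPSs and Gaussian states I would appeal to the preparation procedures developed in \cref{subsec:mps_prep} and \cref{subsec:Gauss_prep}, respectively, which produce the relevant state in $\poly{n}$ gates from the tensor data or covariance/match-gate specification.

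With the preparation in hand, the reduction to QPE is routine. I would argue that if the preparation step produces $\ket{\tilde{\zeta}}$ satisfying $\|\ket{\tilde{\zeta}} - \ket{\zeta}\| \leq \varepsilon'$ with $\varepsilon' = 1/\poly{n}$ sufficiently small, then the triangle inequality applied to $F_{\tilde{\zeta},\g}$ yields $\|\Pi_0 \ket{\tilde{\zeta}}\|^2 \geq \delta - O(\varepsilon') \geq 1/\poly{n}$, so the hypothesis of \cref{thrm:QPE} is met. Setting $\eta = 1/3$ (or any constant) and $\varepsilon < (b-a)/2$, \cref{thrm:QPE} returns $\hat{\lambda}$ within additive error $\varepsilon$ of some eigenvalue of $H$, with the ground-state eigenvalue sampled with probability at least $\delta - O(\varepsilon')$; standard amplification by $O(\log(1/\eta)/\delta)$ repetitions and taking the minimum estimate ensures the ground-state energy is identified with high probability. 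Choosing $a,b$ with $b-a \geq 1/\poly{n}$ and comparing $\hat{\lambda}$ against $(a+b)/2$ then decides the promise.

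The main obstacle is not QPE itself, which is entirely standard, but rather the uniform treatment of preparation across the five state families: one must verify that each classical description supplied as input to the problem (list of strings, list of tensors, covariance matrix, etc.) can indeed be parsed and compiled into a $\poly{n}$-size circuit by a classical Turing machine, and that any preparation error $\varepsilon'$ is small enough not to destroy the $\delta$-overlap promise. I would therefore conclude the proof by remarking that Fendley states are conspicuously absent: although they admit succinct classical descriptions, no efficient preparation routine is currently known in general, which is why \cref{thm:bqp_contain} excludes them.
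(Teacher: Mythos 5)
Your proposal is correct and follows essentially the same route as the paper: efficient state preparation from the classical description for each of the five families (PermutationGrover--Rudolph for SCSSs, the isometry extension for SCESSs, fixed-weight states as a special case of subset states, and the dedicated MPS and match-gate procedures), an argument that approximate preparation preserves an inverse-polynomial ground-state overlap, and then Quantum Phase Estimation with precision below the promise gap. The only cosmetic difference is that the paper bounds the post-preparation fidelity as $(\sqrt{\delta}-\varepsilon')^2$ via its geometric lemma rather than your $\delta - O(\varepsilon')$, which is equivalent for the purpose at hand.
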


Our state preparation results additionally conclude the following corollary~\cite{waite2025guided}.

\begin{corollary}\label{cor:pinned_bqp_comp}
    The \sc{Guided Pinned (Stoquastic) Local Hamiltonian} problem is \clw{BQP}{complete}
\end{corollary}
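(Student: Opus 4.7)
The plan is to establish both containment in \textbf{BQP} and \textbf{BQP}-hardness for the guided pinned (stoquastic) variant. The state preparation results of this section supply the key ingredient that had been missing for \textbf{BQP}-containment, while hardness can be inherited from prior work on the unguided pinned stoquastic problem or from a lightly modified version of the Feynman-Kitaev construction used in \cref{res:main2}.

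For the containment direction, I would observe that the ground state of a pinned (stoquastic) local Hamiltonian admits a semi-classical representation determined by the pinning configuration: the pinned sites fix a classical backbone, and the stoquasticity together with the pinning constrains the remaining amplitudes to a non-negative superposition over a succinctly describable subset of basis states. The guiding state supplied with the instance can therefore be taken as a semi-classical subset state (or its encoded variant) aligned with this pinning. Applying \cref{thmt@@EfficientStatePreparation} (or \cref{thmt@@EfficientEncodedStatePreparation}) to the classical description produces a quantum state with at least inverse-polynomial overlap to the true ground state, and \cref{thrm:QPE} then decides the promise problem to within the required additive precision in polynomial time, placing the problem in \textbf{BQP}.

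For the hardness direction, I would appeal to \cref{res:main2} and verify that the Feynman-Kitaev construction used there can be instantiated with a stoquastic clock encoding (for example, unary or domain-wall) so that the resulting circuit Hamiltonian is simultaneously stoquastic and naturally exhibits the pinned ground-state structure. The history state produced by such a reduction is a semi-classical subset state indexed by the clock register and the computational trajectory of the circuit, and thus serves as the required guiding state for the pinned variant with no additional overhead. Alternatively, one can invoke the unguided \textbf{BQP}-hardness of the pinned stoquastic problem and pad the instance with a trivial guiding state, since the guided problem always reduces from its unguided counterpart.

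The main obstacle is ensuring that stoquasticity is preserved through every stage of the reduction chain depicted in \cref{fig:hardness-summary} --- most critically the perturbative analysis and the gadget reductions to $2$-local --- since generic gadget constructions can introduce non-stoquastic off-diagonal terms. Restricting to gadget families known to preserve stoquasticity (such as those based on ferromagnetic Ising couplings and transverse fields with appropriate sign conventions), or quoting a stoquastic circuit-to-Hamiltonian construction from prior literature, would handle this obstacle. Once stoquasticity is verified along the chain, combining the containment and hardness arguments immediately yields \cref{cor:pinned_bqp_comp}.
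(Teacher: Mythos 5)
Your high-level shape (containment via efficient state preparation plus QPE, hardness via a Feynman--Kitaev-style reduction) matches what the paper intends: the corollary is stated as a direct consequence of the new state-preparation lemmas combined with the \clw{BQP}{hardness} already established in Ref.~\cite{WaiteGLHS25}, which is where the pinned (stoquastic) problem and its guiding states are constructed. The paper does not re-derive hardness at all --- the only missing ingredient it supplies is the constructive \cl{BQP} containment. However, two of your specific justifications do not hold up. First, your containment argument rests on the claim that the ground state of a pinned stoquastic Hamiltonian is ``a non-negative superposition over a succinctly describable subset of basis states.'' Stoquasticity gives non-negativity of the ground-state amplitudes (Perron--Frobenius), but it does \emph{not} give polynomial-size support or any succinct description of the ground state; if it did, the unguided problem would already be easy. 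The correct argument is much simpler and does not reference the ground state's structure at all: the problem's \emph{input} contains a classical description of a semi-classical guiding state with promised overlap $\delta$, \cref{thmt@@EfficientStatePreparation} (or its encoded variant) converts that description into an efficiently preparable state, and \cref{thrm:QPE} then decides the promise. That is the entirety of the containment step.

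Second, your fallback hardness route --- ``invoke the unguided \clw{BQP}{hardness} of the pinned stoquastic problem and pad the instance with a trivial guiding state, since the guided problem always reduces from its unguided counterpart'' --- is not a valid reduction. An instance of the guided problem must come with a guiding state satisfying the overlap promise $\norm{\Pi_0\ket{\zeta}}^2 \geq \delta$; a trivial (e.g.\ product or all-zeros) state will not in general have inverse-polynomial overlap with the ground state of a hard instance, so the promise is violated and the map is not a reduction. Hardness genuinely requires constructing a guiding state with certified overlap, which is exactly what the pre-idled history-state analysis in \cref{lma:part-2} (and Ref.~\cite{WaiteGLHS25}) provides. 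Your concern about preserving stoquasticity through the gadget chain is well placed --- the paper itself flags that completeness for the stoquastic case is not fully resolved by its results --- but the primary fix is to lean on the prior construction rather than to re-verify stoquasticity of generic gadgets.
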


Furthermore, we partially resolve~\cite[Conjecture 1]{waite2025guided}, concluding the duality between \clw{QMA}{complete} Hamiltonian families; completeness for stoquastic Hamiltonians remains open and is not resolved by our results.

\begin{remark}[On the Exclusion of Fendley States]
    The families of states for which we have proven \cl{BQP} containment does not currently include Fendley states. 
    This is because, in general, it has not been proven that Fendley states admit a succinct classical description from which they can be prepared efficiently~\cite{ruh2025furthering}, and thus they do not fit the criteria of~\cref{def:guided_local_hamiltonian}. 
    We note that in the event an efficient classical description or preparation protocol is developed, such an inclusion would also be confirmed. 
    The exclusion of these states for completeness underscores the importance of constructively proving class containment given the problem's confines.
\end{remark}

\section{Classical Tractability within the Goldilocks Zone}\label{sec:classical_tractability}
Establishing quantum advantage for ground-state energy estimation problems often involves comparing quantum algorithms against their classical counterparts.
Moreover, understanding the regimes where classical algorithms can efficiently approximate ground-state energies is crucial for delineating the boundaries of quantum computational supremacy.
Typically quantum advantage is demonstrated when the precision of the estimate is improved from constant to inverse-polynomial, as is the case for the \sc{Guided Local Hamiltonian} problem~\cite{gharibian2023dequantizing}.
Though we remark that this result is a worst-case complexity statement and does not preclude the existence of classical heuristics that perform well on average or for specific instances.
It is an open problem to study the average-case complexity of the \sc{Guided Local Hamiltonian} problem, particularly in physically relevant regimes.

A main result of this work concerns discussions between the \clw{BQP}{complete} and \cl{BPP} result of the \sc{Guided Local Hamiltonian} problem, particularly in the context of the newly defined guiding state types.
To compare quantum and classical regimes, we define a region of guiding states with dual relevance to both settings.
In classical settings, the \sc{Guided Local Hamiltonian} problem is tractable when the guiding state admits efficient sample- and query- (sample-query) access (see \cref{app:sample-query} for formal definitions), without strong parameter constraints.
In contrast, the quantum regime requires efficient preparation and a succinct classical description of the guiding state.
The intersection of these conditions --- the guiding states that are efficiently preparable, succinctly described, and sample-query accessible --- defines a provable quantum advantage comparison zone, which we refer to as the \emph{Goldilocks zone}; illustrated in \cref{fig:venn}.

\begin{figure}[!ht]
    \centering
    \begin{tikzpicture}
        \pic{venn};
    \end{tikzpicture}
    \caption{The overlap between the different state types producing the Goldilocks zone.
    The upper region represents those states that recover the quantum result of this work.
    The lower region represents those states that recover the classical result of Ref.~\cite{gharibian2023dequantizing}.
    States lying in the intersection (dashed) are those that can be used to prove both results, under the right conditions.}
    \label{fig:venn}
\end{figure}

\subsection{The Goldilocks Zone}\label{sec:goldilocks_zone}
Our goal in defining the Goldilocks zone is to establish a regime where the classical and quantum problem variants share comparable input information, allowing a fair assessment of their computational differences.
Since the inclusion $\cl{BQP} \subseteq \cl{BPP}$ is unresolved, unifying the input models is crucial to ensure claims of quantum advantage rest on differences in computational complexity rather than on asymmetric information supplied to the two algorithms.
Encouragingly, most of the guiding state families we introduce lie within this Goldilocks zone, and hence maintain classical tractability under appropriate parameter conditions.
We argue that states lying outside this regime could forfeit classical tractability, weakening the ability to contrast quantum and classical approaches.
\cref{fig:venn} presented a region of states where proper comparisons can be made between the classical and quantum settings.

The \cl{BPP} of Ref.\cite{gharibian2023dequantizing}, which solves the ground-state energy decision problem to constant accuracy, is achieved via a so-called \emph{dequantisation} algorithm that requires the guiding state to have a constant overlap with the ground state.
The input model for this regime only requires efficient classical sample-query access to the guiding state.
Whereas the quantum setting requires a succinct description of the guiding state that allows for efficient state preparation.
We refer to the overlap of these two input models as the \emph{Goldilocks zone}, \cref{fig:goldilocks}, where the guiding state types are sufficiently general to sustain coherent comparisons between the two settings.

\begin{figure}[!ht]
    \centering
    \begin{tikzpicture}
        \pic{goldilocks};
    \end{tikzpicture}
    \caption{The Goldilocks zone --- the outer limit of guiding state that recovers both the classical and quantum results for the \sc{Guided Local Hamiltonian} problem.
    The conjectured relationship between the physically-motivated states and the semi-classical subset states for which we have proven \cl{BQP} completeness.}
    \label{fig:goldilocks}
\end{figure}

The states lying within the Goldilocks zone are those that are `just right' for the \sc{Guided Local Hamiltonian} problem's \cl{BQP} and \cl{BPP} results.

\begin{definition}[Goldilocks State]
    Let $\ket{\Upsilon}$ be a normalised state such that there exists an efficient classical description of the state allowing for: 
    \begin{inparaenum}[(1)]
        \item efficient classical sample-query access of the state and,
        \item the existence of a polynomial-time state preparation procedure.
    \end{inparaenum}
\end{definition}

Let the set of all Goldilocks states be denoted by $\mathcal{F}(\Upsilon)$.
As we will show, the guiding state types: 
semi-classical subset states, 
semi-classical encoded subset states, 
unitarily transformed subset states,
fixed-weight states, 
matrix product states and 
Gaussian states all lie within the family $\mathcal{F}(\Upsilon)$.
Note that general unitarily transformed subset states do not lie within this family from the description of the polynomial size subset $S$ and unitary set $U = \{U_z\}_{z\in S}$ otherwise $\cl{BQP} \subseteq \cl{BPP}$.

This definition omits key properties like ground-state overlap and precision, as the Goldilocks state is a general construct applicable to both settings, not a parameter-dependent hierarchy.
Moreover, the classical description and state preparation for the state are sufficient requirements to prove at least \cl{BQP} containment, and the classical sample-query access is sufficient to prove \cl{BPP} containment, under the appropriate overlap, precision and Hamiltonian conditions, respectively.
Goldilocks states bear a close resemblance to the \emph{classically tractable} states of Refs.~\cite{vandennest2011simulating,schwarz2013simulating} and the \emph{classically evaluatable} states considered in Ref.~\cite{legall2024classical}.
A main difference is that our states are defined in terms of an overlapping region of the parameter space, rather than a subclass of states.
We also make no assumptions about the structure of the classical description, which can be a key difference between other classes of similar states.
Furthermore, this lack of structure can result in a state with variable amplitudes and an exponential number of computational basis states.
It can be shown that our newly defined guiding state types fall within the Goldilocks zone, as they are all efficiently preparable and have a classical description that allows for efficient sample-query access.
See \cref{app:sample-query} for a detailed discussion.
The following subsection summarises the classical tractability results of the guiding state types.

It has been shown that it is possible to prepare guiding states, useful for quantum chemistry, using classical pre-processing algorithms and quantum state preparation methods~\cite{fomichev2024initial,feniou2024sparse,morchen2024classification}.
While these ideas are practically relevant, they are unfortunately not amenable to the problem at hand.
This is due to a failure of quantum complexity theory, specifically in our understanding of the \sc{Local Hamiltonian} problem and the complexity classification of realistic local Hamiltonians.
While it has been shown that the likes of the Fermi-Hubbard model~\cite{schuch2009computational}, Bose-Hubbard model~\cite{childs2014bose} and antiferromagnetic Heisenberg model~\cite{cubitt2018universal} are \clw{QMA}{complete} in general, it is apparent that these classifications are not always in the realm of practicality and physical relevance (see Ref.~\cite{hamiltonianjungle2023} for a summary of the criteria).
Additionally, unless a given state type has potential for its ability to be found and strong fidelity guarantees, proving containment and hardness requires a more complete understanding of algebraic freedoms.
As already mentioned, loss of certain structure can result in a lack of classical tractability, rendering comparisons and claims of ``quantum advantage'' difficult.
Moreover, while the set $R_{\sigma, \mu'}^\nu$ provides a good recipe for constructing guiding states, it does not inform many reasonable structures, beyond the requirements of \clw{BQP}{hardness}.

\subsection{Dequantisation Algorithm}\label{sec:dequantisation_algorithm}
To prove our newly defined guiding state types are sufficient to recover the \cl{BPP} result, we outline the dequantisation algorithm used to solve the \sc{Guided Local Hamiltonian} problem classically.
We begin by informally describing what query- and sample- access to a state means.
Specifically, for a given quantum state $\ket\psi$, sample-access typically refers to the ability to obtain a random sample $\ket{x}$ from the state $\ket{\psi}$ with probability $|\langle{x}|{\psi}\rangle|^2$.
Query-access refers to the ability to obtain the amplitude $\langle{x}|{\psi}\rangle$ for any $x \in \B^n$.
In this work, we do not discuss computational constraints on the storage or encoding of these elements and values.
We assume a sufficient precision and at most polynomial access cost.
Sample-query access refers to having both sample- and query-access to the state as defined above.
\cref{app:sample-query} provides formal definitions of these access models and proves that the guiding state types considered in this work admit efficient sample-query access.

The Quantum Singular Value Transformation (QSVT)~\cite{gilyen2019quantum} provides a framework for performing polynomial transformations to the singular values of a matrix that is embedded in a higher-dimensional Hilbert space unitary operator.
Under mild access conditions and algebraic restrictions, such as element query access and polynomial sparsity, to the matrix in question, the QSVT can be dequantised~\cite{jethwani2020quantum, chia2020sampling}, i.e., there exists a classical algorithm that can approximately perform the same task.
A drawback to the classical dequantisation algorithm is the inability to maintain the same precision or overlap guarantees as the quantum algorithm.
Furthermore, for the present context, the guiding state must allow for efficient classical sample-query access.

To translate \cref{def:guided_local_hamiltonian} into the classical setting, recall that for a function $f$ and a Hamiltonian $H$, spectral decomposition implies
\begin{equation*}
    f(H) = \sum_{j=0}^{2^n-1} f(\lambda_j) \ketbra{\phi_j}.
\end{equation*}
If we assume that $\norm{H} \leq 1$, then the spectrum of $\frac{1}{2}(H+I)$ is bounded between $0$ and $1$.
We can therefore assume without loss of generality that the spectrum of $H$ is contained in the interval $[0,1]$.
The chosen action of the function $f$, on the Hamiltonian $H$, is to filter out high-energy eigenvalues; for the low- and high-energy sectors, we pick $f(x) = 1$ for $x \in [0,a]$ and $f(x) = 0$ for $x \in [b,1]$, where $a < b$.
This yields,
\begin{equation*}
    f(H) \succeq \sum_{j : \lambda_j \in [0,a]} \ketbra{\phi_j}.
\end{equation*}
For an interval $I \subseteq [0,1]$, let $N_I$ denote the number of eigenvalues of $H$ in $I$.
It follows that in general, $\sigma(H) \subset [0,a] \cup (a,b) \cup [b,1]$.
For the \sc{yes} case: $N_{[0,a]} \geq 1$ with $N_{(a,b)} \geq 0$, $N_{[b,1]} \geq 0$, and for the \sc{no} case: $N_{[0,a]}, N_{(a,b)} = 0$ and $N_{[b,1]} = 2^n$.
Take $\ket{\xi}$ to be a valid guiding state with (constant) overlap $\delta$.
When $\lambda_0 \geq b$, the quantity $\norm{f(H)\ket{\xi}}$ is $0$, and when $\lambda_0 \leq a$, we find 
\begin{equation*}
    \norm{f(H)\ket{\xi}} \geq \norm{\sum_{j : \lambda_j \in [0,a]} \braket{\phi_j}{\xi} \ket{\phi_j}} \geq \delta.
\end{equation*}
The classical algorithm must then estimate the quantity $\norm{f(H)\ket{\xi}}$.
To achieve this, sample-query access to the guiding state $\xi$, query access to $H$ and a polynomial approximation to the function $f$ is required.

One important step in the dequantisation algorithm is to construct an efficient classical routine to output coefficients of the vector $p(H) \ket{\xi}$, where $p$ is a polynomial approximation to a filter function~\cite{low2017hamiltonian}.
Assuming sample-query access to the guiding state $\ket{\xi}$ and noting that $k$-local Hamiltonians have a sparsity of $O(n^k)$, the cost of this procedure can be shown to scale as $\tilde{O}(n^{kd})$, where $d$ is the degree of the polynomial $p$ and $\tilde{O}$ hides logarithmic factors~\cite{gharibian2023dequantizing}.
Furthermore, the required degree of the polynomial $p$ roughly scales as the inverse of the precision multiplied by the logarithm of the overlap, i.e., $O(n^{(k\log(1/\delta))/(b-a)})$.
For a polynomially-small overlap but constant precision, the dequantisation algorithm runs in quasi-polynomial time and for a constant overlap but polynomially-small precision, the dequantisation algorithm runs in exponential time.
It then follows that the dequantisation algorithm runs in polynomial time for constant overlap and precision, which is sufficient to prove \cl{BPP} containment of the \sc{Guided Local Hamiltonian} problem.

We can also conclude the classical tractability of the \sc{Guided Local Hamiltonian} problem for the state types considered in this work, under the appropriate conditions.

\begin{corollary}
    For any constants $a,b \in [0,1]$ such that $a < b$ and any constant $\delta \in (0,1]$, the \sc{Guided Local Hamiltonian} problem can be efficiently solved classically with probability at least $1-1/\exp(n)$ for the following state types: 
    \begin{inparaenum}
        \item[{\rm (a)}] SCSSs,
        \item[{\rm (b)}] SCESSs,
        \item[{\rm (c)}] fixed-weight states,
        \item[{\rm (d)}] MPSs, and
        \item[{\rm (e)}] Gaussian states,
    \end{inparaenum}
    for constant local Hamiltonians.
\end{corollary}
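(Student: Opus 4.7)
The plan is to reduce the corollary directly to the Gharibian--Le Gall dequantisation algorithm sketched in the preceding paragraphs. Under the stated constant regime $a, b, \delta = \Theta(1)$ with locality $k = O(1)$, the required polynomial-approximation degree becomes $d = O(\log(1/\delta)/(b-a)) = O(1)$, so the overall cost $\tilde{O}(n^{kd})$ is polynomial in $n$, and the $1 - 1/\exp(n)$ success probability follows from standard amplification (independent repetition and majority vote) of the norm-estimation step, each repetition remaining polynomial. The proof therefore reduces to verifying that each of the five guiding state families admits efficient classical sample-query access from its succinct description, since query access to the entries of a $k$-local Hamiltonian of norm at most one is immediate from its local-term list.

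First I would dispatch the subset-state cases. For (a) SCSSs this is immediate: the description is a list $C \subseteq \B^n$ with $|C| = \poly{n}$, so sampling returns a uniformly random element of $C$ in $O(n)$ time and amplitude queries $\braket{x}{\hat{C}}$ reduce to a membership lookup returning $1/\sqrt{|C|}$ or $0$. For (b) SCESSs the same reasoning extends using the constant-arity isometries $V_j$: one first draws $x \in C$ uniformly and then, independently for each block $j$, samples a computational-basis outcome of $V_j \ket{x_j}$, while amplitude queries factor into a constant-size product of inner products $\bra{y_j} V_j \ket{x_j}$; the full argument is deferred to \cref{app:sample-query}. For (c) fixed-weight states, sample-query access holds both in the sparse amplitude-list case (which reduces to (a)) and for the uniform state $\ket{\hat{X}_{n,k}}$ over all $\binom{n}{k}$ strings, where sampling is a uniform random $k$-subset of indices and amplitude queries reduce to a Hamming-weight check.

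For (d) MPSs I would invoke the standard sequential-marginal algorithm: given the tensors $\{A^{\sigma_j}\}$ with bond dimension $\chi = \poly{n}$, both sampling from $|\braket{x}{\Psi}|^2$ and computing individual amplitudes $\braket{x}{\Psi}$ reduce to contracting the one-dimensional tensor network at cost $O(n\chi^3)$. For (e) Gaussian states, the matchgate machinery of Valiant and Terhal--DiVincenzo yields $\poly{n}$-time computation of marginal bitstring probabilities as Pfaffians of principal submatrices of the covariance matrix, giving the required sample and query oracles. The main obstacle is not any single family but verifying that these oracles interface correctly with the inner loop of the QSVT dequantisation, in particular that the repeated amplitude queries invoked when estimating $\lVert p(H)\ket{\xi} \rVert$ for the filter polynomial $p$ compose cleanly with Chernoff-type concentration to deliver the claimed $1-1/\exp(n)$ probability. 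This bookkeeping is routine once the oracles are in place, and it relies crucially on the constancy of $a, b, \delta$ and $k$ to keep $d$, and hence the polynomial degree of $p$, constant.
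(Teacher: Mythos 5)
Your proposal is correct and follows essentially the same route as the paper: the corollary is obtained by running the dequantised QSVT algorithm of Ref.~\cite{GLG22}, observing that constant $a$, $b$, $\delta$ and $k$ keep the filter-polynomial degree (and hence the $\tilde{O}(n^{kd})$ cost) polynomial, and then verifying per-family classical sample-query access, which the paper defers to \cref{app:sample-query} for the subset-state families and to standard MPS-contraction and Pfaffian/covariance-matrix arguments for MPSs and Gaussian states.
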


\section{Complexity of the Fermi-Hubbard Model with a Guiding State}\label{sec:fh_model}
The Fermi-Hubbard model is fundamental in condensed matter physics, capturing essential features of strongly correlated electron systems~\cite{hubbard1963electron}.
The spin-$\frac{1}{2}$ Fermi-Hubbard model $\tilde{H} \coloneqq T + V$, typically defined on a lattice, describes electrons hopping between lattice sites with on-site interactions:
\begin{equation*}
    T \coloneqq -t \sum_{\substack{\langle i, j \rangle \\ \sigma \in \{\uparrow, \downarrow\}}} \left( a_{i, \sigma}^\dagger a_{j, \sigma} + a_{j, \sigma}^\dagger a_{i, \sigma} \right)
\end{equation*}
is the kinetic term representing electron hopping between nearest-neighbour sites $\langle i, j \rangle$ with hopping amplitude $t > 0$, and
\begin{equation*}
    V \coloneqq U \sum_{i} n_{i, \uparrow} n_{i, \downarrow}
\end{equation*}
is the potential term representing on-site Coulomb repulsion with interaction strength $U > 0$.
See \cref{app:gaussian} for details on fermionic operators.

At \emph{half-filling}, the number of electrons equals the number of lattice sites.
In the \emph{Mott insulating phase} the interaction strength $U$ dominates the hopping amplitude $t$, leading to a regime where electron mobility is suppressed due to strong repulsive interactions.
We focus on the case where the Fermi-Hubbard Hamiltonian describes a spin-$\frac{1}{2}$ fermionic system at half-filling in the Mott insulating phase.
Let $\mu_0$ denote the ground-state energy of the Fermi-Hubbard Hamiltonian $\tilde{H}$, and let $\ket{\psi_0}$ denote its corresponding ground state.

We now introduce the problem statement for the Fermi-Hubbard model.

\begin{definition}[The \sc{Guided Fermi-Hubbard Hamiltonian} Problem]\label{def:guided-fermi-hubbard}
    Given a Fermi-Hubbard Hamiltonian $\tilde{H}$ acting on $\eta$ spin-$\frac{1}{2}$ fermions such that $\|\tilde{H}\| \leq 1$, parameters $a,b \in [0,1]$ such that $b-a \geq 1/\poly{\eta}$ and a description of a semi-classical encoded fermionic subset state $\ket{\zeta_{\rm f}}$ with the promise that $\abs{\braket{\zeta_{\rm f}}{\psi_0}}^2 \geq \delta$ for some $0 <\delta < 1$, decide whether $\mu_0 \leq a$ or $\mu_0 \geq b$, promised one is true.
\end{definition}

See \cref{sec:fermionic_guiding_states} for the definition of a semi-classical encoded fermionic subset state.
Note that the definition is analogous to the semi-classical encoded subset states used in prior work on guided spin Hamiltonians~\cite{cade2023improved}.

We begin by outlining the proof strategy for the \clw{BQP}{hardness} of the \sc{Guided Fermi-Hubbard Hamiltonian} problem in the absence of local fields.
The case of local fields is treated later as the proof follows an analogous approach to the one without local fields, but requires a preliminary discussion about the spin result we use.
To establish the results, we leverage the duality mapping between the antiferromagnetic Heisenberg model and the half-filled Fermi-Hubbard model~\cite{auerbach1994interacting, hubbard1963electron}.

In the spin setting, Ref.~\cite{cade2023improved} demonstrated that the \sc{Guided Antiferromagnetic Heisenberg Hamiltonian} problem is \clw{BQP}{complete} with a semi-classical encoded subset state as the guiding state, even when restricted to $2$D square lattices.
As we include both square and triangular lattices, we  improve this result to triangular lattices using the arguments of Ref.~\cite{cubitt2018universal}.\footnote{The proof for the extension to triangular lattices requires a (generalised subdivision) perturbation gadget that tessellates the triangular lattice.}
Note that we do not present a proof of the improved results to triangular lattices here, as our focus is on extending to fermionic Hamiltonians.
We state the following strengthened version of the main result from Ref.~\cite{cade2023improved}.

\begin{theorem}\label{thm:guided-heisenberg}
    The \sc{Guided Antiferromagnetic Heisenberg Hamiltonian} problem is \clw{BQP}{complete} 
    for any $\delta \in (1/\poly{n}, 1 - 1/\poly{n})$, 
    given a semi-classical encoded subset guiding state, 
    even when restricted to $2$D square or triangular lattices.
\end{theorem}

In the setting with local fields, we use an alternative result; for the specific details of the local field case, see \cref{sec:fermi_hubbard_local_fields}.

\subsection{Heisenberg-Fermi-Hubbard Duality}\label{sec:duality_mapping}
We now review the duality mapping between the antiferromagnetic Heisenberg model and the half-filled Fermi-Hubbard model~\cite{auerbach1994interacting, hubbard1963electron}.
This mapping uses a well-known result of \citet{liu2007quantum} showing that fermionic occupation numbers at a site can encode a logical qubit.
More specific mathematical details can be found in \cref{app:fermi_hubbard_complexity}.

We summarise the implications of the duality mapping as follows.

\begin{lemma}\label{lem:duality-mapping}
    There exists an efficiently computable reduction between an instance of the Fermi-Hubbard Hamiltonian $\tilde{H}$ to an instance of the antiferromagnetic Heisenberg Hamiltonian $H$ such that: 
    the ground-state energies $\mu_0$ and $\lambda_0$ and ground states $\ket{\psi_0}$ and $\ket{\phi_0}$ of $\tilde{H}$ and $H$ respectively, are inverse-polynomially close.
\end{lemma}

\begin{remark}
    The geometry of the local interactions is preserved under the reduction in \cref{lem:duality-mapping}.
    Therefore, any lattice structure present in the Fermi-Hubbard model is also present in the corresponding Heisenberg model.
\end{remark}

We sketch the proof of \cref{lem:duality-mapping}; see \cref{app:fermi_hubbard_complexity} for full details.
To begin, we perform a Jordan-Wigner transformation to express the Fermi-Hubbard Hamiltonian in terms of local spin operators acting on $2\eta$ qubits.
This requires an ordering of the fermionic modes --- we choose a one-dimensional ordering.
Specifically, for $n$ lattice sites, each supports two fermionic modes corresponding to the spin-$\frac{1}{2}$ degrees of freedom; let the tuple $(j, \sigma)$ denote the mode corresponding to spin $\sigma$ at site $j$.
Our ordering is given by the lexicographic ordering of the tuples, i.e., $(1, \uparrow) < (1, \downarrow) < (2, \uparrow) < (2, \downarrow) < \cdots < (n, \uparrow) < (n, \downarrow)$.
It follows that the fermionic ladder operators can be expressed as products of Pauli operators.
Implicit in this transformation is the introduction of non-local string operators to account for the fermionic anticommutation relations.

We then define a logical encoding between qubits and occupation states of the fermionic modes.
The reason for this stems from the particular regime we have chosen to work in, i.e., the Mott insulating phase at half-filling, which allows us to restrict our attention to when there exists exactly one fermion per site.
Specifically, we encode the states $\ket{10}$ and $\ket{01}$ as $\ket{\boldsymbol{0}}$ and $\ket{\boldsymbol{1}}$ respectively, where the first and second entries of the occupation state correspond to the spin-$\uparrow$ and spin-$\downarrow$ modes at a given site.\footnote{We use boldface to distinguish the logical states from the physical states.}
The global encoding, denoted as $\mathcal{J}$, is given by the tensor product of local encodings $J_j$ at each site.
It then suffices to use second-order perturbation theory to prove the appropriate simulation of the Heisenberg Hamiltonian by the Fermi-Hubbard Hamiltonian within the logical subspace defined by the encoding.
We define an effective Hamiltonian acting on the low-energy subspace of the Fermi-Hubbard potential term $V$:
\begin{equation*}
    \tilde{H}_{\tn{eff}} = - T_{-+}\ V_{++}^{-1}\ T_{+-} + O\Big(\frac{t^3}{U^2}\Big),
\end{equation*}
where $\Pi_-$ and $\Pi_+$ are the projectors onto the low-energy and high-energy subspaces of $V$ respectively and $A_{\pm\pm} = \Pi_\pm A \Pi_\pm$ for any operator $A$.
A straightforward calculation shows that this effective Hamiltonian is proportional to the antiferromagnetic Heisenberg Hamiltonian $H$ up to an additive constant.
Then, by setting the parameters $t$ and $U$ appropriately~\cite{ogorman2021electronic}, i.e., scaling at most polynomially in $\eta$, we can tune the simulation error to be at most inverse-polynomial in $\eta$.
That is, the ground-state energies $\mu_0$ and $\lambda_0$ and ground states $\ket{\psi_0}$ and $\ket{\phi_0}$ of $\tilde{H}$ and $H$ respectively, are inverse-polynomially close.
Manual renormalisation of the Hamiltonians ensures that $\|\tilde{H}\|, \|H\| \leq 1$.

A subsequent use of \cref{prop:simulator-hardness} allows us to compare the target and simulator Hamiltonians in an appropriate complexity-theoretic sense.
Moreover, we can safely alter the parameters of an instance $x = \langle H, a, b \rangle$ to generate a valid instance for $\tilde{H}$; it suffices to $a' = a + \epsilon$ and $b' = b - \epsilon$ for some $\epsilon < (b-a)/2$.

\subsection{Fermionic Guiding States}\label{sec:fermionic_guiding_states}
To complete the \clw{BQP}{hardness} reduction we must prove that there exists a suitable fermionic guiding state for the Fermi-Hubbard Hamiltonian $\tilde{H}$ and that the overlap with the ground state $\ket{\psi_0}$ is sufficient.
Taking $\ket{\xi_{\rm s}}$ to be the semi-classical encoded subset guiding state for the Heisenberg Hamiltonian instance (\cref{thm:guided-heisenberg}), we define the corresponding fermionic guiding state $\ket{\zeta_{\rm f}}$ via the duality mapping:
\begin{equation*}
    \ket{\zeta_{\rm f}} \coloneqq \mathcal{J}\!\ket{\xi_{\rm s}} = \left(\bigotimes_{k} J_k \right) \ket{\xi_{\rm s}}.
\end{equation*}
It follows that $\ket{\zeta_{\rm f}}$ is a \emph{semi-classical encoded fermionic subset state} described by the same data as $\ket{\xi_{\rm s}}$ appended with the local encodings $J_j$ at each site.
In particular, if $\ket{\xi_{\rm s}}$ is defined over the subset $S \subset \B^n$ with local isometries $\{V_j\}_{j\in[n]}$, then $\ket{\zeta_{\rm f}}$ is defined over the same subset $S$ with local isometries $\{W_j\}_{j\in[n]}$, where $W_j = (J_k)^{\otimes m_j} \circ V_j$ for each $j$.
It follows that $W_j : \mathbb{C}^2 \to (\mathbb{C}^2)^{\ell_j}$ where $\ell_j = 2 m_j$, i.e., $V_j$ takes $\ket{\boldsymbol{x}_j} \mapsto \sum_{y} \bigotimes_{i=1}^{m_j} \ket{\boldsymbol{y}_i}$ and the set of $J$'s then maps $\ket{\boldsymbol{y}_i} \mapsto \ket{y_i^\uparrow y_i^\downarrow}$, where $y_i^\uparrow$ and $y_i^\downarrow$ are the occupation numbers of the spin-$\uparrow$ and spin-$\downarrow$ modes at site $i$ respectively.
For example, $W\ket{\boldsymbol{0}} = \frac{1}{\sqrt{2}} (\ket{10,01} - \ket{01,10})$ is sufficient for our purposes (cf. \cref{app:hubbard_heisenberg}).

Following \cref{sec:2-local_reduction} and Ref.~\cite{bravyi2016complexity}, we relate the ground state $\ket{\psi_0}$ of $\tilde{H}$ to the ground state $\ket{\phi_0}$ of $H$, that is,
\begin{equation*}
    \norm{\ket{\psi_0} - \mathcal{J}\!\ket{\phi_0}} = O(1/\poly{n}).
\end{equation*}
Using \cref{lma:geometric-lemma} we can bound the overlap between the fermionic guiding state $\ket{\zeta_{\rm f}}$ and the Fermi-Hubbard ground state $\ket{\psi_0}$ as follows:
\begin{equation*}
    \abs{\braket{\zeta_{\rm f}}{\psi_0}}^2 \ge \kappa,
\end{equation*}
for some $\kappa \in (1/\poly{n}, 1 - 1/\poly{n})$.
Polynomial bounds in $n$ can be converted to polynomial bounds in $\eta$ since $\eta \propto n$.
More specific details can be found in \cref{app:fermi_hubbard_complexity}.

\subsection{Fermi-Hubbard with Local Fields}\label{sec:fermi_hubbard_local_fields}
\citet{schuch2009computational} proved a \clw{QMA}{completeness} result for the Fermi-Hubbard model on a $2$D square lattice with local magnetic fields.
The result was obtained via a reduction chain, comprised of four unique perturbative gadget, from general $2$-local Pauli Hamiltonians on a square lattice to Heisenberg Hamiltonians with local magnetic fields on a square lattice (see Ref.~\cite{hamiltonianjungle2023} for a review of perturbation gadgets).
A subsequent use of the duality map outlined in \cref{sec:duality_mapping} then yields a Fermi-Hubbard Hamiltonian with local magnetic fields.
The literature so far has not explored the \sc{Guided Local Hamiltonian} problem following this chain of reductions, in favour of results on similar models without local fields~\cite{cubitt2018universal, piddock2017complexity, cubitt2016complexity}.
For the sake of completeness, we now outline the necessary arguments to extend the \clw{BQP}{hardness} result for the Fermi-Hubbard model to the case where local magnetic fields are present.

The chain of gadgets from Ref.~\cite{schuch2009computational} apply in the same standard manner as other perturbative gadget constructions~\cite{oliveira2008complexity}, therefore, the results of previous sections suggest we need only verify the form of the guiding state throughout the reduction.
Recall that a semi-classical subset state (or trivial SCESS) is a valid guiding state for the initial $2$-local Pauli Hamiltonian instance~\cite{cade2023improved}.
The square lattice embedding of Ref.~\cite{oliveira2008complexity} can be straightforwardly adapted to a triangular lattice embedding without significant changes to the gadget constructions.
It then suffices we can tessellate the triangular lattice with the gadget chains of Ref.~\cite{schuch2009computational}; \cref{fig:triangular-lattice-gadget} shows a visualisation of such a tessellation.
Since each gadget in the chain requires the addition of ancilla qubits in states of the form $\ket{\theta} = \frac{1}{\sqrt{2}}(\ket{0} + e^{i\theta}\ket{1})$, which are single-qubit states, we can conclude that the final Heisenberg Hamiltonian with local magnetic fields admits a SCESS guiding state.
For example, the local isometries $V(\theta)$ can be defined to act as 
\begin{equation*}
    \bigotimes_{j \in [l]} \ket{\theta_j} = \bigotimes_{j \in [l]} V(\theta_j) \ket{0},
\end{equation*}
where $l$ is the number of ancilla qubits.
Thus, the resulting SCESS is defined via $I^n \otimes V(\theta_1) \otimes \cdots \otimes V(\theta_l)$ acting on the initial semi-classical subset guiding state padded with $\ket{0^l}$ for the ancilla qubits.

Finally, we apply the duality mapping of \cref{lem:duality-mapping} to obtain a Fermi-Hubbard Hamiltonian with local magnetic fields.

\begin{figure}[!ht]
    \centering
    \begin{tikzpicture}
        \pic[scale=0.6]{triangle-gadgets};
    \end{tikzpicture}
    \caption{
        The gadget chain from Ref.~\cite{schuch2009computational} adapted to the triangular lattice. 
        For brevity, we only show one gadget chain cell (the large connected square and triangle) and its interior local external field contributions (gray nodes). 
        Each coloured nodes on the cell perimeter represents one of the four mediator qubits. 
        The black nodes represent the qubits of the original Hamiltonian.
    }
    \label{fig:triangular-lattice-gadget}
\end{figure}

\subsection{Main Results}\label{sec:main_result_c}
It is clear from the above arguments that our reductions and constructions are computable in polynomial time and that both results (with and without local magnetic fields) are shown to hold for $2$D square and triangular lattices.
An intermediate outcome of our results is a slight strengthening of \clw{BQP}{hardness} results presented in prior work~\cite{cade2023improved}.
It follows from \cref{sec:bqp_containment} that the \sc{Guided Fermi-Hubbard Hamiltonian} problem is also in \cl{BQP} via standard techniques.
Thus, combing the results of \cref{sec:duality_mapping,sec:fermionic_guiding_states} with \cref{thm:guided-heisenberg} yields the following result for the Fermi-Hubbard model without local magnetic fields.

\begin{theorem}\label{thm:guided-fermi-hubbard-no-fields}
    The \sc{Guided Fermi-Hubbard Hamiltonian} problem is \clw{BQP}{complete} 
    for any $\delta \in (1/\poly{\eta}, 1 - 1/\poly{\eta})$, 
    given a semi-classical encoded fermionic subset guiding state, 
    even when restricted to $2$D square or triangular lattices.
\end{theorem}

The second result extends \clw{BQP}{completeness} to the case where external fields are present and follows from the arguments outlined in \cref{sec:fermi_hubbard_local_fields}.

\begin{theorem}\label{thm:guided-fermi-hubbard-local-fields}
    The \sc{Guided Fermi-Hubbard Hamiltonian} problem is \clw{BQP}{complete} 
    for any $\delta \in (1/\poly{\eta}, 1 - 1/\poly{\eta})$, 
    given a semi-classical encoded fermionic subset guiding state, 
    even when restricted to $2$D square or triangular lattices with local magnetic fields.
\end{theorem}

Both \cref{thm:guided-fermi-hubbard-no-fields} and \cref{thm:guided-fermi-hubbard-local-fields} close an important open question raised in Ref.~\cite{cade2023improved} regarding the computational complexity of fermionic Hamiltonians in the guided setting.

\begin{remark}[Fermionic Guiding State Families]
    In this work, we have introduced a range of new physically-motivated guiding state types for the \sc{Guided Local Hamiltonian} problem.
    Our results for the Fermi-Hubbard model do not utilise these new state types, but rather rely on the semi-classical encoded subset states used in prior work on guided spin Hamiltonians.
    It remains an open question whether \clw{BQP}{completeness} can be established for the Fermi-Hubbard model with Gaussian guiding states; constructing such a reduction presents non-trivial obstacles that we leave for future work.
\end{remark}

\section{Conclusion}\label{sec:conclusion}
Our results provide further insight into how complexity theory can elucidate the potential advantages of quantum algorithms in addressing practical problems. 
In prior work, the state classes considered for the \sc{Guided Local Hamiltonian} problem were rather abstract. 
In contrast, we have demonstrated that states relevant to quantum chemistry and condensed-matter physics serve as strong candidates for guiding states. 
We hope that future research will build on these results and address additional open questions; for example, one interesting direction would be to explore extensions to semi-classical physically motivated states, including placing Fendley states in \cl{BQP} by developing an efficient preparation algorithm from a succinct description.
The results in this work leverage recent advances in state preparation procedures and complexity-theoretic tools (such as perturbation theory and gadgets), to establish connections between the complexity of the \sc{Guided Local Hamiltonian} problem and the underlying guiding state. 
Furthermore, we present a framework that characterises the relationship between the guiding state and the resulting Feynman-Kitaev Hamiltonian, enabling us to identify several characteristics necessary for \clw{BQP}{hardness}.
We have also established a complexity result for physically-relevant models beyond the spin setting --- the Fermi-Hubbard model --- in the guided setting, closing an important gap in the literature.
In this setting, we have shown that the \sc{Guided Fermi-Hubbard Hamiltonian} problem is \clw{BQP}{complete} for the Fermi-Hubbard model at half-filling in the Mott insulating phase, even when local magnetic fields are present, with a semi-classical encoded fermionic subset state as the guiding state.

Technicalities surrounding the \textsc{Guided Local Hamiltonian} problem necessitate careful analysis when classifying its complexity.
This work was conducted under the conventional definition of the problem --- where the input is a classical description of the state.
We further found that such descriptions were sufficiently detailed to allow for sample- and query-access to the state. 
Such conditions facilitate a comparison between the classical and quantum hardness results of the problem, as in Ref.~\cite{gharibian2023dequantizing}. 
Under this access model, one can define an upper limit on the types of states that reside in the ``Goldilocks zone''; that is, those states that permit both the \clw{BQP}{completeness} and \cl{BPP} results.
Outside of this regime, the problem may lose its classical tractability; this renders comparisons between the classical and quantum results less meaningful.

Our results indicate that certain parameter regimes correspond to more structured guiding states, in contrast to broader scenarios that make minimal assumptions about the state or the Hamiltonian.
We have found three interesting classes of guiding states: fixed-weight states, matrix product states and Gaussian states, that result in both \clw{BQP}{completeness} and \cl{BPP} results.
These new guiding state types prove that there is a broader set of parameters to explore for both theoretical and practical efforts in the context of the \sc{Guided Local Hamiltonian} problem.

\subparagraph{Discussion.} 
In Ref.~\cite{gharibian2023dequantizing}, two related variants of the problem are introduced with different input models to the states: \sc{GLH\textsuperscript{*}}, requiring only efficient classical sample-query access, and \sc{GLH}, using a classical description.\footnote{This variation is equivalent to \sc{SCSS-GLH}, which uses a semi-classical subset state as the guiding state.}
The requirement for sample-query appears artificial from a physical perspective, but is useful for analysing the problem's complexity via dequantisation arguments.
For certain instances, it was shown that \sc{GLH} is classically tractable~\cite[Proposition 4.5]{gharibian2023dequantizing}; interestingly, the SCSS description inherently provides sample-query access, so the same results apply to \sc{GLH\textsuperscript{*}} (under the appropriate conditions).
A critical insight is that losing sample-query access could result in a loss of classical tractability.
The absence of such a feature would make it difficult to compare claims of ``quantum advantage'' for this problem.
In this work, we adopt the original convention established in Ref.~\cite{gharibian2023dequantizing}.

Finally, we note that under our definitions, the \clw{BQP}{completeness} result of Ref.~\cite{cade2022complexity} applies only to semi-classical \emph{encoded} subset states rather than standard semi-classical subset states. 
This is because a step in the proof requires the use of $O(n^3)$ ancilla qubits, in the $\ket{+}$ state, which the standard definition of semi-classical subset states does not support.
As we discuss in \cref{sec:state_type_variations}, the standard definition of the subset states assumes a fixed-basis encoding, i.e., using the binary alphabet. 
A natural extension is to consider a multi-alphabet encoding. However, this may not be a natural approach and may lead to complications in decoding the description of a potential guiding state prepared by a quantum algorithm.

\subparagraph{Open Problems.}
An important open problem we pose is to explore the relationship and equivalence between different access models for the guiding state.
This can potentially close the gap between problem variants that subtly differ in their access models.
An extension to the Goldilocks zone suggests an input model providing a classically efficient description of the quantum circuit that prepares the guiding state may suffice to establish both a \cl{BQP}~\cite{cade2022complexity} and \cl{BPP}~\cite{gharibian2023dequantizing} result, under appropriate (respective) conditions.

Additionally, we ask whether there are more practically relevant Hamiltonians that fit within the framework we have established.
A well-known limitation of the Schrieffer-Wolff transformation and perturbative gadgets is the polynomial blow-up in the number of qubits and the strength of the interactions.
This limitation implies that the Hamiltonians within the \clw{BQP}{complete} framework are far from physically relevant, even if the underlying interactions are. 
However, in the direction of reducing interaction strength overhead, perturbative gadgets have been constructed that only introduce a constant increase in interaction strength at the expense of a polynomial increase in the number of interactions per particle~\cite{cao2015perturbative}.
Using such gadgets in conjunction with previous results~\cite{kempe2006complexity,nagaj2007new}, we can prove the \clw{BQP}{completeness} of the \sc{Guided Local Hamiltonian} problem for local Hamiltonians with $O(1)$-strength interactions and an $O(1)$ promise gap.
One other direction is to consider less restricted classes of fermionic Hamiltonians, for example, those that are not necessarily at half-filling or in the Mott insulating phase.
In fact, this is an important open question in the broader context of Hamiltonian complexity theory --- it appears existing tools are insufficient to establish further results for fermionic Hamiltonians beyond the half-filled Mott insulating phase.

Geometrical restrictions beyond $2$-dimensions have yet to be considered for the \sc{Guided Local Hamiltonian} problem.
We make the following conjecture on the complexity classification of the \sc{Guided Local Hamiltonian} problem for one-dimensional Hamiltonians on eight-state qudit systems.

\begin{conjecture}
    The \sc{Guided $8$-state $2$-Local Hamiltonian} problem on a one-dimensional lattice is \clw{BQP}{complete}.
\end{conjecture}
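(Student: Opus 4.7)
\subparagraph{Proof Proposal.} The plan is to combine the 1D circuit-to-Hamiltonian construction of Aharonov--Gottesman--Irani--Kempe (which establishes \clw{QMA}{completeness} for the \sc{$8$-state $2$-Local Hamiltonian} problem on a line) with the guiding-state framework developed in this paper. At the top level, containment in \cl{BQP} should follow immediately from the general recipe of \cref{thm:bqp_contain}: given a classical description of a state with inverse-polynomial overlap with the ground state, efficient preparation plus repeated QPE (\cref{thrm:QPE}) yields the decision. The substance of the conjecture is therefore the hardness direction, which I would attack by showing that the AGIK construction admits a natural analogue of the ``history set'' $R^\nu_{\sigma,\mu'}$ in \cref{eq:R-set}, so that an explicit guiding state in the Goldilocks zone can be classically described and efficiently prepared.

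For \clw{BQP}{hardness}, I would first pre-idle a given \cl{BQP} circuit with $N = \poly{n}$ identity layers (mirroring the device used to seed \cref{thmt@@RmuBQPhard}), then embed it into the AGIK 1D $8$-state chain. The ground state of that Hamiltonian is a history state of the form $\frac{1}{\sqrt{T+1}}\sum_{t=0}^{T}\ket{\varphi_t}\otimes\ket{c(t)}$, where $c(t)$ is a legal ``snake'' clock configuration on the chain, and the qudit alphabet separates active, idle, dead, and turn-around tokens. Because the initial $N$ time steps are identities acting on $\ket{0^n}$, the corresponding slices of the history state are, up to an efficiently describable encoding of the clock tape, a semi-classical superposition over a $\poly{n}$-sized subset of qudit strings. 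I would therefore take the guiding state to be the uniform superposition over these pre-idle clock tapes tensored with $\ket{0^n}$-encoded data registers; this is manifestly a (multi-alphabet) semi-classical subset state, hence preparable by \cref{thmt@@EfficientStatePreparation} lifted to qudits, and its overlap with the true history state is $\Omega(N/(T+N)) = \Omega(1/\poly{n})$. The AGIK promise gap is already $1/\poly{n}$, so the parameters $a,b$ and overlap $\delta$ line up with \cref{def:Guided Local Hamiltonian}.

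The main obstacle I expect is the fact that AGIK's 1D construction intertwines the clock and data into a single qudit (using the $8$ states to represent combined active-site-plus-symbol tokens), so the clean tensor factorisation $\{\sigma\}\times\{\mu'(t)\}$ of $R^\nu_{\sigma,\mu'}$ no longer holds on the nose. I would need to introduce a modified set $R$ that accounts for the coupled encoding --- essentially showing that the pre-idling slices still form a polynomial-sized, easily-enumerable subset of chain configurations. Doing this carefully requires (i) checking that the AGIK propagation terms penalise deviations from this subset uniformly, so that a Schrieffer--Wolff-style effective Hamiltonian analysis as in \cref{thmt@@RmuBQPhard} still gives an inverse-polynomial spectral gap, and (ii) confirming that the resulting guiding state remains in the Goldilocks zone (classical description, efficient preparation, and sample-query access all survive the qudit encoding, cf.\ the multi-alphabet extension discussed in \cref{subsec:state-type-ext}). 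Because AGIK already yields \clw{QMA}{hardness} at locality $2$ on a line with $8$-state qudits, no further gadget reductions are needed, which avoids the usual perturbative overhead and keeps the guiding-state structure intact.

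A possible simplification, should the coupled encoding prove too rigid, is to instead start from the qudit Feynman--Kitaev variants of Hallgren--Nagaj--Narayanaswami or Nagaj--Mozes with a geometric locality reduction via translationally-uniform gadgets; these give slightly weaker locality/dimension trade-offs but preserve a clean data/clock split, so the exact machinery of \cref{sec:bqp_hard} applies verbatim. Either way, the output is a family of $8$-state $2$-local 1D Hamiltonians together with explicitly described, efficiently preparable guiding states of inverse-polynomial overlap, yielding \clw{BQP}{completeness} and thus the conjecture.
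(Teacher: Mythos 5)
This statement is posed in the paper as an open \emph{conjecture}: the authors give no proof, only a remark that hardness ``should be achievable using appropriate modifications of the results of this work and Ref.~\cite{HNN13}'' and that ``demonstrating the required state preparation for qudit semi-classical encoded states may be challenging.'' Your sketch follows essentially the route the authors themselves envisage (1D qudit circuit-to-Hamiltonian construction plus the $R^\nu_{\sigma,\mu'}$ pre-idling machinery, with Hallgren--Nagaj--Narayanaswami as the fallback), but the two steps you explicitly defer are precisely the obstacles that keep this a conjecture, so the proposal does not constitute a proof.

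Concretely: (i) Containment does not ``follow immediately'' from \cref{thm:bqp_contain}. That theorem rests on \cref{thmt@@EfficientStatePreparation}, which is the qubit {\tt PermutationGrover-Rudolph} routine; the paper explicitly cautions that the preparability of multi-alphabet / qudit-encoded subset states ``is up for debate in many `simple' circumstances'' (\cref{subsec:state-type-ext}), and the guiding state you propose is exactly such an object. ``Lifted to qudits'' is the missing lemma, not a corollary. (ii) For hardness, you correctly identify that the AGIK construction fuses clock and data into single qudit tokens, destroying the tensor factorisation $\{\sigma\}\times\{\mu'(t)\}$ on which \cref{lma:part-2} and the Schrieffer--Wolff analysis of \cref{thmt@@RmuBQPhard} rely --- but you do not resolve it. The difficulty is worse than a bookkeeping change: in the 1D construction even identity gates advance the configuration by sweeping an active site across the chain, so the ``pre-idle slices'' are not a fixed data string tensored with a polynomial set of legal clock tapes; they are a polynomial family of globally distinct chain configurations, and one must re-prove both that this family is efficiently enumerable/preparable as a (qudit) subset state and that the effective-Hamiltonian gap analysis survives. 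Your fallback via Refs.~\cite{HNN13,NM07} preserves the data/clock split but then requires a geometric locality reduction, which the paper notes (in \cref{subsec:geometric}) it does not know how to perform while preserving the guiding-state type. Until at least one of these two branches is carried through, the statement remains open.
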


We suspect that demonstrating the required state preparation for qudit semi-classical encoded states may be challenging. 
However, the hardness of the problem should be achievable using appropriate modifications of the results of this work and Ref.~\cite{hallgren2013local}. 
Furthermore, this result would need a qudit extension to a semi-classical state, but this should follow straightforwardly.
Additionally, we expect this work and simple adaptions to the arguments of Ref.~\cite{kay2007quantum} to yield a \clw{BQP}{completeness} result for the guided problem for translational-invariant qudit Hamiltonians on a one-dimensional lattice.

Additionally, problems that rise above the standard \clw{QMA}{completeness} of the \sc{Local Hamiltonian} problem may also admit interesting results when provided with a guiding state.
For example, problems that consider quantities beyond the ground-state energy, such as the \sc{Approximate Simulation} problem~\cite{ambainis2014physical} (among others in this work), concern the estimation of the expectation value of a local observable with respect to the ground state of a given Hamiltonian.
Making the appropriate modifications to endow this problem with a guiding state may shift the problem to within \cl{BQP}, either by a direct calculation using the guiding state or by proving that the \sc{Guided Approximate Simulation} problem is contained in \cl{P\textsuperscript{BQP[$\log$]}}.
Since \cl{BQP} is self-low, it follows that \cl{P\textsuperscript{BQP[$\log$]}}$=$\cl{BQP}.
However, it may prove difficult to obtain the appropriate bounds on parameters if we are only provided with a single guiding state for one local Hamiltonian instance.

A final open problem we mention is to consider the effect of using different reductions from circuits to Hamiltonians. 
For example, what are the effects of using different clock states?
In particular, if the mapping produces a superposition of computational basis states or employs an alternative basis (for example, the Bell basis), these modifications could affect both the potential guiding states and the complexity of the problem~\cite{nagaj2007new}.
It does seem, however, that semi-classical encoded subset states may be sufficient to resolve these cases.
Three other constructions that may be of interest are: the graphical approach of Childs, Gossett and Webb~\cite{childs2014bose}, the injective tensor network reduction technique, generalised by Anshu, Breuckmann and Nguyen~\cite{anshu2024circuit} and the quantum Thue system framework of Bausch, Cubitt and Ozols~\cite{bausch2017complexity}.
However, we caution that over-engineering any specific family of states may be a superfluous task. 
Doing so may rob the Hamiltonian of physical realism or prevent the reduction to known models; thus, the problem may lose its practical relevance.

\section*{acknowledgments}\label{sec:acknowledgments}
We thank Mauro Morales for helpful feedback on the manuscript and useful discussions on future directions.
GW would like to thank Thinh Le for insightful discussions concerning state preparation procedures and Ryan Mann for comments on the connection between SCSS and fixed-weight states.
GW and KL are supported by a scholarship from the Sydney Quantum Academy. This work was supported by the ARC Centre of Excellence for Quantum Computation and Communication Technology (CQC2T), project number CE170100012. SJE and MJB were also partially supported with funding from the Defense Advanced Research Projects Agency under the Quantum Benchmarking (QB) program under award no. HR00112230007, HR001121S0026, and HR001122C0074 contracts. The views, opinions and/or findings expressed are those of the authors and should not be interpreted as representing the official views or policies of the Department of Defense or the U.S. Government. 

\newpage 
\bibliographystyle{apsrev4-2}
\bibliography{ref}\label{sec:refs}

\begin{thebibliography}{92}%
\makeatletter
\providecommand \@ifxundefined [1]{%
 \@ifx{#1\undefined}
}%
\providecommand \@ifnum [1]{%
 \ifnum #1\expandafter \@firstoftwo
 \else \expandafter \@secondoftwo
 \fi
}%
\providecommand \@ifx [1]{%
 \ifx #1\expandafter \@firstoftwo
 \else \expandafter \@secondoftwo
 \fi
}%
\providecommand \natexlab [1]{#1}%
\providecommand \enquote  [1]{``#1''}%
\providecommand \bibnamefont  [1]{#1}%
\providecommand \bibfnamefont [1]{#1}%
\providecommand \citenamefont [1]{#1}%
\providecommand \href@noop [0]{\@secondoftwo}%
\providecommand \href [0]{\begingroup \@sanitize@url \@href}%
\providecommand \@href[1]{\@@startlink{#1}\@@href}%
\providecommand \@@href[1]{\endgroup#1\@@endlink}%
\providecommand \@sanitize@url [0]{\catcode `\\12\catcode `\$12\catcode
  `\&12\catcode `\#12\catcode `\^12\catcode `\_12\catcode `\%12\relax}%
\providecommand \@@startlink[1]{}%
\providecommand \@@endlink[0]{}%
\providecommand \url  [0]{\begingroup\@sanitize@url \@url }%
\providecommand \@url [1]{\endgroup\@href {#1}{\urlprefix }}%
\providecommand \urlprefix  [0]{URL }%
\providecommand \Eprint [0]{\href }%
\providecommand \doibase [0]{https://doi.org/}%
\providecommand \selectlanguage [0]{\@gobble}%
\providecommand \bibinfo  [0]{\@secondoftwo}%
\providecommand \bibfield  [0]{\@secondoftwo}%
\providecommand \translation [1]{[#1]}%
\providecommand \BibitemOpen [0]{}%
\providecommand \bibitemStop [0]{}%
\providecommand \bibitemNoStop [0]{.\EOS\space}%
\providecommand \EOS [0]{\spacefactor3000\relax}%
\providecommand \BibitemShut  [1]{\csname bibitem#1\endcsname}%
\let\auto@bib@innerbib\@empty
\bibitem [{\citenamefont {Kitaev}\ \emph {et~al.}(2002)\citenamefont {Kitaev},
  \citenamefont {Shen},\ and\ \citenamefont {Vyalyi}}]{kitaev2002classical}%
  \BibitemOpen
  \bibfield  {author} {\bibinfo {author} {\bibfnamefont {A.~Y.}\ \bibnamefont
  {Kitaev}}, \bibinfo {author} {\bibfnamefont {A.~H.}\ \bibnamefont {Shen}},\
  and\ \bibinfo {author} {\bibfnamefont {M.~N.}\ \bibnamefont {Vyalyi}},\
  }\href@noop {} {\emph {\bibinfo {title} {{Classical and Quantum
  Computation}}}}\ (\bibinfo  {publisher} {American Mathematical Society},\
  \bibinfo {address} {USA},\ \bibinfo {year} {2002})\BibitemShut {NoStop}%
\bibitem [{\citenamefont {Schuch}\ and\ \citenamefont
  {Verstraete}(2009)}]{schuch2009computational}%
  \BibitemOpen
  \bibfield  {author} {\bibinfo {author} {\bibfnamefont {N.}~\bibnamefont
  {Schuch}}\ and\ \bibinfo {author} {\bibfnamefont {F.}~\bibnamefont
  {Verstraete}},\ }\href {https://doi.org/10.1038/nphys1370} {\bibfield
  {journal} {\bibinfo  {journal} {Nature Physics}\ }\textbf {\bibinfo {volume}
  {5}},\ \bibinfo {pages} {732} (\bibinfo {year} {2009})},\ \Eprint
  {https://arxiv.org/abs/0712.0483} {arXiv:0712.0483} \BibitemShut {NoStop}%
\bibitem [{\citenamefont {Cubitt}\ \emph {et~al.}(2018)\citenamefont {Cubitt},
  \citenamefont {Montanaro},\ and\ \citenamefont
  {Piddock}}]{cubitt2018universal}%
  \BibitemOpen
  \bibfield  {author} {\bibinfo {author} {\bibfnamefont {T.~S.}\ \bibnamefont
  {Cubitt}}, \bibinfo {author} {\bibfnamefont {A.}~\bibnamefont {Montanaro}},\
  and\ \bibinfo {author} {\bibfnamefont {S.}~\bibnamefont {Piddock}},\
  }\href@noop {} {\bibfield  {journal} {\bibinfo  {journal} {Proceedings of the
  National Academy of Sciences}\ }\textbf {\bibinfo {volume} {115}},\ \bibinfo
  {pages} {9497} (\bibinfo {year} {2018})},\ \Eprint
  {https://arxiv.org/abs/1701.05182} {arXiv:1701.05182} \BibitemShut {NoStop}%
\bibitem [{\citenamefont {O'Gorman}\ \emph {et~al.}(2021)\citenamefont
  {O'Gorman}, \citenamefont {Irani}, \citenamefont {Whitfield},\ and\
  \citenamefont {Fefferman}}]{ogorman2021electronic}%
  \BibitemOpen
  \bibfield  {author} {\bibinfo {author} {\bibfnamefont {B.}~\bibnamefont
  {O'Gorman}}, \bibinfo {author} {\bibfnamefont {S.}~\bibnamefont {Irani}},
  \bibinfo {author} {\bibfnamefont {J.}~\bibnamefont {Whitfield}},\ and\
  \bibinfo {author} {\bibfnamefont {B.}~\bibnamefont {Fefferman}},\ }\href
  {https://doi.org/10.1103/PRXQuantum.3.020322} {\bibfield  {journal} {\bibinfo
   {journal} {PRX Quantum}\ }\textbf {\bibinfo {volume} {3}},\ \bibinfo {pages}
  {020322} (\bibinfo {year} {2021})},\ \Eprint
  {https://arxiv.org/abs/2103.08215} {arXiv:2103.08215} \BibitemShut {NoStop}%
\bibitem [{\citenamefont {Richter}(2007)}]{richter2007two}%
  \BibitemOpen
  \bibfield  {author} {\bibinfo {author} {\bibfnamefont {P.~C.}\ \bibnamefont
  {Richter}},\ }\href@noop {} {\bibinfo {title} {{Two remarks on the local
  {H}amiltonian problem}}} (\bibinfo {year} {2007}),\ \Eprint
  {https://arxiv.org/abs/0712.4274} {arXiv:0712.4274} \BibitemShut {NoStop}%
\bibitem [{\citenamefont {Gharibian}\ and\ \citenamefont
  {Le~Gall}(2023)}]{gharibian2023dequantizing}%
  \BibitemOpen
  \bibfield  {author} {\bibinfo {author} {\bibfnamefont {S.}~\bibnamefont
  {Gharibian}}\ and\ \bibinfo {author} {\bibfnamefont {F.}~\bibnamefont
  {Le~Gall}},\ }\href {https://doi.org/10.1137/22M1513721} {\bibfield
  {journal} {\bibinfo  {journal} {SIAM Journal on Computing}\ }\textbf
  {\bibinfo {volume} {52}},\ \bibinfo {pages} {1009} (\bibinfo {year}
  {2023})},\ \Eprint {https://arxiv.org/abs/2111.09079} {arXiv:2111.09079}
  \BibitemShut {NoStop}%
\bibitem [{\citenamefont {Cade}\ \emph {et~al.}(2023)\citenamefont {Cade},
  \citenamefont {Folkertsma}, \citenamefont {Gharibian}, \citenamefont
  {Hayakawa}, \citenamefont {Le~Gall}, \citenamefont {Morimae},\ and\
  \citenamefont {Weggemans}}]{cade2023improved}%
  \BibitemOpen
  \bibfield  {author} {\bibinfo {author} {\bibfnamefont {C.}~\bibnamefont
  {Cade}}, \bibinfo {author} {\bibfnamefont {M.}~\bibnamefont {Folkertsma}},
  \bibinfo {author} {\bibfnamefont {S.}~\bibnamefont {Gharibian}}, \bibinfo
  {author} {\bibfnamefont {R.}~\bibnamefont {Hayakawa}}, \bibinfo {author}
  {\bibfnamefont {F.}~\bibnamefont {Le~Gall}}, \bibinfo {author} {\bibfnamefont
  {T.}~\bibnamefont {Morimae}},\ and\ \bibinfo {author} {\bibfnamefont
  {J.}~\bibnamefont {Weggemans}},\ }in\ \href
  {https://doi.org/10.4230/LIPICS.ICALP.2023.32} {\emph {\bibinfo {booktitle}
  {50th International Colloquium on Automata, Languages, and Programming (ICALP
  2023)}}},\ Vol.\ \bibinfo {volume} {261}\ (\bibinfo  {publisher} {Schloss
  Dagstuhl -- Leibniz-Zentrum f{\"u}r Informatik},\ \bibinfo {year} {2023})\
  pp.\ \bibinfo {pages} {32:1--32:19},\ \Eprint
  {https://arxiv.org/abs/2207.10250} {arXiv:2207.10250} \BibitemShut {NoStop}%
\bibitem [{\citenamefont {Tsuneda}(2014)}]{tsuneda2014density}%
  \BibitemOpen
  \bibfield  {author} {\bibinfo {author} {\bibfnamefont {T.}~\bibnamefont
  {Tsuneda}},\ }\href {https://doi.org/10.1007/978-4-431-54825-6} {\emph
  {\bibinfo {title} {{Density Functional Theory in Quantum Chemistry}}}}\
  (\bibinfo  {publisher} {Springer Japan},\ \bibinfo {year} {2014})\BibitemShut
  {NoStop}%
\bibitem [{\citenamefont {Bogojeski}\ \emph {et~al.}(2020)\citenamefont
  {Bogojeski}, \citenamefont {Vogt-Maranto}, \citenamefont {Tuckerman},
  \citenamefont {M\"{u}ller},\ and\ \citenamefont
  {Burke}}]{bogojeski2020quantum}%
  \BibitemOpen
  \bibfield  {author} {\bibinfo {author} {\bibfnamefont {M.}~\bibnamefont
  {Bogojeski}}, \bibinfo {author} {\bibfnamefont {L.}~\bibnamefont
  {Vogt-Maranto}}, \bibinfo {author} {\bibfnamefont {M.~E.}\ \bibnamefont
  {Tuckerman}}, \bibinfo {author} {\bibfnamefont {K.-R.}\ \bibnamefont
  {M\"{u}ller}},\ and\ \bibinfo {author} {\bibfnamefont {K.}~\bibnamefont
  {Burke}},\ }\href {https://doi.org/10.1038/s41467-020-19093-1} {\bibfield
  {journal} {\bibinfo  {journal} {Nature Communications}\ }\textbf {\bibinfo
  {volume} {11}},\ \bibinfo {pages} {5223} (\bibinfo {year}
  {2020})}\BibitemShut {NoStop}%
\bibitem [{\citenamefont {White}(1992)}]{white1992density}%
  \BibitemOpen
  \bibfield  {author} {\bibinfo {author} {\bibfnamefont {S.~R.}\ \bibnamefont
  {White}},\ }\href {https://doi.org/10.1103/PhysRevLett.69.2863} {\bibfield
  {journal} {\bibinfo  {journal} {Physical Review Letters}\ }\textbf {\bibinfo
  {volume} {69}},\ \bibinfo {pages} {2863} (\bibinfo {year}
  {1992})}\BibitemShut {NoStop}%
\bibitem [{\citenamefont {Arute}\ \emph {et~al.}(2020)\citenamefont {Arute},
  \citenamefont {Arya}, \citenamefont {Babbush}, \citenamefont {Bacon},
  \citenamefont {Bardin}, \citenamefont {Barends} \emph
  {et~al.}}]{arute2020hartree}%
  \BibitemOpen
  \bibfield  {author} {\bibinfo {author} {\bibfnamefont {F.}~\bibnamefont
  {Arute}}, \bibinfo {author} {\bibfnamefont {K.}~\bibnamefont {Arya}},
  \bibinfo {author} {\bibfnamefont {R.}~\bibnamefont {Babbush}}, \bibinfo
  {author} {\bibfnamefont {D.}~\bibnamefont {Bacon}}, \bibinfo {author}
  {\bibfnamefont {J.~C.}\ \bibnamefont {Bardin}}, \bibinfo {author}
  {\bibfnamefont {R.}~\bibnamefont {Barends}}, \emph {et~al.},\ }\href
  {https://doi.org/10.1126/science.abb9811} {\bibfield  {journal} {\bibinfo
  {journal} {Science}\ }\textbf {\bibinfo {volume} {369}},\ \bibinfo {pages}
  {1084} (\bibinfo {year} {2020})}\BibitemShut {NoStop}%
\bibitem [{\citenamefont {Clinton}\ \emph {et~al.}(2024)\citenamefont
  {Clinton}, \citenamefont {Cubitt}, \citenamefont {Flynn}, \citenamefont
  {Gambetta}, \citenamefont {Klassen}, \citenamefont {Montanaro}, \citenamefont
  {Piddock}, \citenamefont {Santos},\ and\ \citenamefont
  {Sheridan}}]{clinton2024towards}%
  \BibitemOpen
  \bibfield  {author} {\bibinfo {author} {\bibfnamefont {L.}~\bibnamefont
  {Clinton}}, \bibinfo {author} {\bibfnamefont {T.}~\bibnamefont {Cubitt}},
  \bibinfo {author} {\bibfnamefont {B.}~\bibnamefont {Flynn}}, \bibinfo
  {author} {\bibfnamefont {F.~M.}\ \bibnamefont {Gambetta}}, \bibinfo {author}
  {\bibfnamefont {J.}~\bibnamefont {Klassen}}, \bibinfo {author} {\bibfnamefont
  {A.}~\bibnamefont {Montanaro}}, \bibinfo {author} {\bibfnamefont
  {S.}~\bibnamefont {Piddock}}, \bibinfo {author} {\bibfnamefont {R.~A.}\
  \bibnamefont {Santos}},\ and\ \bibinfo {author} {\bibfnamefont
  {E.}~\bibnamefont {Sheridan}},\ }\href
  {https://doi.org/10.1038/s41467-023-43479-6} {\bibfield  {journal} {\bibinfo
  {journal} {Nature Communications}\ }\textbf {\bibinfo {volume} {15}},\
  \bibinfo {pages} {211} (\bibinfo {year} {2024})},\ \Eprint
  {https://arxiv.org/abs/2205.15256} {arXiv:2205.15256} \BibitemShut {NoStop}%
\bibitem [{\citenamefont {Waite}(2025)}]{waite2025guided}%
  \BibitemOpen
  \bibfield  {author} {\bibinfo {author} {\bibfnamefont {G.}~\bibnamefont
  {Waite}},\ }\href@noop {} {\bibinfo {title} {{The Guided Local Hamiltonian
  Problem for Stoquastic Hamiltonians}}} (\bibinfo {year} {2025}),\ \Eprint
  {https://arxiv.org/abs/2509.25829} {arXiv:2509.25829} \BibitemShut {NoStop}%
\bibitem [{\citenamefont {Zhang}\ \emph {et~al.}(2024)\citenamefont {Zhang},
  \citenamefont {Wu},\ and\ \citenamefont {Yuan}}]{zhang2024dequantized}%
  \BibitemOpen
  \bibfield  {author} {\bibinfo {author} {\bibfnamefont {Y.}~\bibnamefont
  {Zhang}}, \bibinfo {author} {\bibfnamefont {Y.}~\bibnamefont {Wu}},\ and\
  \bibinfo {author} {\bibfnamefont {X.}~\bibnamefont {Yuan}},\ }\href@noop {}
  {\bibinfo {title} {{A Dequantized Algorithm for the Guided Local Hamiltonian
  Problem}}} (\bibinfo {year} {2024}),\ \Eprint
  {https://arxiv.org/abs/2411.16163} {arXiv:2411.16163} \BibitemShut {NoStop}%
\bibitem [{\citenamefont {Nguyen}\ and\ \citenamefont
  {Kieferov\'a}(2025)}]{nguyen2025theoretical}%
  \BibitemOpen
  \bibfield  {author} {\bibinfo {author} {\bibfnamefont {T.}~\bibnamefont
  {Nguyen}}\ and\ \bibinfo {author} {\bibfnamefont {M.}~\bibnamefont
  {Kieferov\'a}},\ }\href@noop {} {\bibinfo {title} {{Theoretical Guarantees of
  Variational Quantum Algorithms for Guiding States}}} (\bibinfo {year}
  {2025}),\ \Eprint {https://arxiv.org/abs/2510.06764} {arXiv:2510.06764}
  \BibitemShut {NoStop}%
\bibitem [{\citenamefont {Kempe}\ \emph {et~al.}(2006)\citenamefont {Kempe},
  \citenamefont {Kitaev},\ and\ \citenamefont {Regev}}]{kempe2006complexity}%
  \BibitemOpen
  \bibfield  {author} {\bibinfo {author} {\bibfnamefont {J.}~\bibnamefont
  {Kempe}}, \bibinfo {author} {\bibfnamefont {A.}~\bibnamefont {Kitaev}},\ and\
  \bibinfo {author} {\bibfnamefont {O.}~\bibnamefont {Regev}},\ }\href
  {https://doi.org/10.1137/S0097539704445226} {\bibfield  {journal} {\bibinfo
  {journal} {SIAM Journal on Computing}\ }\textbf {\bibinfo {volume} {35}},\
  \bibinfo {pages} {1070} (\bibinfo {year} {2006})},\ \Eprint
  {https://arxiv.org/abs/quant-ph/0406180} {arXiv:quant-ph/0406180}
  \BibitemShut {NoStop}%
\bibitem [{\citenamefont {Oliveira}\ and\ \citenamefont
  {Terhal}(2008)}]{oliveira2008complexity}%
  \BibitemOpen
  \bibfield  {author} {\bibinfo {author} {\bibfnamefont {R.}~\bibnamefont
  {Oliveira}}\ and\ \bibinfo {author} {\bibfnamefont {B.~M.}\ \bibnamefont
  {Terhal}},\ }\href@noop {} {\bibfield  {journal} {\bibinfo  {journal}
  {Quantum Info. Comput.}\ }\textbf {\bibinfo {volume} {8}},\ \bibinfo {pages}
  {900} (\bibinfo {year} {2008})},\ \Eprint
  {https://arxiv.org/abs/quant-ph/0504050} {arXiv:quant-ph/0504050}
  \BibitemShut {NoStop}%
\bibitem [{\citenamefont {Ruh}\ and\ \citenamefont
  {Elman}(2025)}]{ruh2025furthering}%
  \BibitemOpen
  \bibfield  {author} {\bibinfo {author} {\bibfnamefont {J.}~\bibnamefont
  {Ruh}}\ and\ \bibinfo {author} {\bibfnamefont {S.~J.}\ \bibnamefont
  {Elman}},\ }\href@noop {} {\bibinfo {title} {{Furthering Free-Fermion
  Findability From Fratricides}}} (\bibinfo {year} {2025}),\ \Eprint
  {https://arxiv.org/abs/2509.09092} {arXiv:2509.09092} \BibitemShut {NoStop}%
\bibitem [{\citenamefont {den Nest}(2011)}]{vandennest2011simulating}%
  \BibitemOpen
  \bibfield  {author} {\bibinfo {author} {\bibfnamefont {M.~V.}\ \bibnamefont
  {den Nest}},\ }\href {https://doi.org/10.26421/qic11.9-10-5} {\bibfield
  {journal} {\bibinfo  {journal} {Quantum Information and Computation}\
  }\textbf {\bibinfo {volume} {11}},\ \bibinfo {pages} {784} (\bibinfo {year}
  {2011})},\ \Eprint {https://arxiv.org/abs/0911.1624} {arXiv:0911.1624}
  \BibitemShut {NoStop}%
\bibitem [{\citenamefont {Kieferov{\'a}}\ and\ \citenamefont
  {Sanders}(2022)}]{kieferova2022assume}%
  \BibitemOpen
  \bibfield  {author} {\bibinfo {author} {\bibfnamefont {M.}~\bibnamefont
  {Kieferov{\'a}}}\ and\ \bibinfo {author} {\bibfnamefont {Y.}~\bibnamefont
  {Sanders}},\ }\href {https://doi.org/10.1162/99608f92.69c5328d} {\bibfield
  {journal} {\bibinfo  {journal} {Harvard Data Science Review}\ }\textbf
  {\bibinfo {volume} {4}} (\bibinfo {year} {2022})}\BibitemShut {NoStop}%
\bibitem [{\citenamefont {Auerbach}(1994)}]{auerbach1994interacting}%
  \BibitemOpen
  \bibfield  {author} {\bibinfo {author} {\bibfnamefont {A.}~\bibnamefont
  {Auerbach}},\ }\href {https://doi.org/10.1007/978-1-4612-0869-3} {\emph
  {\bibinfo {title} {{Interacting Electrons and Quantum Magnetism}}}}\
  (\bibinfo  {publisher} {Springer New York},\ \bibinfo {year}
  {1994})\BibitemShut {NoStop}%
\bibitem [{\citenamefont {Hubbard}(1963)}]{hubbard1963electron}%
  \BibitemOpen
  \bibfield  {author} {\bibinfo {author} {\bibfnamefont {J.}~\bibnamefont
  {Hubbard}},\ }\href@noop {} {\bibfield  {journal} {\bibinfo  {journal}
  {Proceedings of the Royal Society of London. Series A. Mathematical and
  Physical Sciences}\ }\textbf {\bibinfo {volume} {276}},\ \bibinfo {pages}
  {238} (\bibinfo {year} {1963})}\BibitemShut {NoStop}%
\bibitem [{\citenamefont {Bansal}\ \emph {et~al.}(2009)\citenamefont {Bansal},
  \citenamefont {Bravyi},\ and\ \citenamefont {Terhal}}]{bansal2009classical}%
  \BibitemOpen
  \bibfield  {author} {\bibinfo {author} {\bibfnamefont {N.}~\bibnamefont
  {Bansal}}, \bibinfo {author} {\bibfnamefont {S.}~\bibnamefont {Bravyi}},\
  and\ \bibinfo {author} {\bibfnamefont {B.~M.}\ \bibnamefont {Terhal}},\
  }\href@noop {} {\bibfield  {journal} {\bibinfo  {journal} {Quantum Info.
  Comput.}\ }\textbf {\bibinfo {volume} {9}},\ \bibinfo {pages} {701} (\bibinfo
  {year} {2009})},\ \Eprint {https://arxiv.org/abs/0705.1115} {arXiv:0705.1115}
  \BibitemShut {NoStop}%
\bibitem [{\citenamefont {Buhrman}\ \emph {et~al.}(2025)\citenamefont
  {Buhrman}, \citenamefont {Gharibian}, \citenamefont {Landau}, \citenamefont
  {Le~Gall}, \citenamefont {Schuch},\ and\ \citenamefont
  {Tamaki}}]{buhrman2025beating}%
  \BibitemOpen
  \bibfield  {author} {\bibinfo {author} {\bibfnamefont {H.}~\bibnamefont
  {Buhrman}}, \bibinfo {author} {\bibfnamefont {S.}~\bibnamefont {Gharibian}},
  \bibinfo {author} {\bibfnamefont {Z.}~\bibnamefont {Landau}}, \bibinfo
  {author} {\bibfnamefont {F.}~\bibnamefont {Le~Gall}}, \bibinfo {author}
  {\bibfnamefont {N.}~\bibnamefont {Schuch}},\ and\ \bibinfo {author}
  {\bibfnamefont {S.}~\bibnamefont {Tamaki}},\ }\href
  {https://doi.org/10.1103/29qw-bssx} {\bibfield  {journal} {\bibinfo
  {journal} {Physical Review Letters}\ }\textbf {\bibinfo {volume} {135}},\
  \bibinfo {pages} {030601} (\bibinfo {year} {2025})},\ \Eprint
  {https://arxiv.org/abs/2407.03073} {arXiv:2407.03073} \BibitemShut {NoStop}%
\bibitem [{\citenamefont {Kallaugher}\ \emph {et~al.}(2024)\citenamefont
  {Kallaugher}, \citenamefont {Parekh}, \citenamefont {Thompson}, \citenamefont
  {Wang},\ and\ \citenamefont {Yirka}}]{kallaugher2024complexity}%
  \BibitemOpen
  \bibfield  {author} {\bibinfo {author} {\bibfnamefont {J.}~\bibnamefont
  {Kallaugher}}, \bibinfo {author} {\bibfnamefont {O.}~\bibnamefont {Parekh}},
  \bibinfo {author} {\bibfnamefont {K.}~\bibnamefont {Thompson}}, \bibinfo
  {author} {\bibfnamefont {Y.}~\bibnamefont {Wang}},\ and\ \bibinfo {author}
  {\bibfnamefont {J.}~\bibnamefont {Yirka}},\ }\href@noop {} {\bibinfo {title}
  {{Complexity Classification of Product State Problems for Local
  Hamiltonians}}} (\bibinfo {year} {2024}),\ \Eprint
  {https://arxiv.org/abs/2401.06725} {arXiv:2401.06725} \BibitemShut {NoStop}%
\bibitem [{\citenamefont {Bremner}\ \emph {et~al.}(2025)\citenamefont
  {Bremner}, \citenamefont {Ji}, \citenamefont {Li}, \citenamefont
  {Mathieson},\ and\ \citenamefont {Morales}}]{bremner2025parameterized}%
  \BibitemOpen
  \bibfield  {author} {\bibinfo {author} {\bibfnamefont {M.~J.}\ \bibnamefont
  {Bremner}}, \bibinfo {author} {\bibfnamefont {Z.}~\bibnamefont {Ji}},
  \bibinfo {author} {\bibfnamefont {X.}~\bibnamefont {Li}}, \bibinfo {author}
  {\bibfnamefont {L.}~\bibnamefont {Mathieson}},\ and\ \bibinfo {author}
  {\bibfnamefont {M.~E.~S.}\ \bibnamefont {Morales}},\ }\bibfield  {journal}
  {\bibinfo  {journal} {ACM Transactions on Quantum Computing}\ }\textbf
  {\bibinfo {volume} {6}},\ \href {https://doi.org/10.1145/3759156}
  {10.1145/3759156} (\bibinfo {year} {2025}),\ \Eprint
  {https://arxiv.org/abs/2211.05325} {arXiv:2211.05325} \BibitemShut {NoStop}%
\bibitem [{\citenamefont {Waite}\ and\ \citenamefont
  {Lin}(2025)}]{waite2025complexityb}%
  \BibitemOpen
  \bibfield  {author} {\bibinfo {author} {\bibfnamefont {G.}~\bibnamefont
  {Waite}}\ and\ \bibinfo {author} {\bibfnamefont {K.}~\bibnamefont {Lin}},\
  }\href@noop {} {\bibinfo {title} {{On the Complexity of the Succinct State
  Local Hamiltonian Problem}}} (\bibinfo {year} {2025}),\ \Eprint
  {https://arxiv.org/abs/2509.25821} {arXiv:2509.25821} \BibitemShut {NoStop}%
\bibitem [{\citenamefont {Jiang}(2025)}]{jiang2025local}%
  \BibitemOpen
  \bibfield  {author} {\bibinfo {author} {\bibfnamefont {J.}~\bibnamefont
  {Jiang}},\ }\href {https://doi.org/10.1103/PRXQuantum.6.020312} {\bibfield
  {journal} {\bibinfo  {journal} {PRX Quantum}\ }\textbf {\bibinfo {volume}
  {6}},\ \bibinfo {pages} {020312} (\bibinfo {year} {2025})},\ \Eprint
  {https://arxiv.org/abs/2309.10155} {arXiv:2309.10155} \BibitemShut {NoStop}%
\bibitem [{\citenamefont {Grilo}\ \emph {et~al.}(2015)\citenamefont {Grilo},
  \citenamefont {Kerenidis},\ and\ \citenamefont {Sikora}}]{grilo2015qma}%
  \BibitemOpen
  \bibfield  {author} {\bibinfo {author} {\bibfnamefont {A.~B.}\ \bibnamefont
  {Grilo}}, \bibinfo {author} {\bibfnamefont {I.}~\bibnamefont {Kerenidis}},\
  and\ \bibinfo {author} {\bibfnamefont {J.}~\bibnamefont {Sikora}},\ }\bibinfo
  {title} {{QMA with Subset State Witnesses}},\ in\ \href
  {https://doi.org/10.1007/978-3-662-48054-0_14} {\emph {\bibinfo {booktitle}
  {Mathematical Foundations of Computer Science 2015}}}\ (\bibinfo  {publisher}
  {Springer Berlin Heidelberg},\ \bibinfo {year} {2015})\ pp.\ \bibinfo {pages}
  {163--174},\ \Eprint {https://arxiv.org/abs/1410.2882} {arXiv:1410.2882}
  \BibitemShut {NoStop}%
\bibitem [{\citenamefont {Cade}\ \emph {et~al.}(2022)\citenamefont {Cade},
  \citenamefont {Folkertsma},\ and\ \citenamefont
  {Weggemans}}]{cade2022complexity}%
  \BibitemOpen
  \bibfield  {author} {\bibinfo {author} {\bibfnamefont {C.}~\bibnamefont
  {Cade}}, \bibinfo {author} {\bibfnamefont {M.}~\bibnamefont {Folkertsma}},\
  and\ \bibinfo {author} {\bibfnamefont {J.}~\bibnamefont {Weggemans}},\
  }\href@noop {} {\bibinfo {title} {{Complexity of the Guided Local Hamiltonian
  Problem: Improved Parameters and Extension to Excited States}}} (\bibinfo
  {year} {2022}),\ \Eprint {https://arxiv.org/abs/2207.10097}
  {arXiv:2207.10097} \BibitemShut {NoStop}%
\bibitem [{\citenamefont {Deshpande}\ \emph {et~al.}(2022)\citenamefont
  {Deshpande}, \citenamefont {Gorshkov},\ and\ \citenamefont
  {Fefferman}}]{deshpande2022importance}%
  \BibitemOpen
  \bibfield  {author} {\bibinfo {author} {\bibfnamefont {A.}~\bibnamefont
  {Deshpande}}, \bibinfo {author} {\bibfnamefont {A.~V.}\ \bibnamefont
  {Gorshkov}},\ and\ \bibinfo {author} {\bibfnamefont {B.}~\bibnamefont
  {Fefferman}},\ }\href {https://doi.org/10.1103/prxquantum.3.040327}
  {\bibfield  {journal} {\bibinfo  {journal} {PRX Quantum}\ }\textbf {\bibinfo
  {volume} {3}},\ \bibinfo {pages} {040327} (\bibinfo {year} {2022})},\ \Eprint
  {https://arxiv.org/abs/2007.11582} {arXiv:2007.11582} \BibitemShut {NoStop}%
\bibitem [{\citenamefont {Weggemans}\ \emph {et~al.}(2024)\citenamefont
  {Weggemans}, \citenamefont {Folkertsma},\ and\ \citenamefont
  {Cade}}]{weggemans2024guidable}%
  \BibitemOpen
  \bibfield  {author} {\bibinfo {author} {\bibfnamefont {J.}~\bibnamefont
  {Weggemans}}, \bibinfo {author} {\bibfnamefont {M.}~\bibnamefont
  {Folkertsma}},\ and\ \bibinfo {author} {\bibfnamefont {C.}~\bibnamefont
  {Cade}},\ }in\ \href {https://doi.org/10.4230/LIPIcs.TQC.2024.10} {\emph
  {\bibinfo {booktitle} {19th Conference on the Theory of Quantum Computation,
  Communication and Cryptography (TQC 2024)}}},\ \bibinfo {series} {Leibniz
  International Proceedings in Informatics (LIPIcs)}, Vol.\ \bibinfo {volume}
  {310}\ (\bibinfo  {publisher} {Schloss Dagstuhl -- Leibniz-Zentrum f{\"u}r
  Informatik},\ \bibinfo {year} {2024})\ pp.\ \bibinfo {pages} {10:1--10:24},\
  \Eprint {https://arxiv.org/abs/2302.11578} {arXiv:2302.11578} \BibitemShut
  {NoStop}%
\bibitem [{\citenamefont {Bravyi}(2015)}]{bravyi2015monte}%
  \BibitemOpen
  \bibfield  {author} {\bibinfo {author} {\bibfnamefont {S.}~\bibnamefont
  {Bravyi}},\ }\href@noop {} {\bibfield  {journal} {\bibinfo  {journal}
  {Quantum Info. Comput.}\ }\textbf {\bibinfo {volume} {15}},\ \bibinfo {pages}
  {1122} (\bibinfo {year} {2015})},\ \Eprint {https://arxiv.org/abs/1402.2295}
  {arXiv:1402.2295} \BibitemShut {NoStop}%
\bibitem [{\citenamefont {Feynman}(1986)}]{feynman1986quantum}%
  \BibitemOpen
  \bibfield  {author} {\bibinfo {author} {\bibfnamefont {R.~P.}\ \bibnamefont
  {Feynman}},\ }\href {https://doi.org/10.1007/bf01886518} {\bibfield
  {journal} {\bibinfo  {journal} {Foundations of Physics}\ }\textbf {\bibinfo
  {volume} {16}},\ \bibinfo {pages} {507} (\bibinfo {year} {1986})}\BibitemShut
  {NoStop}%
\bibitem [{\citenamefont {Bravyi}\ and\ \citenamefont
  {Hastings}(2016)}]{bravyi2016complexity}%
  \BibitemOpen
  \bibfield  {author} {\bibinfo {author} {\bibfnamefont {S.}~\bibnamefont
  {Bravyi}}\ and\ \bibinfo {author} {\bibfnamefont {M.}~\bibnamefont
  {Hastings}},\ }\href {https://doi.org/10.1007/s00220-016-2787-4} {\bibfield
  {journal} {\bibinfo  {journal} {Communications in Mathematical Physics}\
  }\textbf {\bibinfo {volume} {349}},\ \bibinfo {pages} {1} (\bibinfo {year}
  {2016})},\ \Eprint {https://arxiv.org/abs/1410.0703} {arXiv:1410.0703}
  \BibitemShut {NoStop}%
\bibitem [{\citenamefont {Waite}\ \emph {et~al.}(2023)\citenamefont {Waite},
  \citenamefont {Mann},\ and\ \citenamefont {Elman}}]{hamiltonianjungle2023}%
  \BibitemOpen
  \bibfield  {author} {\bibinfo {author} {\bibfnamefont {G.}~\bibnamefont
  {Waite}}, \bibinfo {author} {\bibfnamefont {R.~L.}\ \bibnamefont {Mann}},\
  and\ \bibinfo {author} {\bibfnamefont {S.~J.}\ \bibnamefont {Elman}},\ }\href
  {https://hamiltonianjungle.xyz} {\bibinfo {title} {{The Hamiltonian Jungle}}}
  (\bibinfo {year} {2023})\BibitemShut {NoStop}%
\bibitem [{\citenamefont {Cubitt}\ and\ \citenamefont
  {Montanaro}(2016)}]{cubitt2016complexity}%
  \BibitemOpen
  \bibfield  {author} {\bibinfo {author} {\bibfnamefont {T.}~\bibnamefont
  {Cubitt}}\ and\ \bibinfo {author} {\bibfnamefont {A.}~\bibnamefont
  {Montanaro}},\ }\href@noop {} {\bibfield  {journal} {\bibinfo  {journal}
  {SIAM Journal on Computing}\ }\textbf {\bibinfo {volume} {45}},\ \bibinfo
  {pages} {268} (\bibinfo {year} {2016})},\ \Eprint
  {https://arxiv.org/abs/1311.3161} {arXiv:1311.3161} \BibitemShut {NoStop}%
\bibitem [{\citenamefont {Piddock}\ and\ \citenamefont
  {Montanaro}(2017)}]{piddock2017complexity}%
  \BibitemOpen
  \bibfield  {author} {\bibinfo {author} {\bibfnamefont {S.}~\bibnamefont
  {Piddock}}\ and\ \bibinfo {author} {\bibfnamefont {A.}~\bibnamefont
  {Montanaro}},\ }\href {https://doi.org/10.5555/3179553.3179559} {\bibfield
  {journal} {\bibinfo  {journal} {Quantum Info. Comput.}\ }\textbf {\bibinfo
  {volume} {17}},\ \bibinfo {pages} {636} (\bibinfo {year} {2017})},\ \Eprint
  {https://arxiv.org/abs/1506.04014} {arXiv:1506.04014} \BibitemShut {NoStop}%
\bibitem [{\citenamefont {Biamonte}\ and\ \citenamefont
  {Love}(2008)}]{biamonte2008realizable}%
  \BibitemOpen
  \bibfield  {author} {\bibinfo {author} {\bibfnamefont {J.~D.}\ \bibnamefont
  {Biamonte}}\ and\ \bibinfo {author} {\bibfnamefont {P.~J.}\ \bibnamefont
  {Love}},\ }\href {https://doi.org/10.1103/physreva.78.012352} {\bibfield
  {journal} {\bibinfo  {journal} {Physical Review A}\ }\textbf {\bibinfo
  {volume} {78}},\ \bibinfo {pages} {012352} (\bibinfo {year} {2008})},\
  \Eprint {https://arxiv.org/abs/0704.1287} {arXiv:0704.1287} \BibitemShut
  {NoStop}%
\bibitem [{\citenamefont {Zhou}\ and\ \citenamefont
  {Aharonov}(2021)}]{zhou2021strongly}%
  \BibitemOpen
  \bibfield  {author} {\bibinfo {author} {\bibfnamefont {L.}~\bibnamefont
  {Zhou}}\ and\ \bibinfo {author} {\bibfnamefont {D.}~\bibnamefont
  {Aharonov}},\ }\href@noop {} {\bibinfo {title} {{Strongly Universal
  {Hamiltonian} Simulators}}} (\bibinfo {year} {2021}),\ \Eprint
  {https://arxiv.org/abs/2102.02991} {arXiv:2102.02991} \BibitemShut {NoStop}%
\bibitem [{\citenamefont {Waite}\ and\ \citenamefont
  {Bremner}(2025)}]{waite2025complexitya}%
  \BibitemOpen
  \bibfield  {author} {\bibinfo {author} {\bibfnamefont {G.}~\bibnamefont
  {Waite}}\ and\ \bibinfo {author} {\bibfnamefont {M.~J.}\ \bibnamefont
  {Bremner}},\ }\href@noop {} {\bibinfo {title} {{The Complexity of Local
  Stoquastic Hamiltonians on 2D Lattices}}} (\bibinfo {year} {2025}),\ \Eprint
  {https://arxiv.org/abs/2502.14244} {arXiv:2502.14244} \BibitemShut {NoStop}%
\bibitem [{\citenamefont {Dür}\ \emph {et~al.}(2000)\citenamefont {Dür},
  \citenamefont {Vidal},\ and\ \citenamefont {Cirac}}]{dur2000three}%
  \BibitemOpen
  \bibfield  {author} {\bibinfo {author} {\bibfnamefont {W.}~\bibnamefont
  {Dür}}, \bibinfo {author} {\bibfnamefont {G.}~\bibnamefont {Vidal}},\ and\
  \bibinfo {author} {\bibfnamefont {J.~I.}\ \bibnamefont {Cirac}},\ }\href
  {https://doi.org/10.1103/physreva.62.062314} {\bibfield  {journal} {\bibinfo
  {journal} {Physical Review A}\ }\textbf {\bibinfo {volume} {62}},\ \bibinfo
  {pages} {062314} (\bibinfo {year} {2000})}\BibitemShut {NoStop}%
\bibitem [{\citenamefont {Weng}\ \emph {et~al.}(2025)\citenamefont {Weng},
  \citenamefont {Yu},\ and\ \citenamefont {Wang}}]{weng2025high}%
  \BibitemOpen
  \bibfield  {author} {\bibinfo {author} {\bibfnamefont {M.}~\bibnamefont
  {Weng}}, \bibinfo {author} {\bibfnamefont {H.}~\bibnamefont {Yu}},\ and\
  \bibinfo {author} {\bibfnamefont {Z.}~\bibnamefont {Wang}},\ }\href@noop {}
  {\bibfield  {journal} {\bibinfo  {journal} {Physical Review A}\ }\textbf
  {\bibinfo {volume} {111}},\ \bibinfo {pages} {053711} (\bibinfo {year}
  {2025})},\ \Eprint {https://arxiv.org/abs/2501.04954} {arXiv:2501.04954}
  \BibitemShut {NoStop}%
\bibitem [{\citenamefont {Gharibian}\ and\ \citenamefont
  {Kempe}(2012)}]{gharibian2012hardness}%
  \BibitemOpen
  \bibfield  {author} {\bibinfo {author} {\bibfnamefont {S.}~\bibnamefont
  {Gharibian}}\ and\ \bibinfo {author} {\bibfnamefont {J.}~\bibnamefont
  {Kempe}},\ }in\ \href {https://doi.org/10.1007/978-3-642-31594-7_33} {\emph
  {\bibinfo {booktitle} {Automata, Languages, and Programming}}}\ (\bibinfo
  {publisher} {Springer Berlin Heidelberg},\ \bibinfo {address} {Berlin,
  Heidelberg},\ \bibinfo {year} {2012})\ pp.\ \bibinfo {pages} {387--398},\
  \Eprint {https://arxiv.org/abs/1209.1055} {arXiv:1209.1055} \BibitemShut
  {NoStop}%
\bibitem [{\citenamefont {Yu}\ \emph {et~al.}(2025)\citenamefont {Yu},
  \citenamefont {Moreno}, \citenamefont {Iosue}, \citenamefont {Bertels},
  \citenamefont {Claudino}, \citenamefont {Fuller}, \citenamefont
  {Groszkowski}, \citenamefont {Humble}, \citenamefont {Jurcevic},
  \citenamefont {Kirby}, \citenamefont {Maier}, \citenamefont {Motta},
  \citenamefont {Pokharel}, \citenamefont {Seif}, \citenamefont {Shehata},
  \citenamefont {Sung}, \citenamefont {Tran}, \citenamefont {Tripathi},
  \citenamefont {Mezzacapo},\ and\ \citenamefont {Sharma}}]{yu2025samplebased}%
  \BibitemOpen
  \bibfield  {author} {\bibinfo {author} {\bibfnamefont {J.}~\bibnamefont
  {Yu}}, \bibinfo {author} {\bibfnamefont {J.~R.}\ \bibnamefont {Moreno}},
  \bibinfo {author} {\bibfnamefont {J.}~\bibnamefont {Iosue}}, \bibinfo
  {author} {\bibfnamefont {L.}~\bibnamefont {Bertels}}, \bibinfo {author}
  {\bibfnamefont {D.}~\bibnamefont {Claudino}}, \bibinfo {author}
  {\bibfnamefont {B.}~\bibnamefont {Fuller}}, \bibinfo {author} {\bibfnamefont
  {P.}~\bibnamefont {Groszkowski}}, \bibinfo {author} {\bibfnamefont {T.~S.}\
  \bibnamefont {Humble}}, \bibinfo {author} {\bibfnamefont {P.}~\bibnamefont
  {Jurcevic}}, \bibinfo {author} {\bibfnamefont {W.}~\bibnamefont {Kirby}},
  \bibinfo {author} {\bibfnamefont {T.~A.}\ \bibnamefont {Maier}}, \bibinfo
  {author} {\bibfnamefont {M.}~\bibnamefont {Motta}}, \bibinfo {author}
  {\bibfnamefont {B.}~\bibnamefont {Pokharel}}, \bibinfo {author}
  {\bibfnamefont {A.}~\bibnamefont {Seif}}, \bibinfo {author} {\bibfnamefont
  {A.}~\bibnamefont {Shehata}}, \bibinfo {author} {\bibfnamefont {K.~J.}\
  \bibnamefont {Sung}}, \bibinfo {author} {\bibfnamefont {M.~C.}\ \bibnamefont
  {Tran}}, \bibinfo {author} {\bibfnamefont {V.}~\bibnamefont {Tripathi}},
  \bibinfo {author} {\bibfnamefont {A.}~\bibnamefont {Mezzacapo}},\ and\
  \bibinfo {author} {\bibfnamefont {K.}~\bibnamefont {Sharma}},\ }\href@noop {}
  {\bibinfo {title} {Sample-based {Krylov Quantum Diagonalization}}} (\bibinfo
  {year} {2025}),\ \Eprint {https://arxiv.org/abs/2501.09702}
  {arXiv:2501.09702} \BibitemShut {NoStop}%
\bibitem [{\citenamefont {Bremner}\ \emph {et~al.}(2022)\citenamefont
  {Bremner}, \citenamefont {Ji}, \citenamefont {Mann}, \citenamefont
  {Mathieson}, \citenamefont {Morales},\ and\ \citenamefont
  {Shaw}}]{bremner2022quantum}%
  \BibitemOpen
  \bibfield  {author} {\bibinfo {author} {\bibfnamefont {M.~J.}\ \bibnamefont
  {Bremner}}, \bibinfo {author} {\bibfnamefont {Z.}~\bibnamefont {Ji}},
  \bibinfo {author} {\bibfnamefont {R.~L.}\ \bibnamefont {Mann}}, \bibinfo
  {author} {\bibfnamefont {L.}~\bibnamefont {Mathieson}}, \bibinfo {author}
  {\bibfnamefont {M.~E.~S.}\ \bibnamefont {Morales}},\ and\ \bibinfo {author}
  {\bibfnamefont {A.~T.~E.}\ \bibnamefont {Shaw}},\ }\href@noop {} {\bibinfo
  {title} {{Quantum Parameterized Complexity}}} (\bibinfo {year} {2022}),\
  \Eprint {https://arxiv.org/abs/2203.08002} {arXiv:2203.08002} \BibitemShut
  {NoStop}%
\bibitem [{\citenamefont {Perez-Garcia}\ \emph {et~al.}(2007)\citenamefont
  {Perez-Garcia}, \citenamefont {Verstraete}, \citenamefont {Wolf},\ and\
  \citenamefont {Cirac}}]{perez2007matrix}%
  \BibitemOpen
  \bibfield  {author} {\bibinfo {author} {\bibfnamefont {D.}~\bibnamefont
  {Perez-Garcia}}, \bibinfo {author} {\bibfnamefont {F.}~\bibnamefont
  {Verstraete}}, \bibinfo {author} {\bibfnamefont {M.~M.}\ \bibnamefont
  {Wolf}},\ and\ \bibinfo {author} {\bibfnamefont {J.~I.}\ \bibnamefont
  {Cirac}},\ }\href@noop {} {\bibinfo {title} {{Matrix Product State
  Representations}}} (\bibinfo {year} {2007}),\ \Eprint
  {https://arxiv.org/abs/quant-ph/0608197} {arXiv:quant-ph/0608197}
  \BibitemShut {NoStop}%
\bibitem [{\citenamefont {Guo}\ \emph {et~al.}(2023)\citenamefont {Guo},
  \citenamefont {Fan}, \citenamefont {Xu},\ and\ \citenamefont
  {Shang}}]{guo2023differentiable}%
  \BibitemOpen
  \bibfield  {author} {\bibinfo {author} {\bibfnamefont {C.}~\bibnamefont
  {Guo}}, \bibinfo {author} {\bibfnamefont {Y.}~\bibnamefont {Fan}}, \bibinfo
  {author} {\bibfnamefont {Z.}~\bibnamefont {Xu}},\ and\ \bibinfo {author}
  {\bibfnamefont {H.}~\bibnamefont {Shang}},\ }\href
  {https://doi.org/10.22331/q-2023-12-04-1192} {\bibfield  {journal} {\bibinfo
  {journal} {Quantum}\ }\textbf {\bibinfo {volume} {7}},\ \bibinfo {pages}
  {1192} (\bibinfo {year} {2023})}\BibitemShut {NoStop}%
\bibitem [{\citenamefont {Xu}\ \emph {et~al.}(2024)\citenamefont {Xu},
  \citenamefont {Fan}, \citenamefont {Guo},\ and\ \citenamefont
  {Shang}}]{xu2024mps}%
  \BibitemOpen
  \bibfield  {author} {\bibinfo {author} {\bibfnamefont {Z.}~\bibnamefont
  {Xu}}, \bibinfo {author} {\bibfnamefont {Y.}~\bibnamefont {Fan}}, \bibinfo
  {author} {\bibfnamefont {C.}~\bibnamefont {Guo}},\ and\ \bibinfo {author}
  {\bibfnamefont {H.}~\bibnamefont {Shang}},\ }\href
  {https://doi.org/10.1016/j.cpc.2023.108897} {\bibfield  {journal} {\bibinfo
  {journal} {Computer Physics Communications}\ }\textbf {\bibinfo {volume}
  {294}},\ \bibinfo {pages} {108897} (\bibinfo {year} {2024})}\BibitemShut
  {NoStop}%
\bibitem [{\citenamefont {Mörchen}\ \emph {et~al.}(2024)\citenamefont
  {Mörchen}, \citenamefont {Low}, \citenamefont {Weymuth}, \citenamefont
  {Liu}, \citenamefont {Troyer},\ and\ \citenamefont
  {Reiher}}]{morchen2024classification}%
  \BibitemOpen
  \bibfield  {author} {\bibinfo {author} {\bibfnamefont {M.}~\bibnamefont
  {Mörchen}}, \bibinfo {author} {\bibfnamefont {G.~H.}\ \bibnamefont {Low}},
  \bibinfo {author} {\bibfnamefont {T.}~\bibnamefont {Weymuth}}, \bibinfo
  {author} {\bibfnamefont {H.}~\bibnamefont {Liu}}, \bibinfo {author}
  {\bibfnamefont {M.}~\bibnamefont {Troyer}},\ and\ \bibinfo {author}
  {\bibfnamefont {M.}~\bibnamefont {Reiher}},\ }\href@noop {} {\bibinfo {title}
  {{Classification of electronic structures and state preparation for quantum
  computation of reaction chemistry}}} (\bibinfo {year} {2024}),\ \Eprint
  {https://arxiv.org/abs/2409.08910} {arXiv:2409.08910} \BibitemShut {NoStop}%
\bibitem [{\citenamefont {Jaderberg}\ \emph {et~al.}(2025)\citenamefont
  {Jaderberg}, \citenamefont {Pennington}, \citenamefont {Marshall},
  \citenamefont {Anderson}, \citenamefont {Agarwal}, \citenamefont {Lindoy},
  \citenamefont {Rungger}, \citenamefont {Mensa},\ and\ \citenamefont
  {Crain}}]{jaderberg2025variational}%
  \BibitemOpen
  \bibfield  {author} {\bibinfo {author} {\bibfnamefont {B.}~\bibnamefont
  {Jaderberg}}, \bibinfo {author} {\bibfnamefont {G.}~\bibnamefont
  {Pennington}}, \bibinfo {author} {\bibfnamefont {K.~V.}\ \bibnamefont
  {Marshall}}, \bibinfo {author} {\bibfnamefont {L.~W.}\ \bibnamefont
  {Anderson}}, \bibinfo {author} {\bibfnamefont {A.}~\bibnamefont {Agarwal}},
  \bibinfo {author} {\bibfnamefont {L.~P.}\ \bibnamefont {Lindoy}}, \bibinfo
  {author} {\bibfnamefont {I.}~\bibnamefont {Rungger}}, \bibinfo {author}
  {\bibfnamefont {S.}~\bibnamefont {Mensa}},\ and\ \bibinfo {author}
  {\bibfnamefont {J.}~\bibnamefont {Crain}},\ }\href@noop {} {\bibinfo {title}
  {{Variational preparation of normal matrix product states on quantum
  computers}}} (\bibinfo {year} {2025}),\ \Eprint
  {https://arxiv.org/abs/2503.09683} {arXiv:2503.09683} \BibitemShut {NoStop}%
\bibitem [{\citenamefont {Jordan}\ and\ \citenamefont
  {Wigner}(1928)}]{jordan1928uber}%
  \BibitemOpen
  \bibfield  {author} {\bibinfo {author} {\bibfnamefont {P.}~\bibnamefont
  {Jordan}}\ and\ \bibinfo {author} {\bibfnamefont {E.}~\bibnamefont
  {Wigner}},\ }\href {https://doi.org/10.1007/bf01331938} {\bibfield  {journal}
  {\bibinfo  {journal} {Z. Phys}\ }\textbf {\bibinfo {volume} {47}},\ \bibinfo
  {pages} {631} (\bibinfo {year} {1928})}\BibitemShut {NoStop}%
\bibitem [{\citenamefont {Valiant}(2008)}]{valiant2008holographic}%
  \BibitemOpen
  \bibfield  {author} {\bibinfo {author} {\bibfnamefont {L.~G.}\ \bibnamefont
  {Valiant}},\ }\href {https://doi.org/10.1137/070682575} {\bibfield  {journal}
  {\bibinfo  {journal} {SIAM Journal on Computing}\ }\textbf {\bibinfo {volume}
  {37}},\ \bibinfo {pages} {1565} (\bibinfo {year} {2008})}\BibitemShut
  {NoStop}%
\bibitem [{\citenamefont {Bravyi}\ and\ \citenamefont
  {Gosset}(2017)}]{bravyi2017complexity}%
  \BibitemOpen
  \bibfield  {author} {\bibinfo {author} {\bibfnamefont {S.}~\bibnamefont
  {Bravyi}}\ and\ \bibinfo {author} {\bibfnamefont {D.}~\bibnamefont
  {Gosset}},\ }\href {https://doi.org/10.1007/s00220-017-2976-9} {\bibfield
  {journal} {\bibinfo  {journal} {Communications in Mathematical Physics}\
  }\textbf {\bibinfo {volume} {356}},\ \bibinfo {pages} {451} (\bibinfo {year}
  {2017})},\ \Eprint {https://arxiv.org/abs/1609.00735} {arXiv:1609.00735}
  \BibitemShut {NoStop}%
\bibitem [{\citenamefont {Fendley}(2019)}]{fendley2019free}%
  \BibitemOpen
  \bibfield  {author} {\bibinfo {author} {\bibfnamefont {P.}~\bibnamefont
  {Fendley}},\ }\href {https://doi.org/10.1088/1751-8121/ab305d} {\bibfield
  {journal} {\bibinfo  {journal} {Journal of Physics A: Mathematical and
  Theoretical}\ }\textbf {\bibinfo {volume} {52}},\ \bibinfo {pages} {335002}
  (\bibinfo {year} {2019})},\ \Eprint {https://arxiv.org/abs/1901.08078}
  {arXiv:1901.08078} \BibitemShut {NoStop}%
\bibitem [{\citenamefont {Elman}\ \emph {et~al.}(2021)\citenamefont {Elman},
  \citenamefont {Chapman},\ and\ \citenamefont {Flammia}}]{elman2021free}%
  \BibitemOpen
  \bibfield  {author} {\bibinfo {author} {\bibfnamefont {S.~J.}\ \bibnamefont
  {Elman}}, \bibinfo {author} {\bibfnamefont {A.}~\bibnamefont {Chapman}},\
  and\ \bibinfo {author} {\bibfnamefont {S.~T.}\ \bibnamefont {Flammia}},\
  }\href {https://doi.org/10.1007/s00220-021-04220-w} {\bibfield  {journal}
  {\bibinfo  {journal} {Communications in Mathematical Physics}\ }\textbf
  {\bibinfo {volume} {388}},\ \bibinfo {pages} {969} (\bibinfo {year}
  {2021})},\ \Eprint {https://arxiv.org/abs/2012.07857} {arXiv:2012.07857}
  \BibitemShut {NoStop}%
\bibitem [{\citenamefont {Chapman}\ \emph {et~al.}(2023)\citenamefont
  {Chapman}, \citenamefont {Elman},\ and\ \citenamefont
  {Mann}}]{chapman2023unified}%
  \BibitemOpen
  \bibfield  {author} {\bibinfo {author} {\bibfnamefont {A.}~\bibnamefont
  {Chapman}}, \bibinfo {author} {\bibfnamefont {S.~J.}\ \bibnamefont {Elman}},\
  and\ \bibinfo {author} {\bibfnamefont {R.~L.}\ \bibnamefont {Mann}},\ }\href
  {https://arxiv.org/abs/2305.15625} {\bibinfo {title} {{A Unified
  Graph-Theoretic Framework for Free-Fermion Solvability}}} (\bibinfo {year}
  {2023}),\ \Eprint {https://arxiv.org/abs/2305.15625} {arXiv:2305.15625}
  \BibitemShut {NoStop}%
\bibitem [{\citenamefont {Freedman}\ and\ \citenamefont
  {Hastings}(2014)}]{freedman2014quantum}%
  \BibitemOpen
  \bibfield  {author} {\bibinfo {author} {\bibfnamefont {M.~H.}\ \bibnamefont
  {Freedman}}\ and\ \bibinfo {author} {\bibfnamefont {M.~B.}\ \bibnamefont
  {Hastings}},\ }\href@noop {} {\bibfield  {journal} {\bibinfo  {journal}
  {Quantum Info. Comput.}\ }\textbf {\bibinfo {volume} {14}},\ \bibinfo {pages}
  {144} (\bibinfo {year} {2014})},\ \Eprint {https://arxiv.org/abs/1301.1363v2}
  {arXiv:1301.1363v2} \BibitemShut {NoStop}%
\bibitem [{\citenamefont {Anshu}\ \emph {et~al.}(2023)\citenamefont {Anshu},
  \citenamefont {Breuckmann},\ and\ \citenamefont {Nirkhe}}]{anshu2023nlts}%
  \BibitemOpen
  \bibfield  {author} {\bibinfo {author} {\bibfnamefont {A.}~\bibnamefont
  {Anshu}}, \bibinfo {author} {\bibfnamefont {N.~P.}\ \bibnamefont
  {Breuckmann}},\ and\ \bibinfo {author} {\bibfnamefont {C.}~\bibnamefont
  {Nirkhe}},\ }in\ \href {https://doi.org/10.1145/3564246.3585114} {\emph
  {\bibinfo {booktitle} {Proceedings of the 55th Annual ACM Symposium on Theory
  of Computing}}}\ (\bibinfo  {publisher} {Association for Computing
  Machinery},\ \bibinfo {year} {2023})\ pp.\ \bibinfo {pages} {1090--1096},\
  \Eprint {https://arxiv.org/abs/2206.13228} {arXiv:2206.13228} \BibitemShut
  {NoStop}%
\bibitem [{\citenamefont {Barahona}(1982)}]{barahona1982computational}%
  \BibitemOpen
  \bibfield  {author} {\bibinfo {author} {\bibfnamefont {F.}~\bibnamefont
  {Barahona}},\ }\href {https://doi.org/10.1088/0305-4470/15/10/028} {\bibfield
   {journal} {\bibinfo  {journal} {Journal of Physics A: Mathematical and
  General}\ }\textbf {\bibinfo {volume} {15}},\ \bibinfo {pages} {3241}
  (\bibinfo {year} {1982})}\BibitemShut {NoStop}%
\bibitem [{\citenamefont {Gharibian}\ \emph {et~al.}(2022)\citenamefont
  {Gharibian}, \citenamefont {Hayakawa}, \citenamefont {Gall},\ and\
  \citenamefont {Morimae}}]{gharibian2022improved}%
  \BibitemOpen
  \bibfield  {author} {\bibinfo {author} {\bibfnamefont {S.}~\bibnamefont
  {Gharibian}}, \bibinfo {author} {\bibfnamefont {R.}~\bibnamefont {Hayakawa}},
  \bibinfo {author} {\bibfnamefont {F.~L.}\ \bibnamefont {Gall}},\ and\
  \bibinfo {author} {\bibfnamefont {T.}~\bibnamefont {Morimae}},\ }\href@noop
  {} {\bibinfo {title} {{Improved Hardness Results for the Guided Local
  Hamiltonian Problem --- Version 2}}} (\bibinfo {year} {2022}),\ \Eprint
  {https://arxiv.org/abs/2207.10250v2} {arXiv:2207.10250v2} \BibitemShut
  {NoStop}%
\bibitem [{\citenamefont {Lin}(2022)}]{lin2022lecture}%
  \BibitemOpen
  \bibfield  {author} {\bibinfo {author} {\bibfnamefont {L.}~\bibnamefont
  {Lin}},\ }\href@noop {} {\bibinfo {title} {{Lecture Notes on Quantum
  Algorithms for Scientific Computation}}} (\bibinfo {year} {2022}),\ \Eprint
  {https://arxiv.org/abs/2201.08309} {arXiv:2201.08309} \BibitemShut {NoStop}%
\bibitem [{\citenamefont {Aharonov}\ and\ \citenamefont
  {Ta-Shma}(2003)}]{aharonov2003adiabatic}%
  \BibitemOpen
  \bibfield  {author} {\bibinfo {author} {\bibfnamefont {D.}~\bibnamefont
  {Aharonov}}\ and\ \bibinfo {author} {\bibfnamefont {A.}~\bibnamefont
  {Ta-Shma}},\ }in\ \href {https://doi.org/10.1145/780542.780546} {\emph
  {\bibinfo {booktitle} {Proceedings of the Thirty-Fifth Annual ACM Symposium
  on Theory of Computing}}}\ (\bibinfo  {publisher} {Association for Computing
  Machinery},\ \bibinfo {year} {2003})\ pp.\ \bibinfo {pages} {20--29},\
  \Eprint {https://arxiv.org/abs/quant-ph/0301023} {arXiv:quant-ph/0301023}
  \BibitemShut {NoStop}%
\bibitem [{\citenamefont {Lin}\ and\ \citenamefont {Tong}(2020)}]{lin2020near}%
  \BibitemOpen
  \bibfield  {author} {\bibinfo {author} {\bibfnamefont {L.}~\bibnamefont
  {Lin}}\ and\ \bibinfo {author} {\bibfnamefont {Y.}~\bibnamefont {Tong}},\
  }\href {https://doi.org/10.22331/q-2020-12-14-372} {\bibfield  {journal}
  {\bibinfo  {journal} {Quantum}\ }\textbf {\bibinfo {volume} {4}},\ \bibinfo
  {pages} {372} (\bibinfo {year} {2020})},\ \Eprint
  {https://arxiv.org/abs/2002.12508} {arXiv:2002.12508} \BibitemShut {NoStop}%
\bibitem [{\citenamefont {Ramacciotti}\ \emph {et~al.}(2024)\citenamefont
  {Ramacciotti}, \citenamefont {Lefterovici},\ and\ \citenamefont
  {Rotundo}}]{ramacciotti2024simple}%
  \BibitemOpen
  \bibfield  {author} {\bibinfo {author} {\bibfnamefont {D.}~\bibnamefont
  {Ramacciotti}}, \bibinfo {author} {\bibfnamefont {A.~I.}\ \bibnamefont
  {Lefterovici}},\ and\ \bibinfo {author} {\bibfnamefont {A.~F.}\ \bibnamefont
  {Rotundo}},\ }\href@noop {} {\bibfield  {journal} {\bibinfo  {journal}
  {Physical Review A}\ }\textbf {\bibinfo {volume} {110}},\ \bibinfo {pages}
  {032609} (\bibinfo {year} {2024})},\ \Eprint
  {https://arxiv.org/abs/2310.19309} {arXiv:2310.19309} \BibitemShut {NoStop}%
\bibitem [{\citenamefont {Schwarz}\ and\ \citenamefont {den
  Nest}(2013)}]{schwarz2013simulating}%
  \BibitemOpen
  \bibfield  {author} {\bibinfo {author} {\bibfnamefont {M.}~\bibnamefont
  {Schwarz}}\ and\ \bibinfo {author} {\bibfnamefont {M.~V.}\ \bibnamefont {den
  Nest}},\ }\href@noop {} {\bibinfo {title} {{Simulating Quantum Circuits with
  Sparse Output Distributions}}} (\bibinfo {year} {2013}),\ \Eprint
  {https://arxiv.org/abs/1310.6749} {arXiv:1310.6749} \BibitemShut {NoStop}%
\bibitem [{\citenamefont {Gall}(2024)}]{legall2024classical}%
  \BibitemOpen
  \bibfield  {author} {\bibinfo {author} {\bibfnamefont {F.~L.}\ \bibnamefont
  {Gall}},\ }\href@noop {} {\bibinfo {title} {{Classical Algorithms for
  Constant Approximation of the Ground State Energy of Local Hamiltonians}}}
  (\bibinfo {year} {2024}),\ \Eprint {https://arxiv.org/abs/2410.21833}
  {arXiv:2410.21833} \BibitemShut {NoStop}%
\bibitem [{\citenamefont {Fomichev}\ \emph {et~al.}(2024)\citenamefont
  {Fomichev}, \citenamefont {Hejazi}, \citenamefont {Zini}, \citenamefont
  {Kiser}, \citenamefont {Fraxanet}, \citenamefont {Casares}, \citenamefont
  {Delgado}, \citenamefont {Huh}, \citenamefont {Voigt}, \citenamefont
  {Mueller} \emph {et~al.}}]{fomichev2024initial}%
  \BibitemOpen
  \bibfield  {author} {\bibinfo {author} {\bibfnamefont {S.}~\bibnamefont
  {Fomichev}}, \bibinfo {author} {\bibfnamefont {K.}~\bibnamefont {Hejazi}},
  \bibinfo {author} {\bibfnamefont {M.~S.}\ \bibnamefont {Zini}}, \bibinfo
  {author} {\bibfnamefont {M.}~\bibnamefont {Kiser}}, \bibinfo {author}
  {\bibfnamefont {J.}~\bibnamefont {Fraxanet}}, \bibinfo {author}
  {\bibfnamefont {P.~A.~M.}\ \bibnamefont {Casares}}, \bibinfo {author}
  {\bibfnamefont {A.}~\bibnamefont {Delgado}}, \bibinfo {author} {\bibfnamefont
  {J.}~\bibnamefont {Huh}}, \bibinfo {author} {\bibfnamefont {A.-C.}\
  \bibnamefont {Voigt}}, \bibinfo {author} {\bibfnamefont {J.~E.}\ \bibnamefont
  {Mueller}}, \emph {et~al.},\ }\href@noop {} {\bibfield  {journal} {\bibinfo
  {journal} {PRX Quantum}\ }\textbf {\bibinfo {volume} {5}},\ \bibinfo {pages}
  {040339} (\bibinfo {year} {2024})},\ \Eprint
  {https://arxiv.org/abs/2310.18410} {arXiv:2310.18410} \BibitemShut {NoStop}%
\bibitem [{\citenamefont {Feniou}\ \emph {et~al.}(2024)\citenamefont {Feniou},
  \citenamefont {Adjoua}, \citenamefont {Claudon}, \citenamefont {Zylberman},
  \citenamefont {Giner},\ and\ \citenamefont {Piquemal}}]{feniou2024sparse}%
  \BibitemOpen
  \bibfield  {author} {\bibinfo {author} {\bibfnamefont {C.}~\bibnamefont
  {Feniou}}, \bibinfo {author} {\bibfnamefont {O.}~\bibnamefont {Adjoua}},
  \bibinfo {author} {\bibfnamefont {B.}~\bibnamefont {Claudon}}, \bibinfo
  {author} {\bibfnamefont {J.}~\bibnamefont {Zylberman}}, \bibinfo {author}
  {\bibfnamefont {E.}~\bibnamefont {Giner}},\ and\ \bibinfo {author}
  {\bibfnamefont {J.-P.}\ \bibnamefont {Piquemal}},\ }\href
  {https://doi.org/10.1021/acs.jpclett.3c03159} {\bibfield  {journal} {\bibinfo
   {journal} {The Journal of Physical Chemistry Letters}\ }\textbf {\bibinfo
  {volume} {15}},\ \bibinfo {pages} {3197} (\bibinfo {year}
  {2024})}\BibitemShut {NoStop}%
\bibitem [{\citenamefont {Childs}\ \emph {et~al.}(2014)\citenamefont {Childs},
  \citenamefont {Gosset},\ and\ \citenamefont {Webb}}]{childs2014bose}%
  \BibitemOpen
  \bibfield  {author} {\bibinfo {author} {\bibfnamefont {A.~M.}\ \bibnamefont
  {Childs}}, \bibinfo {author} {\bibfnamefont {D.}~\bibnamefont {Gosset}},\
  and\ \bibinfo {author} {\bibfnamefont {Z.}~\bibnamefont {Webb}},\ }\bibinfo
  {title} {{The Bose-Hubbard Model is QMA-complete}},\ in\ \href
  {https://doi.org/10.1007/978-3-662-43948-7_26} {\emph {\bibinfo {booktitle}
  {Automata, Languages, and Programming}}}\ (\bibinfo  {publisher} {Springer
  Berlin Heidelberg},\ \bibinfo {year} {2014})\ pp.\ \bibinfo {pages}
  {308--319},\ \Eprint {https://arxiv.org/abs/1311.3297} {arXiv:1311.3297}
  \BibitemShut {NoStop}%
\bibitem [{\citenamefont {Gily{\'{e}}n}\ \emph {et~al.}(2019)\citenamefont
  {Gily{\'{e}}n}, \citenamefont {Su}, \citenamefont {Low},\ and\ \citenamefont
  {Wiebe}}]{gilyen2019quantum}%
  \BibitemOpen
  \bibfield  {author} {\bibinfo {author} {\bibfnamefont {A.}~\bibnamefont
  {Gily{\'{e}}n}}, \bibinfo {author} {\bibfnamefont {Y.}~\bibnamefont {Su}},
  \bibinfo {author} {\bibfnamefont {G.~H.}\ \bibnamefont {Low}},\ and\ \bibinfo
  {author} {\bibfnamefont {N.}~\bibnamefont {Wiebe}},\ }in\ \href
  {https://doi.org/10.1145/3313276.3316366} {\emph {\bibinfo {booktitle}
  {Proceedings of the 51st Annual {ACM} {SIGACT} Symposium on Theory of
  Computing}}}\ (\bibinfo  {publisher} {{ACM}},\ \bibinfo {year} {2019})\ pp.\
  \bibinfo {pages} {193--204},\ \Eprint {https://arxiv.org/abs/1806.01838}
  {arXiv:1806.01838} \BibitemShut {NoStop}%
\bibitem [{\citenamefont {Jethwani}\ \emph {et~al.}(2020)\citenamefont
  {Jethwani}, \citenamefont {Le~Gall},\ and\ \citenamefont
  {Singh}}]{jethwani2020quantum}%
  \BibitemOpen
  \bibfield  {author} {\bibinfo {author} {\bibfnamefont {D.}~\bibnamefont
  {Jethwani}}, \bibinfo {author} {\bibfnamefont {F.}~\bibnamefont {Le~Gall}},\
  and\ \bibinfo {author} {\bibfnamefont {S.~K.}\ \bibnamefont {Singh}},\ }in\
  \href {https://doi.org/10.4230/LIPIcs.MFCS.2020.53} {\emph {\bibinfo
  {booktitle} {45th International Symposium on Mathematical Foundations of
  Computer Science (MFCS 2020)}}},\ \bibinfo {series} {Leibniz International
  Proceedings in Informatics (LIPIcs)}, Vol.\ \bibinfo {volume} {170}\
  (\bibinfo  {publisher} {Schloss Dagstuhl -- Leibniz-Zentrum f{\"u}r
  Informatik},\ \bibinfo {year} {2020})\ pp.\ \bibinfo {pages} {53:1--53:14},\
  \Eprint {https://arxiv.org/abs/1910.05699} {arXiv:1910.05699} \BibitemShut
  {NoStop}%
\bibitem [{\citenamefont {Chia}\ \emph {et~al.}(2020)\citenamefont {Chia},
  \citenamefont {Gilyén}, \citenamefont {Li}, \citenamefont {Lin},
  \citenamefont {Tang},\ and\ \citenamefont {Wang}}]{chia2020sampling}%
  \BibitemOpen
  \bibfield  {author} {\bibinfo {author} {\bibfnamefont {N.-H.}\ \bibnamefont
  {Chia}}, \bibinfo {author} {\bibfnamefont {A.}~\bibnamefont {Gilyén}},
  \bibinfo {author} {\bibfnamefont {T.}~\bibnamefont {Li}}, \bibinfo {author}
  {\bibfnamefont {H.-H.}\ \bibnamefont {Lin}}, \bibinfo {author} {\bibfnamefont
  {E.}~\bibnamefont {Tang}},\ and\ \bibinfo {author} {\bibfnamefont
  {C.}~\bibnamefont {Wang}},\ }in\ \href
  {https://doi.org/10.1145/3357713.3384314} {\emph {\bibinfo {booktitle}
  {Proceedings of the 52nd Annual ACM SIGACT Symposium on Theory of
  Computing}}},\ \bibinfo {series and number} {STOC '20}\ (\bibinfo
  {publisher} {ACM},\ \bibinfo {year} {2020})\ pp.\ \bibinfo {pages}
  {387--400},\ \Eprint {https://arxiv.org/abs/1910.06151} {arXiv:1910.06151}
  \BibitemShut {NoStop}%
\bibitem [{\citenamefont {Low}\ and\ \citenamefont
  {Chuang}(2017)}]{low2017hamiltonian}%
  \BibitemOpen
  \bibfield  {author} {\bibinfo {author} {\bibfnamefont {G.~H.}\ \bibnamefont
  {Low}}\ and\ \bibinfo {author} {\bibfnamefont {I.~L.}\ \bibnamefont
  {Chuang}},\ }\href@noop {} {\bibinfo {title} {{H}amiltonian simulation by
  uniform spectral amplification}} (\bibinfo {year} {2017}),\ \Eprint
  {https://arxiv.org/abs/1707.05391} {arXiv:1707.05391} \BibitemShut {NoStop}%
\bibitem [{\citenamefont {Liu}\ \emph {et~al.}(2007)\citenamefont {Liu},
  \citenamefont {Christandl},\ and\ \citenamefont
  {Verstraete}}]{liu2007quantum}%
  \BibitemOpen
  \bibfield  {author} {\bibinfo {author} {\bibfnamefont {Y.-K.}\ \bibnamefont
  {Liu}}, \bibinfo {author} {\bibfnamefont {M.}~\bibnamefont {Christandl}},\
  and\ \bibinfo {author} {\bibfnamefont {F.}~\bibnamefont {Verstraete}},\
  }\href {https://doi.org/10.1103/PhysRevLett.98.110503} {\bibfield  {journal}
  {\bibinfo  {journal} {Physical review letters}\ }\textbf {\bibinfo {volume}
  {98}},\ \bibinfo {pages} {110503} (\bibinfo {year} {2007})},\ \Eprint
  {https://arxiv.org/abs/quant-ph/0609125} {arXiv:quant-ph/0609125}
  \BibitemShut {NoStop}%
\bibitem [{\citenamefont {Cao}\ and\ \citenamefont
  {Nagaj}(2015)}]{cao2015perturbative}%
  \BibitemOpen
  \bibfield  {author} {\bibinfo {author} {\bibfnamefont {Y.}~\bibnamefont
  {Cao}}\ and\ \bibinfo {author} {\bibfnamefont {D.}~\bibnamefont {Nagaj}},\
  }\href {https://doi.org/10.5555/2871363.2871370} {\bibfield  {journal}
  {\bibinfo  {journal} {Quantum Info. Comput.}\ }\textbf {\bibinfo {volume}
  {15}},\ \bibinfo {pages} {1197} (\bibinfo {year} {2015})},\ \Eprint
  {https://arxiv.org/abs/1408.5881} {arXiv:1408.5881} \BibitemShut {NoStop}%
\bibitem [{\citenamefont {Nagaj}\ and\ \citenamefont
  {Mozes}(2007)}]{nagaj2007new}%
  \BibitemOpen
  \bibfield  {author} {\bibinfo {author} {\bibfnamefont {D.}~\bibnamefont
  {Nagaj}}\ and\ \bibinfo {author} {\bibfnamefont {S.}~\bibnamefont {Mozes}},\
  }\href {https://doi.org/10.1063/1.2748377} {\bibfield  {journal} {\bibinfo
  {journal} {Journal of Mathematical Physics}\ }\textbf {\bibinfo {volume}
  {48}},\ \bibinfo {pages} {072104} (\bibinfo {year} {2007})},\ \Eprint
  {https://arxiv.org/abs/quant-ph/0612113} {arXiv:quant-ph/0612113}
  \BibitemShut {NoStop}%
\bibitem [{\citenamefont {Hallgren}\ \emph {et~al.}(2013)\citenamefont
  {Hallgren}, \citenamefont {Nagaj},\ and\ \citenamefont
  {Narayanaswami}}]{hallgren2013local}%
  \BibitemOpen
  \bibfield  {author} {\bibinfo {author} {\bibfnamefont {S.}~\bibnamefont
  {Hallgren}}, \bibinfo {author} {\bibfnamefont {D.}~\bibnamefont {Nagaj}},\
  and\ \bibinfo {author} {\bibfnamefont {S.}~\bibnamefont {Narayanaswami}},\
  }\href@noop {} {\bibfield  {journal} {\bibinfo  {journal} {Quantum
  Information \& Computation}\ }\textbf {\bibinfo {volume} {13}},\ \bibinfo
  {pages} {721} (\bibinfo {year} {2013})},\ \Eprint
  {https://arxiv.org/abs/1312.1469} {arXiv:1312.1469} \BibitemShut {NoStop}%
\bibitem [{\citenamefont {Kay}(2007)}]{kay2007quantum}%
  \BibitemOpen
  \bibfield  {author} {\bibinfo {author} {\bibfnamefont {A.}~\bibnamefont
  {Kay}},\ }\href@noop {} {\bibfield  {journal} {\bibinfo  {journal} {Physical
  Review A—Atomic, Molecular, and Optical Physics}\ }\textbf {\bibinfo
  {volume} {76}},\ \bibinfo {pages} {030307} (\bibinfo {year} {2007})},\
  \Eprint {https://arxiv.org/abs/0704.3142} {arXiv:0704.3142} \BibitemShut
  {NoStop}%
\bibitem [{\citenamefont {Ambainis}(2014)}]{ambainis2014physical}%
  \BibitemOpen
  \bibfield  {author} {\bibinfo {author} {\bibfnamefont {A.}~\bibnamefont
  {Ambainis}},\ }in\ \href {https://doi.org/10.1109/ccc.2014.12} {\emph
  {\bibinfo {booktitle} {2014 IEEE 29th Conference on Computational Complexity
  (CCC)}}}\ (\bibinfo  {publisher} {IEEE},\ \bibinfo {year} {2014})\ pp.\
  \bibinfo {pages} {32--43},\ \Eprint {https://arxiv.org/abs/1312.4758}
  {arXiv:1312.4758} \BibitemShut {NoStop}%
\bibitem [{\citenamefont {Anshu}\ \emph {et~al.}(2024)\citenamefont {Anshu},
  \citenamefont {Breuckmann},\ and\ \citenamefont {Nguyen}}]{anshu2024circuit}%
  \BibitemOpen
  \bibfield  {author} {\bibinfo {author} {\bibfnamefont {A.}~\bibnamefont
  {Anshu}}, \bibinfo {author} {\bibfnamefont {N.~P.}\ \bibnamefont
  {Breuckmann}},\ and\ \bibinfo {author} {\bibfnamefont {Q.~T.}\ \bibnamefont
  {Nguyen}},\ }in\ \href {https://doi.org/10.1145/3618260.3649690} {\emph
  {\bibinfo {booktitle} {Proceedings of the 56th Annual ACM Symposium on Theory
  of Computing}}},\ \bibinfo {series and number} {STOC '24}\ (\bibinfo
  {publisher} {ACM},\ \bibinfo {year} {2024})\ \Eprint
  {https://arxiv.org/abs/2309.16475} {arXiv:2309.16475} \BibitemShut {NoStop}%
\bibitem [{\citenamefont {Bausch}\ \emph {et~al.}(2017)\citenamefont {Bausch},
  \citenamefont {Cubitt},\ and\ \citenamefont {Ozols}}]{bausch2017complexity}%
  \BibitemOpen
  \bibfield  {author} {\bibinfo {author} {\bibfnamefont {J.}~\bibnamefont
  {Bausch}}, \bibinfo {author} {\bibfnamefont {T.}~\bibnamefont {Cubitt}},\
  and\ \bibinfo {author} {\bibfnamefont {M.}~\bibnamefont {Ozols}},\ }\href
  {https://doi.org/10.1007/s00023-017-0609-7} {\bibfield  {journal} {\bibinfo
  {journal} {Annales Henri Poincaré}\ }\textbf {\bibinfo {volume} {18}},\
  \bibinfo {pages} {3449} (\bibinfo {year} {2017})},\ \Eprint
  {https://arxiv.org/abs/1605.01718} {arXiv:1605.01718} \BibitemShut {NoStop}%
\bibitem [{\citenamefont {Terhal}\ and\ \citenamefont
  {DiVincenzo}(2002)}]{terhal2002classical}%
  \BibitemOpen
  \bibfield  {author} {\bibinfo {author} {\bibfnamefont {B.~M.}\ \bibnamefont
  {Terhal}}\ and\ \bibinfo {author} {\bibfnamefont {D.~P.}\ \bibnamefont
  {DiVincenzo}},\ }\href {https://doi.org/10.1103/physreva.65.032325}
  {\bibfield  {journal} {\bibinfo  {journal} {Physical Review A}\ }\textbf
  {\bibinfo {volume} {65}},\ \bibinfo {pages} {032325} (\bibinfo {year}
  {2002})},\ \Eprint {https://arxiv.org/abs/quant-ph/0108010}
  {arXiv:quant-ph/0108010} \BibitemShut {NoStop}%
\bibitem [{\citenamefont {Bravyi}(2004)}]{bravyi2004lagrangian}%
  \BibitemOpen
  \bibfield  {author} {\bibinfo {author} {\bibfnamefont {S.}~\bibnamefont
  {Bravyi}},\ }\href {https://arxiv.org/abs/quant-ph/0404180} {\bibinfo {title}
  {{Lagrangian representation for fermionic linear optics}}} (\bibinfo {year}
  {2004}),\ \Eprint {https://arxiv.org/abs/quant-ph/0404180}
  {arXiv:quant-ph/0404180} \BibitemShut {NoStop}%
\bibitem [{\citenamefont {Chapman}\ and\ \citenamefont
  {Flammia}(2020)}]{chapman2020characterization}%
  \BibitemOpen
  \bibfield  {author} {\bibinfo {author} {\bibfnamefont {A.}~\bibnamefont
  {Chapman}}\ and\ \bibinfo {author} {\bibfnamefont {S.~T.}\ \bibnamefont
  {Flammia}},\ }\href {https://doi.org/10.22331/q-2020-06-04-278} {\bibfield
  {journal} {\bibinfo  {journal} {Quantum}\ }\textbf {\bibinfo {volume} {4}},\
  \bibinfo {pages} {278} (\bibinfo {year} {2020})}\BibitemShut {NoStop}%
\bibitem [{\citenamefont {Fukai}\ and\ \citenamefont
  {Pozsgay}(2025)}]{fukai2025quantum}%
  \BibitemOpen
  \bibfield  {author} {\bibinfo {author} {\bibfnamefont {K.}~\bibnamefont
  {Fukai}}\ and\ \bibinfo {author} {\bibfnamefont {B.}~\bibnamefont
  {Pozsgay}},\ }\href {https://doi.org/10.1088/1751-8121/adcd18} {\bibfield
  {journal} {\bibinfo  {journal} {Journal of Physics A: Mathematical and
  Theoretical}\ }\textbf {\bibinfo {volume} {58}},\ \bibinfo {pages} {175202}
  (\bibinfo {year} {2025})},\ \Eprint {https://arxiv.org/abs/2402.02984}
  {arXiv:2402.02984} \BibitemShut {NoStop}%
\bibitem [{\citenamefont {Gallian}(2021)}]{gallian2021contemporary}%
  \BibitemOpen
  \bibfield  {author} {\bibinfo {author} {\bibfnamefont {J.}~\bibnamefont
  {Gallian}},\ }\href {https://doi.org/10.1201/9781003142331} {\emph {\bibinfo
  {title} {{Contemporary Abstract Algebra}}}},\ edited by\ \bibinfo {editor}
  {\bibfnamefont {J.~A.}\ \bibnamefont {Gallian}}\ (\bibinfo  {publisher}
  {Chapman and Hall/CRC},\ \bibinfo {year} {2021})\BibitemShut {NoStop}%
\bibitem [{\citenamefont {Iten}\ \emph {et~al.}(2016)\citenamefont {Iten},
  \citenamefont {Colbeck}, \citenamefont {Kukuljan}, \citenamefont {Home},\
  and\ \citenamefont {Christandl}}]{iten2016quantum}%
  \BibitemOpen
  \bibfield  {author} {\bibinfo {author} {\bibfnamefont {R.}~\bibnamefont
  {Iten}}, \bibinfo {author} {\bibfnamefont {R.}~\bibnamefont {Colbeck}},
  \bibinfo {author} {\bibfnamefont {I.}~\bibnamefont {Kukuljan}}, \bibinfo
  {author} {\bibfnamefont {J.}~\bibnamefont {Home}},\ and\ \bibinfo {author}
  {\bibfnamefont {M.}~\bibnamefont {Christandl}},\ }\href@noop {} {\bibfield
  {journal} {\bibinfo  {journal} {Physical Review A}\ }\textbf {\bibinfo
  {volume} {93}},\ \bibinfo {pages} {032318} (\bibinfo {year} {2016})},\
  \Eprint {https://arxiv.org/abs/1501.06911} {arXiv:1501.06911} \BibitemShut
  {NoStop}%
\bibitem [{\citenamefont {Schön}\ \emph {et~al.}(2005)\citenamefont {Schön},
  \citenamefont {Solano}, \citenamefont {Verstraete}, \citenamefont {Cirac},\
  and\ \citenamefont {Wolf}}]{schon2005sequential}%
  \BibitemOpen
  \bibfield  {author} {\bibinfo {author} {\bibfnamefont {C.}~\bibnamefont
  {Schön}}, \bibinfo {author} {\bibfnamefont {E.}~\bibnamefont {Solano}},
  \bibinfo {author} {\bibfnamefont {F.}~\bibnamefont {Verstraete}}, \bibinfo
  {author} {\bibfnamefont {J.~I.}\ \bibnamefont {Cirac}},\ and\ \bibinfo
  {author} {\bibfnamefont {M.~M.}\ \bibnamefont {Wolf}},\ }\href
  {https://doi.org/10.1103/physrevlett.95.110503} {\bibfield  {journal}
  {\bibinfo  {journal} {Physical Review Letters}\ }\textbf {\bibinfo {volume}
  {95}},\ \bibinfo {pages} {110503} (\bibinfo {year} {2005})},\ \Eprint
  {https://arxiv.org/abs/quant-ph/0501096} {arXiv:quant-ph/0501096}
  \BibitemShut {NoStop}%
\bibitem [{\citenamefont {Bittel}\ \emph {et~al.}(2025)\citenamefont {Bittel},
  \citenamefont {Mele}, \citenamefont {Eisert},\ and\ \citenamefont
  {Leone}}]{bittel2025optimal}%
  \BibitemOpen
  \bibfield  {author} {\bibinfo {author} {\bibfnamefont {L.}~\bibnamefont
  {Bittel}}, \bibinfo {author} {\bibfnamefont {A.~A.}\ \bibnamefont {Mele}},
  \bibinfo {author} {\bibfnamefont {J.}~\bibnamefont {Eisert}},\ and\ \bibinfo
  {author} {\bibfnamefont {L.}~\bibnamefont {Leone}},\ }\href@noop {}
  {\bibfield  {journal} {\bibinfo  {journal} {PRX Quantum}\ }\textbf {\bibinfo
  {volume} {6}},\ \bibinfo {pages} {030341} (\bibinfo {year} {2025})},\ \Eprint
  {https://arxiv.org/abs/2409.17953} {arXiv:2409.17953} \BibitemShut {NoStop}%
\bibitem [{\citenamefont {Jiang}\ \emph {et~al.}(2018)\citenamefont {Jiang},
  \citenamefont {Sung}, \citenamefont {Kechedzhi}, \citenamefont
  {Smelyanskiy},\ and\ \citenamefont {Boixo}}]{jiang2018quantum}%
  \BibitemOpen
  \bibfield  {author} {\bibinfo {author} {\bibfnamefont {Z.}~\bibnamefont
  {Jiang}}, \bibinfo {author} {\bibfnamefont {K.~J.}\ \bibnamefont {Sung}},
  \bibinfo {author} {\bibfnamefont {K.}~\bibnamefont {Kechedzhi}}, \bibinfo
  {author} {\bibfnamefont {V.~N.}\ \bibnamefont {Smelyanskiy}},\ and\ \bibinfo
  {author} {\bibfnamefont {S.}~\bibnamefont {Boixo}},\ }\href
  {https://doi.org/10.1103/PhysRevApplied.9.044036} {\bibfield  {journal}
  {\bibinfo  {journal} {Physical Review Appllied}\ }\textbf {\bibinfo {volume}
  {9}},\ \bibinfo {pages} {044036} (\bibinfo {year} {2018})},\ \Eprint
  {https://arxiv.org/abs/1711.05395} {arXiv:1711.05395} \BibitemShut {NoStop}%
\bibitem [{\citenamefont {Whitfield}\ and\ \citenamefont
  {Zimbor{\'a}s}(2014)}]{whitfield2014np}%
  \BibitemOpen
  \bibfield  {author} {\bibinfo {author} {\bibfnamefont {J.~D.}\ \bibnamefont
  {Whitfield}}\ and\ \bibinfo {author} {\bibfnamefont {Z.}~\bibnamefont
  {Zimbor{\'a}s}},\ }\href@noop {} {\bibfield  {journal} {\bibinfo  {journal}
  {The Journal of chemical physics}\ }\textbf {\bibinfo {volume} {141}}
  (\bibinfo {year} {2014})},\ \Eprint {https://arxiv.org/abs/1408.3459}
  {arXiv:1408.3459} \BibitemShut {NoStop}%
\end{thebibliography}%

\onecolumngrid
\appendix
\addtocontents{toc}{\protect\setcounter{tocdepth}{0}}
\section*{Table of Contents}\label{app:toc}

\begin{center}
\begin{minipage}{0.9\textwidth}
\manualtocentryapp{app:background}{Background and Preliminaries}
    \manualtocsubentryapp{app:complexity}{Complexity Theory}
    \manualtocsubentryapp{app:problem_statement}{Problem Statements}
    \manualtocsubentryapp{app:gaussian}{Gaussian States}
    \manualtocsubentryapp{app:fendley}{Fendley states}
\manualtocentryapp{app:bqp_hardness_framework}{BQP Hardness Framework}
    \manualtocsubentryapp{app:alt_idling_reductions}{Alternate Idling Reductions}
\manualtocentryapp{app:extended_state_families}{Extended State Families}
\manualtocentryapp{app:bqp_containment}{Efficient State Preparation from Sparse Classical Data}
    \manualtocsubentryapp{app:approx-state-prep}{Approximate State Preparation}
    \manualtocsubentryapp{app:permutation-grover-rudolph}{Permutation Grover-Rudolph Proposal}
    \manualtocsubentryapp{app:isometries}{Inclusion of Isometries}
    \manualtocsubentryapp{app:mps_prep}{Preparation of Matrix Product States}
    \manualtocsubentryapp{app:Gauss_prep}{Preparation of Gaussian States}
    \manualtocsubentryapp{app:bqp_containment_main}{Main Result of \cref{sec:bqp_containment}}
    \manualtocsubentryapp{app:history-state-prep}{History State Preparation}
\manualtocentryapp{app:sample-query}{Sample and Query Access}
    \manualtocsubentryapp{app:sample-query-proofs}{Proofs of Classically Efficient Sample and Query Access}
        \manualtocsubsubentryapp{app:SCESS-sample-query}{Semi-Classical Encoded Subset States}
        \manualtocsubsubentryapp{app:adv-scess}{Advanced Subset States}
        \manualtocsubsubentryapp{app:mps-sample-query}{Matrix Product States}
        \manualtocsubsubentryapp{app:gauss-sample-query}{Gaussian States}
\manualtocentryapp{app:STLHP}{Local Hamiltonian Problems with Different States}
\manualtocentryapp{app:weight_k}{Weight-k Guiding States}
\manualtocentryapp{app:uniform}{Optimality of Uniform Amplitudes}
\manualtocentryapp{app:fermi_hubbard_complexity}{Hardness Reduction for the Fermi-Hubbard Model}
    \manualtocsubentryapp{app:jw}{Jordan-Wigner Transformation and Local Encoding}
    \manualtocsubentryapp{app:hubbard_heisenberg}{Perturbative Reduction to the Heisenberg Model}
    \manualtocsubentryapp{app:guiding_state}{Guiding State Construction and Overlap Preservation}
\end{minipage}
\end{center}

\section{Background and Preliminaries}\label{app:background}
A local Hamiltonian over $n$ qubits is a self-adjoint operator $H = \sum_j h_j$, where $\abs{{\rm supp}(h_j)} \leq k$ for some ${k = O(1)}$ and $j \in [\poly{n}]$.
The ground-state energy $\lambda_0$ of a local Hamiltonian is the minimum eigenvalue of the Hamiltonian.

Let $X_{n,k}$ denote a subset of binary strings of length $n$ with Hamming weight $k$.
Given a symbolic representation of a state, e.g., $\psi, X_{n,k}, \eta$, we will denote \emph{normalised} states with no marker: $\psi, X_{n,k}, \eta$, \emph{un-normalised} states with a tilde: $\tilde{\psi}, \tilde{X}_{n,k}, \tilde{\eta}$, and (normalised) \emph{uniform amplitude} states with a hat: $\hat{\psi}, \hat{X}_{n,k}, \hat{\eta}$.
Define the fidelity between two states $\ket{\psi}$ and $\ket{\phi}$ as $F_{\psi,\phi} \coloneqq \abs{\braket{\psi}{\phi}}^2$.

For a given quantum state $\ket\psi$, sample-access typically refers to the ability to obtain a random sample $\ket{x}$ from the state $\ket{\psi}$ with probability $|\langle{x}|{\psi}\rangle|^2$.
Query-access refers to the ability to obtain the amplitude $\langle{x}|{\psi}\rangle$ for any $x \in \B^n$.
In this work, we do not discuss computational constraints on the storage or encoding of these elements and values.
We assume a sufficient precision and at most polynomial access cost.
Sample-query access refers to having both sample- and query-access to the state as defined above.
See \cref{app:sample-query} for formal definitions of these access models.

\subsection{Complexity Theory}\label{app:complexity}
\begin{definition}[Polynomial-time Generated Quantum Circuit]
    Let $L \subseteq \B^*$ be any set of strings. 
    Then a collection $\{Q_{|x|} : x \in L\}$ of quantum circuits is said to be polynomial-time generated if there exists a polynomial-time deterministic Turing machine that, on every input $x \in L$, outputs an encoding of $Q_{|x|}$.    
\end{definition}

\begin{definition}[\cl{BQP}]
    Let $L = (L_{\sc{yes}}, L_{\sc{no}})$ be a promise problem and $a,b : \mathbb{N} \to [0,1]$ be functions.
    A problem $L$ belongs to the class \cl{BQP}$(a,b)$ if and only if there exists a polynomial-time generated quantum circuit family $Q = \{Q_n\}_{n\in\mathbb{N}}$ that acts on $n + \poly{n}$ input qubits and produces one output qubit, such that:
    \begin{itemize}
        \item If $x \in L_{\sc{yes}}$, then $\Pr[Q_n(x) = {\tt 1}] \geq a(n)$.
        \item If $x \in L_{\sc{no}}$, then $\Pr[Q_n(x) = {\tt 1}] \leq b(n)$.
    \end{itemize} 
\end{definition}

The class \cl{BQP} is defined via $\cl{BQP} \coloneqq \cl{BQP}(2/3,1/3)$.
Via repetition and majority voting, the class \cl{BQP} has error reduction allowing for $\cl{BQP}=\cl{BQP}(1-2^{-q},2^{-q})$, for any polynomially-bounded function $q \geq 2$.
We say that a problem is \clw{BQP}{hard} if every problem in \cl{BQP} can be reduced to it using a polynomial-time (Karp) reduction.
A problem is \clw{BQP}{complete} if it is both in \cl{BQP} and \clw{BQP}{hard}.
We assume all \cl{BQP} circuits are composed of $2$-local gates from a finite universal gate set, e.g., $\{I,\tn{H, T, CNOT}\}$.
Note that we chose to include the primitive $I$ in the gate set to allow for the pre-idle sequences in the history state construction.
When this element is not included, the pre-idle sequence can be implemented by a sequence of gates that act trivially on the computational state, e.g., $\tn{H}^2$ or $\tn{T}^8$ --- see \cref{app:alt_idling_reductions} for more details.

\subsection{Problem Statements}\label{app:problem_statement}

The main problem considered in this work is the \sc{Guided Local Hamiltonian} problem.
\guidedlocalhamiltonian*

Later in this appendix, we consider other problems related to the \sc{Local Hamiltonian} problem.

\begin{restatable}[\sc{[State Type] Local Hamiltonian} problem]{definition}{statetypelocalhamiltonianproblem}\label{def:STLHP_general}
    Given a $k$-local Hamiltonian $H$ defined over $n$ qubits and parameters $a,b \in \mathbb{R}$ such that $b - a \geq 1/\poly{n}$, the problem is to decide whether 
    \begin{itemize}
        \item \sc{yes}: there exists an $n$-qubit [state type] state $\ket{\psi}$ such that $\bra{\psi}{H}\ket{\psi} \leq a$,
        \item \sc{no}: for all $n$-qubit [state type] states $\ket{\psi}$, $\bra{\psi}{H}\ket{\psi} \geq b$,
    \end{itemize}
    promised that one of these is the case.
\end{restatable}

\subsection{Gaussian States}\label{app:gaussian} 
Following Refs.~\cite{bravyi2017complexity,terhal2002classical,bravyi2004lagrangian,chapman2023unified}, consider the $2^n$-dimensional Hilbert space $\mathcal{H}_n$ describing $n$ fermionic modes. 
A convenient basis is given by the Fock states
\begin{equation*}
    \ket{x_0,x_1,\dots,x_{n-1}}=(a_0^\dagger)^{x_0} (a_1^\dagger)^{x_1}\dots(a_{n-1}^\dagger)^{x_{n-1}} \ket{0^n} \quad (x_j \in \B),
\end{equation*}
where $\{a_j\}_{j\in \mathbb{Z}_n}$ are complex fermionic annihilation operators satisfying the canonical anticommutation relations:
\begin{align*}
    \{a_i,a_j^\dagger\} &= \delta_{i,j}, & \{a_i,a_j\} &= 0, & \{a_i^\dagger,a_j^\dagger\} &= 0.
\end{align*}
The vacuum $\ket{0^n}$ is annihilated by every $a_j$, and each number operator $n_j = a_j^\dagger a_j$ has eigenvalues in $\B$.

It is often convenient to use \emph{Majorana-fermion operators}:
\begin{align*}
    c_{2j-1} &= \frac{1}{2}\bigl(a_j + a_j^\dagger\bigr), & c_{2j} &= \frac{1}{2i}\bigl(a_j - a_j^\dagger\bigr),
\end{align*}
which satisfy
\begin{equation*}
    \{c_k, c_\ell\} = 2\,\delta_{k,\ell}.
\end{equation*}
Majorana operators can simplify how certain observables and Hamiltonians are expressed.

A fermionic Hamiltonian is called \emph{free} or \emph{non-interacting} if it is quadratic in the fermionic operators:
\begin{equation*}
    H_{\tn{free}} = \frac{i}{2} \sum_{j,k} h_{jk} c_j c_k,
\end{equation*}
for some real, antisymmetric matrix $\boldsymbol{h}$ where $h_{jk}$ are the matrix elements. 
Diagonalising $\boldsymbol{h}$ reduces $H_{\tn{free}}$ to
\begin{equation*}
    \tilde{H}_{\tn{free}}=\sum_{k}\varepsilon_kb^\dagger_kb_k,
\end{equation*}
where $\{b_k\}$ are the canonical fermionic modes. 
Such Hamiltonians are diagonalised by \emph{Gaussian} (match-gate) unitaries $U_{\mathrm{mg}}$ whose conjugation sends each $c_k$ to a linear combination of all $c_\ell$:
\begin{equation*}
    U_{\tn{mg}}c_kU_{\tn{mg}}^\dagger=\sum_{\ell}R_{k\ell}c_\ell
\end{equation*}
Gaussian unitaries can be classically simulated~\cite{terhal2002classical,valiant2008holographic}.

Let $H_{\tn{free}}$ be a quadratic Hamiltonian, with energies
\begin{equation*}
    0\leq\varepsilon_0\le\dots\le\varepsilon_{n-1}
\end{equation*}
and let $m$ be the number of zero-energy modes, that is, $\varepsilon_0=\dots=\varepsilon_{m-1}=0$, with $\varepsilon_{m}>0$. 
Then the ground space of $H_{\tn{free}}$ has the form
\begin{equation*}
    \tn{span}(U_{\tn{mg}}\ket{x_0\dots x_{m-1}0\dots 0}:x_k\in \B).
\end{equation*}

Through the Jordan-Wigner transformation~\cite{jordan1928uber} (and its generalisations~\cite{chapman2020characterization}), we may express fermionic operators in terms of spins, giving free-fermionic states a computational meaning.

\begin{definition}[Gaussian states]
    A state $\varphi\in \H$ is called \textit{Gaussian} if and only if it is obtained from a bit string of the form $\ket{\boldsymbol{x},0}$ by a Gaussian unitary, i.e., $\ket{\varphi}=U_{\tn{mg}}\ket{x_1,\dots,x_m0.\dots 0}$. 
    Since Hamiltonians that admit an exact solution via a monomial-to-monomial mapping to free fermions are diagonalised by match gate circuits and Gaussian states are produced by match-gate circuits, it follows that for all $\varphi\in \H$ there exists a free Hamiltonian for which $\varphi$ is a ground state. 
\end{definition}

Let $\Gauss$ denote the set of all Gaussian states. 
Any state $\varphi\in \Gauss$ can be specified up to an overall phase by its covariance matrix $M$ of size $2n \times 2n$, which is defined as
\begin{equation*}
    M_{k,\ell}=-\tfrac{i}{2}\tr(\varphi[c_k,c_\ell])
\end{equation*}
The expectation value of any observable on a Gaussian state $\varphi\in \Gauss$ can be efficiently computed using Wick's theorem. 
For Gaussian states, all higher-order correlation functions factorise into two-point functions by Wick's theorem. 

Because Gaussian states have an efficient classical description by their Covariance matrix, the Gaussian-state variant of the \sc{Local Hamiltonian} problem is \clw{NP}{complete}, see~\cref{thm:gauss_lhp}.

\subsection{Fendley states}\label{app:fendley}
Recently, a new type of free-fermion solvable model was discovered by Fendley~\cite{fendley2019free}. 
Here, the fermions correspond to non-linear polynomials of the Pauli terms of the spin Hamiltonian. 
This solution holistically maps the spin Hamiltonian onto the free-fermion Hamiltonian, yet remains generic despite apparently transcending the monomial-to-monomial structure of the Jordan-Wigner map. 
In Refs.~\cite{elman2021free,chapman2023unified}, the solution by Fendley was extended to an entire family of models. 
The ground states of the Hamiltonians in this class are also `free-fermion', though not Gaussian. 
We call these states Fendley states.

\begin{definition}[Fendley states]
    Let $\vartheta\in \H$ be a quantum state. 
We say that $\vartheta\in \Fend$ if $\vartheta$ is the ground state of a Hamiltonian which is exactly solvable via a Fendley-type mapping.
\end{definition}

Much like Gaussian states, the physical motivation for considering the Fendley states here comes from the arguments made in Ref.~\cite{bravyi2017complexity} regarding the overlap between the low-energy eigenspace of the free Hamiltonian and the true ground state of the interacting system. 
Importantly, when compared to Gaussian free-fermion Hamiltonians, fewer terms need to be removed to reduce a general (interacting) model into one which admits a free-fermion solution via Fendley's method; thus, we expect Fendley states to provide an important superset of Gaussian states. 
As we will see, we can determine that the \sc{Guided Local Hamiltonian} problem with Fendley states is \clw{BQP}{hard}; however, placing it within \cl{BQP} is non-trivial since these states, as of yet, falter on their preparability from a classical description.

It was shown in Ref~\cite{chapman2023unified} that for systems of spatial dimension greater than one, such states may be prepared from Gaussian states via a constant-depth circuit. 
For one-dimensional models, however, it is known that there exist states for which the depth of the circuit is at least logarithmic in system size. 
Nevertheless, it is conjectured that all Fendley states may be prepared efficiently (no greater than polynomial depth)~\cite{fukai2025quantum}.

\section{BQP Hardness Framework}\label{app:bqp_hardness_framework}
For clarity, we restate \cref{lma:history_state_guiding}; however, we prove a slightly stronger version that includes the general clock mapping $\mu$.
\lmahistorystateguiding*

We now prove the following lemma.

\begin{restatable}{lemma}{lmageneralhistorystateguiding}\label{lma:general_history_state_guiding}
    Let $L \subseteq \B^*$ be a \cl{BQP} promise problem.
    Take $\mathcal{U}_x$ be a sequence of $K = \poly{|x|}$ unitary gates for instance $x \in L$.
    Using the Feynman-Kitaev construction, define the Hamiltonian
    \begin{equation*}
        \hat{H}_\mu = \Delta(H_{{\tn{in},\mu}} + H_{{\tn{clock},\mu}} + H_{{\tn{prop},\mu}}) + V_{{\tn{out},\mu}}.
    \end{equation*}
    Taking $\Delta > 112 K^3$ and $V_{{\tn{out},\mu}} = \ketbra{0}_1 \ketbra{K}_C$ such that $\norm{V_{{\tn{out},\mu}}} = 1$, results in $\hat{H}_\mu$ having a one-dimensional ground space spanned by $\ket{\hat{\phi}_0}$.
    Furthermore, it follows that 
    \begin{equation*}
        \norm{\ket{\hat{\phi}_0} - \ket{\eta_\mu}} = O(1/\poly{|x|}),
    \end{equation*}
    where $\ket{\eta_\mu}$ is the history state of the circuit.
\end{restatable}

\begin{proof}
    The well-known lower bound on the smallest non-zero eigenvalue of the Hamiltonian $\Delta(H_{{\tn{in},\mu}} + H_{{\tn{clock},\mu}} + H_{{\tn{prop},\mu}})$ is $\Omega(\Delta/K^3)$.
    The term $H_{{\tn{clock},\mu}}$ does not affect the lower bound~\cite{kitaev2002classical}
    Thus, the spectral gap is $\gamma = \Omega(\Delta/K^3)$ since the history state is the zero-eigenvector of $H_{{\tn{in},\mu}} + H_{{\tn{clock},\mu}} + H_{{\tn{prop},\mu}}$.
    The constant fact can be shown to be roughly $1/7$, though we do not require this.
    Standard Schrieffer-Wolff transformation results~\cite{bravyi2016complexity} demonstrate that by taking $\Delta > 112 K^3$, we can avoid mixing of the low-energy subspace with the high-energy subspace and obtain absolute convergence for the ensuing series expansion for the transformation.
    The ground state of the unperturbed Hamiltonian $H_{{\tn{in},\mu}} + H_{{\tn{clock},\mu}} + H_{{\tn{prop},\mu}}$ is the history state
    \begin{equation*}
        \ket{\eta_\mu} = \frac{1}{\sqrt{K+1}} \sum_{t=0}^{K} \ket{\varphi_{t}}\ket{\mu(\boldsymbol{t})},
    \end{equation*}
    where $\ket{\varphi_{t}} = U_{t}\ket{\varphi_{\boldsymbol{t-1}}}$ and $\ket{\varphi_{\boldsymbol{0}}} = \ket{x,0^m}$.
    For the appropriately chosen $\Delta$, the result is that the ground state of $\hat{H}_\mu$ is spanned by $\ket{\g}$.
    It then follows~\cite[Lemma 2]{bravyi2016complexity} that 
    \begin{equation*}
        \norm{\ket{\g} - \ket{\eta_\mu}} = O({\norm{V_{{\tn{out},\mu}}}}/{\Delta}) = O(1/K^3).
    \end{equation*}
\end{proof}

As we proceed, we will presume references to the Hamiltonian $\hat{H}_\mu$ are implicitly referring to Hamiltonians of the form in this lemma and the summary contained in the main text.
Note that the lower-bound on the spectral gap is $\Omega(\Delta/K^3)$, which holds for binary encodings of the clock states.
The specific coefficients in this bound may differ; however, for this work, this is not worthy of merit.

We now prove a geometric lemma relating the fidelity between a well-chosen guiding state and the ground state of an unperturbed Hamiltonian is sufficiently high, the subsequent fidelity between the same guiding state and a perturbed version of the Hamiltonian is also sufficiently high.
Therefore, the guiding state can be used to approximate the ground state of the perturbed Hamiltonian.

\lmageneralguidingstate*

\begin{proof}
    The case where the first $N$ unitaries are the identity is referred to as \emph{pre-idling}.
    Let $K = N + T$.
    We split the history state into two sectors: $\ket{\eta_\mu} = \ket{\eta_{\mu,1}} + \ket{\eta_{\mu,2}}$, where
    \begin{align*}
        \ket{\eta_{\mu,1}} &= \frac{1}{\sqrt{N + T + 1}} \sum_{\boldsymbol{t}=0}^{N} \ket{x,0^m}\ket{\mu(\boldsymbol{t})}, &
        \ket{\eta_{\mu,2}} &= \frac{1}{\sqrt{N + T + 1}} \sum_{\boldsymbol{t}=N+1}^{N + T + 1} U_{\boldsymbol{t-N}} \cdots U_{\boldsymbol{1}} \ket{x,0^m}\ket{\mu(\boldsymbol{t})}.
    \end{align*}
    Let $\sigma = \{x\} \times \{0\}^m \eqqcolon \{\varphi_0\}$, $\nu = N$ and $\mu' = \mu$, then we have the subset
    $R_{\varphi_0, \mu}^N = \{\varphi_0\} \times \{\mu(\boldsymbol{1}), \dots, \mu(\boldsymbol{N})\}$.
    The corresponding state is defined as 
    \begin{equation*}
        \ket{\hat{R}_{\varphi_0, \mu}^N} = \frac{1}{\sqrt{N}} \sum_{(z,t) \in R_{\varphi_0, \mu}^N} \ket{z,t}.
    \end{equation*}
    It is straightforward to verify the fidelity between $\ket{\hat{R}_{\varphi_0, \mu}^N}$ and $\ket{\eta_{\mu}}$ is 
    $F_{\hat{R}_{\varphi_0, \mu}^N, \eta_{\mu}} = N/({N + T + 1})$.
    Fix any polynomials $p$ and $p'$.
    Choose $N \geq T p(|x|)$, then 
    $N/({N + T + 1}) \geq 1 - 1/{p(|x|)}$.
    Using \cref{lma:general_history_state_guiding} and \cref{lma:geometric-lemma}, assuming $\norm{\ket{a}-\ket{b}} \le 1/p'(|x|)$, we have 
    $F_{\hat{R}_{\varphi_0, \mu}^N, \gamma} \geq 1 - 1/{p(|x|)} - 2/{p'(|x|)}$.
    In particular, for suitable polynomial choices of $p$ and $p'$, the fidelity satisfies
    \begin{equation*}
        F_{\hat{R}_{\varphi_0, \mu}^N, \gamma} \geq 1 - \frac{1}{q(|x|)}
    \end{equation*}
    for some polynomial $q$.
\end{proof}

To prove \cref{thm:R_state_hardness}, we require the following lemma that unifies the ideas of \cref{lma:general_history_state_guiding} and \cref{lma:general_guiding_state}.

\begin{restatable}{lemma}{jointlemma}\label{lma:joint_lemma}
    Let $L \subseteq \B^*$ be a \cl{BQP} promise problem.
    Take $\mathcal{U}_x$ be a sequence of $K = \poly{|x|}$ unitary gates for instance $x \in L$.
    Assume in the \sc{yes} case the acceptance probability of the circuit is at least $1 - \varepsilon$ and in the \sc{no} case the acceptance probability is at most $\varepsilon$.
    Define $\hat{H}_\mu = \Delta H_\mu + V_\mu$ be a Hamiltonian with a one-dimensional ground space spanned by $\ket{\hat{\phi}_0}$ constructed via the Feynman-Kitaev circuit-to-Hamiltonian construction with a subsequent Schrieffer-Wolf transformation for a sufficiently large $\Delta$.
    Suppose $\ket{\hat{R}_{\sigma, \mu}^\nu}$ is a state defined over a subset $R_{\sigma, \mu}^\nu$ such that $F_{\hat{R}_{\sigma, \mu}^\nu, \hat{\phi}_0} \geq 1 - 1/\poly{|x|}$.
    Then, $\hat{H}_\mu$ has an inverse-polynomial spectral gap $\hat{\gamma}$ and parameters $a,b$ such that $b-a = \Omega(1/\poly{|x|})$.
\end{restatable}

\begin{proof}
For an operator $A$, let $\lambda_j(A)$ denote its $j$-th smallest eigenvalue.
Define $\hat{\lambda}_j \coloneqq \lambda_j(\hat{H}_\mu)$ and $\lambda_j \coloneqq \lambda_j(H_\mu)$ for $j=0,1,\dots$.
Fix an instance $x$ and let $K=\poly{|x|}$ be the number of circuit gates.
Recall that the frustration-free ground state of $H_\mu$ is the history state $\ket{\eta_\mu}$.

Choose $\Delta=\poly{|x|}$ sufficiently large for the Schrieffer-Wolff (SW) approximation to hold with inverse-polynomial accuracy (see \cref{lma:general_history_state_guiding}).
Concretely, there exists a polynomial $p$ such that
\begin{equation*}
    \abs{\lambda_0(\hat{H}_\mu) - \lambda_0(P V_{{\tn{out},\mu}} P)} \leq \frac{1}{p(|x|)},
\end{equation*}
where $P = \ketbra{\eta_{\mu}}$
Therefore,
\begin{equation*}
    \hat{\lambda}_0\in \left[\mel{\eta_\mu}{V_{{\tn{out},\mu}}}{\eta_\mu}-\frac{1}{p(|x|)},\;
    \mel{\eta_\mu}{V_{{\tn{out},\mu}}}{\eta_\mu}+\frac{1}{p(|x|)}\right].
\end{equation*}

In the Feynman-Kitaev construction, $\mel{\eta_\mu}{V_{{\tn{out},\mu}}}{\eta_\mu}$ equals the (clock-averaged) rejection probability:
\begin{equation*}
    \mel{\eta_\mu}{V_{{\tn{out},\mu}}}{\eta_\mu} = \frac{\Pr[\tn{reject}]}{K+1}.
\end{equation*}
Thus, using $\Pr[\tn{reject}]\le \varepsilon$ in the \textsc{yes} case and $\Pr[\tn{reject}]\ge 1-\varepsilon$ in the \textsc{no} case, we obtain
\begin{align*}
    \textsc{yes}:&\quad \hat{\lambda}_0\in \comm{-\frac{1}{p(|x|)}+\frac{\varepsilon}{K+1}}{\frac{1}{p(|x|)}+\frac{\varepsilon}{K+1}},\\
    \textsc{no}:&\quad\hat{\lambda}_0\in \comm{-\frac{1}{p(|x|)}+\frac{1-\varepsilon}{K+1}}{\frac{1}{p(|x|)}+\frac{1-\varepsilon}{K+1}}.
\end{align*}

We now bound the spectral gap $\hat{\gamma}\coloneqq\hat{\lambda}_1-\hat{\lambda}_0$.
Let $\hat{H}_\mu=\Delta H_\mu+V_\mu$ with $\|V_\mu\|\le O(1)$ (in particular, $\|V_{{\tn{out},\mu}}\|\le 1$).
By the standard FK spectral analysis (see \cref{sec:bqp_hardness}), the first excited eigenvalue of $\Delta H_\mu$ obeys
\begin{equation*}
    \Delta \lambda_1\ge c\frac{\Delta}{K^3}
\end{equation*}
for some constant $c>0$.
By Weyl's inequality,
\begin{equation*}
    \hat{\lambda}_1 \ge \lambda_1(\Delta H_\mu)-\|V_\mu\| \ge c\frac{\Delta}{K^3}-\|V_\mu\|.
\end{equation*}
Choosing $\Delta=\poly{|x|}$ sufficiently large (e.g. see \cref{lma:general_history_state_guiding}) ensures $\hat{\lambda}_1 = \Omega(K^{-3}\Delta)$.
Combining with the above bounds on $\hat{\lambda}_0$ yields
\begin{equation*}
    \hat{\gamma} \ge \frac{1}{q(|x|)}
\end{equation*}
for some polynomial $q$, after taking $\Delta$ to dominate the inverse-polynomial error terms.

Finally, to produce a standard local-Hamiltonian instance with bounded operator norm (or bounded local terms), rescale $\hat{H}_\mu$ by a polynomial factor:
define $H'_\mu\coloneqq\hat{H}_\mu/s(|x|)$ where $s(|x|)\ge \|\hat{H}_\mu\|$ is polynomial.
Then $\|H'_\mu\|\le 1$ and the gap rescales as $\gamma'=\hat{\gamma}/s(|x|)\ge 1/\poly{|x|}$.

For the promise parameters, choose thresholds $a<b$ that separate the \textsc{yes} and \textsc{no} intervals above.
For example, take
\begin{equation*}
    a \coloneqq \frac{\varepsilon}{K+1}+\frac{2}{p(|x|)},\qquad
    b \coloneqq \frac{1-\varepsilon}{K+1}-\frac{2}{p(|x|)}.
\end{equation*}
For $\Delta$ sufficiently large we find $b-a=\Omega(1/\poly{|x|})$.
After rescaling by $s(|x|)$, the promise gap becomes $(b-a)/s(|x|)=\Omega(1/\poly{|x|})$ as required.
\end{proof}

The proof of \cref{thm:R_state_hardness} now follows directly from \cref{lma:joint_lemma} by taking $R_{\sigma, \mu}^\nu$ to be the subset used to define the state $\ket{\hat{R}_{\sigma, \mu}^\nu}$ in the theorem statement.

\subsection{Alternate Idling Reductions}\label{app:alt_idling_reductions}
In the main text, we discussed the pre-idling reduction, where the circuit is idled \emph{before} the sequence of $K$ local gates.
This implicitly assumes that we have access to the primitive gate $I$ in our gate set, which allows us to implement the pre-idle sequence by applying $N$ identity gates.
It is easy to see that when such an element is absent that there exists a sequence of gates that act trivially on the computational state; for example, the sequence $\tn{H}\ \tn{T}^8 \ \tn{H}$ acts as the identity on any computational basis state.
However, this will affect the structure of the history state and therefore the subset $R_{\sigma, \mu}^\nu$ (consequently the guiding state) that we use to prove \clw{BQP}{hardness}.
Using this fact, we anticipate that a number of different state forms can be encoded into the pre-idle sector of the history state.
More generally, we can also consider the alternative approach of idling \emph{during} and \emph{after} the circuit; alternative arguments can be constructed to prove \clw{BQP}{hardness} for the problem.

Consider a circuit $U$ composed of $K$ local gates and append $M$ identity gates at the end of the circuit.
This yields the history state
\begin{equation*}
    \ket{\eta} \propto \sum_{t=0}^{K} \ket{\varphi_t}\ket{t} + \sum_{t=K+1}^{K+M}\ket{\varphi_K} \ket{t},
\end{equation*}
where for the time period $t>K$, the workspace qubits are static.
An appropriate guiding state might be of the form 
\begin{align*}
    \ket{\xi} &\propto \ket{\varphi_K} \ket{K} + \ket{\varphi_K} \ket{K+1} + \cdots + \ket{\varphi_K} \ket{K+M}, \\
              &= U \ket{\varphi_0}\big(\ket{K} + \ket{K+1} + \cdots + \ket{K+M}\big).
\end{align*}
The preparation for which follows \cref{thmt@@EfficientStatePreparation}.
We show the circuit implementing this state is efficient in \cref{fig:static}.

Now consider a circuit $U$ composed of $K$ local gates and take $0 \leq s \leq K$.
Define a new circuit $U'$ that is identical to $U$ except after the $s$-th local gate, we apply a sequence of $S$ identify gates, e.g., 
\begin{equation*}
    U' = \prod_{i=1}^{s} U_i \left(\prod_{j=1}^{S} I_j\right) \prod_{k=s+1}^{K} U_k.
\end{equation*}
The history state for the Hamiltonian constructed from $U'$ is then
\begin{equation*}
    \sqrt{1+K+S}\,\ket{\eta} = \sum_{t=0}^{s-1} \ket{\varphi_t}\ket{t} + \ket{\varphi_s}\sum_{t=s}^{S} \ket{t} + \sum_{t=S+1}^{K}\ket{\varphi_K} \ket{t}. 
\end{equation*}
The workspace qubit state $\ket{\varphi_s}$ is static for the time-period $t \in [s,S]$.
A guiding state can be constructed as
\begin{equation*}
    \ket{\xi} \propto \left(\prod_{i=1}^{s} U_i\right) \ket{\varphi_0}\big(\ket{s} + \ket{s+1} + \cdots + \ket{S}\big).
\end{equation*}

\begin{figure}[!ht]
    \centering
    \begin{tikzpicture}
        \pic[scale=1.5]{static};
    \end{tikzpicture}
    \caption{
        An example of the circuit that prepares the guiding state $\ket{\xi}$ having overlap with the static sector of the history state. 
        The circuit element labelled {\tt PGR} is the preparation circuit for the {\tt PermutationGrover-Rudolph} algorithm of Ref.~\cite{ramacciotti2024simple}, outlined in \cref{app:permutation-grover-rudolph}.
        The circuit element labelled $U$ is a sequence of local gates that are applied to the workspace qubits after an initial state preparation procedure that prepares an underlying computational basis state. 
        Specifically, $\ket{\psi_1} = \ket{\underbar{0}}\ket{\underbar{0}}$, $\ket{\psi_2} = \ket{\varphi_0}(\sum_{t \in [s,S]} \ket{t})$, and $\ket{\psi_3} = \ket{\varphi_K}(\sum_{t \in [s,S]} \ket{t})$.
    }
    \label{fig:static}
\end{figure}

While the overlap and precision parameters can be made to match those of the pre-idling reduction, the type of guiding state is very different.
For one, our classical description is now similar to that of semi-classical encoded subset states.
Yet, any particular structure is necessarily lost due to the circuit $U$ since not much can be inferred about what the state $\ket{\varphi_S}$ is.
This is in contrast to the previous case, where the guiding state had a simple structure.
Therefore, in the spirit of this work, we rule out such guiding states as being physically relevant (in general).

In addition to the padding of identify gates, one could also consider inserting idling periods at various points throughout the circuit.
Even more so, one could consider sequences of subcircuits that are equivalent to the identify operation on the workspace qubits.
In this case, it may be possible to use non-trivial aspects of specific state families to yield additional \clw{BQP}{hardness} results.

\section{Extended State Families}\label{app:extended_state_families}
The standard definition of subset states assumes a fixed-basis encoding, i.e., using the binary alphabet $\B = \B$.
A natural extension is to consider a multi-alphabet encoding, where different portions of the state are encoded using different alphabets.
Such state families can be useful when considering perturbative gadgets that require ancilla qubits prepared in different basis states~\cite{waite2025guided,waite2025complexityb}.

\begin{definition}[Multi-Alphabet subset states]\label{def:multi_alphabet_subset_states}
    Let $\varSigma_j$ be an alphabet of size at most four.
    For $\ell_j \in \mathbb{N}$, take $\varSigma_j^{\ell_j}$ to be the set of strings of length $\ell_j$ over the alphabet $\varSigma_j$.
    Define $\Sigma \subset \bigtimes_{j=1}^{m} \varSigma_j^{\ell_j}$ to be a subset of the Cartesian product of $m$ sets $\varSigma_j^{\ell_j}$
    Assume that $\sum_{j=1}^{m}\ell_j = n$ and $\abs{\Sigma} = \poly{n}$, the state $\ket{\Sigma}$ defined over $\Sigma$ is defined as
    \begin{equation*}
        \ket{\Sigma} \coloneqq \sum_{x\in \Sigma} \alpha_x \ket{x},
    \end{equation*}
    where $\alpha_x$ are the amplitudes of the state.
\end{definition}

An example of such a state might be $\Sigma \subset \B^3 \times \{+,-\}^3 \times \{\Phi^+,\Phi^-,\Psi^+,\Psi^-\}$, resulting is a superposition state of the form 
\begin{equation*}
    \ket{\Sigma} = \frac{1}{\sqrt{2}}\big( \ket{010}\ket{+\!++}\ket{\Phi^-} + \ket{111}\ket{+\!--}\ket{\Psi^+}\big).
\end{equation*}
A multi-alphabet state is not particularly realistic as it would require some form of processing to partially disentangle parts of the state.
Though, it may be possible that certain decompositions of physical states could be approximated in this manner.
The state can be efficiently prepared using a constant-depth unitary transformation after an initial computational basis state preparation.
For the example above, preparing the state $\ket{010\;000\;10} + \ket{111\;011\;01}$ followed by the unitary transformation $U = \sc{Had}_4 \sc{Had}_5 \sc{Had}_6 \sc{Had}_7 \sc{CNOT}_{7,8}$ is sufficient to prepare the state $\ket{\Sigma}$.
Furthermore, multi-alphabet subset states using $\B$ and $\{+,-\}$ are appropriate for stoquastic Hamiltonians and efficient classical solutions for diagonal Hamiltonians~\cite{waite2025guided}.
But in a more general manner, with many alphabets, the endeavour becomes futile as it begins to stray from the physical relevance of the problem.
There is a strong similarity between multi-alphabet subset states and advanced SCESS --- defined in \cref{app:adv-scess}.
The latter has a more refined underlying structure, while the former can approach more general states.
Yet, the preparability of multi-alphabet subset states is up for debate in many `simple' circumstances.

Certain perturbative gadgets require state families with additional structure to account for ancilla qubits.
Even with isometries, fixed-weight states may not be sufficient to capture the necessary structure; for example using the gadgets from Ref.~\cite{schuch2009computational}.
It is an open problem to develop weight-preserving gadgets for these constructions.
Here, we define two natural extensions to fixed-weight states that may be useful in this context.

\begin{definition}[Weight-$(k\to q)$ Encoded State]\label{def:weight_k_to_q_encoded_states}
    Let $Y_{n,k} \subset \B^n$ be a set of binary strings of length $n$ with Hamming weight $k$ such that $\abs{Y_{n,k}} = \poly{n}$.
    Let $\mathcal{V} = \{V_j\}_{j\in[n]}$ be an ordered set of isometries such that: for each $j$ we have $V_j : \mathbb{C}^2 \to (\mathbb{C}^2)^{\otimes m_j}$ with $m_j = O(1)$ and for any $\ket{y}$ in the image of $\mathcal{V}$ the Hamming weight of $y$ is $q$.
    The encoded weight-$k$ state over $(Y_{n,k},\mathcal{V})$ is defined as
    \begin{equation*}
        \ket{Y_{n,k,\mathcal{V}}} \coloneqq \frac{1}{\sqrt{\abs{Y_{n,k}}}}\sum_{x\in Y_{n,k}} \bigotimes_{j\in[n]} V_j\ket{x_j}.
    \end{equation*}
\end{definition}

Weight-$(k\to q)$ encoded states are capable of retaining the fixed-weight structure of the input state under the sophisticated perturbative gadgets~\cite{cubitt2018universal,piddock2017complexity,schuch2009computational}.
The idea of using a fixed-weight state is to demonstrate that the Hamming weight of the state is reasonably small relative to the size of the input.
The state $\ket{Y_{n,k,\mathcal{V}}}$ is a fixed-weight state for the encoded system and therefore carries the same physical interpretation as the original state.
However, the geometrical perturbative gadgets outlined above do not fit this model since a tensor product of a polynomial number of $\ket{+}$ states does not have a fixed weight.

It is possible to add additional elements into fixed-weight states to cover a larger portion of the Hilbert space.
Specifically, we consider a window of Hamming weights and a state defined accordingly.
This has the physical significance of ``checking'' over an informed range of possible occupancies --- such information may be utilised in active space calculations in quantum chemistry.

\begin{definition}[Windowed Weight States]\label{def:windowed_weight_states}
    Let $\boldsymbol{k} = (k_1,k_2,\dots,k_d)$ be an increasing sequence of integers where $0 \leq k_j < k_{j+1} \leq n$.
    Consider the subset $\mathcal{X}_{n,\boldsymbol{k}} = \bigcup_{k\in \boldsymbol{k}} X_{n,k} \subset \B^n$ such that $\abs{\mathcal{X}_{n,\boldsymbol{k}}} = O(\poly{n})$.
    The polynomially-sized spectrum weight state over $\mathcal{X}_{n,\boldsymbol{k}}$ is defined as
    \begin{equation*}
        \ket{\mathcal{X}_{n,\boldsymbol{k}}} \coloneqq \frac{1}{\sqrt{\abs{\mathcal{X}_{n,\boldsymbol{k}}}}}\sum_{k \in \boldsymbol{k}} \sqrt{\abs{X_{n,k}}}\,\ket{\hat{X}_{n,k}}.
    \end{equation*}
\end{definition}

\section{Efficient State Preparation from Sparse Classical Data}\label{app:bqp_containment}
This appendix provides a proof of the statements in \cref{sec:bqp_containment}.
We prove that all the classes of states, except Fendley states, considered in this work can be efficiently prepared from a classical description.
The inclusion of the problem, with a given state type, in \cl{BQP} is then followed by an application of \cref{lma:qpe}.
As noted in \cref{sec:bqp_containment}, Ref.~\cite{cade2022complexity} adopts an alternative problem definition, so our results do not conflict with their framework; however, they are essential for the statements in Ref.~\cite{cade2023improved}.

\begin{definition}[\sc{[State Type] Preparation} Problem]
    Given a classically efficient description of a target state $\ket{\phi}$, define a unitary $U$ that prepares a state $\ket{\psi}$ such that $\|\ket\psi-\ket\phi\|\leq1/\poly{n}$, where both the description of $U$ and its implementation require only polynomial space and time.
\end{definition}

It is well-known that preparing an arbitrary quantum state is challenging; even when an amplitude description is provided, such descriptions generally require exponential space. 
Consequently, the state types considered here are not arbitrary; the states we consider have a specific structure that allows for efficient classical descriptions.

\subsection{Approximate State Preparation}\label{app:approx-state-prep}
We first consider the implications of approximate state preparation; in particular, we provide guarantees on the overlap between the prepared state and the ground space of the local Hamiltonian, even when the state is not exact.
\cref{lma:geometric-lemma} places a bound from below on the overlap between the prepared state and the ground state of the Hamiltonian.
Furthermore, assume that we prepare the state $\ket{\psi}$ such that $\norm{\ket{\psi} - \ket{\phi}} \leq \varepsilon$, for some $\varepsilon \geq 1/\poly{n}$.
Then, the overlap between the prepared state and the ground state of the Hamiltonian is at least $F_{\psi,\g} \geq (\sqrt{\delta} - \varepsilon)^2 \eqqcolon \kappa$.
For $\delta \in (1/\poly{n},1-1/\poly{n})$ (such that $\varepsilon < \sqrt{\delta}$), the resultant overlap $\kappa$ also lies in a similar range, i.e., $\kappa \in (1/\poly{n},1-1/\poly{n})$.
This is at least an inverse-polynomial lower-bound on the overlap between the prepared state and the ground state of the Hamiltonian, and hence \cref{lma:qpe} can be applied to estimate the ground-state energy of the Hamiltonian.
An even tighter bound on $\varepsilon$, say $\varepsilon \geq 1/\exp(n)$, will also be sufficient to apply \cref{lma:qpe}.

\subsection{Permutation Grover-Rudolph Proposal}\label{app:permutation-grover-rudolph}
We now describe elements of the algorithm that prepare a semi-classical subset state $\ket{\hat{C}}$ from a classical description of the subset $C$.
We call a subset $C$ as \emph{sparse} if $|C|=\poly{n}$.
\citet{ramacciotti2024simple} proposed a two step algorithm for preparing a state $\ket{\hat{C}}$ from $C$.
Specifically, the algorithm uses the Grover-Rudolph algorithm to prepare a uniform superposition state of ordered indices, i.e.,
\begin{equation*}
    \ket{\psi} = \frac{1}{\sqrt{|B|}}\sum_{j=0}^{|B|-1} \ket{j},
\end{equation*}
where the size of the subset $B$ is equal to the size of the subset $C$.
The Grover-Rudolph portion of the algorithm is sufficient to prepare the state with non-uniform, but normalised, coefficients.
However, for the present purposes, we simply require a uniform superposition.

The second step of the algorithm is to permute the computational basis states, mapping each $\ket{j}$ to a corresponding $\ket{x}$ for some $x\in C$.
This is achieved by using cycle-based permutations. 
Specifically, \cite[Algorithm 4]{ramacciotti2024simple} outlines the classical algorithm for constructing each cycle.
The more interesting aspect of this algorithm is the transformation of the permutation into a quantum circuit.

\subparagraph{Permutation Unitary.} For a finite set $X$ of $k$ elements, we define a permutation $\sigma$ to be a bijective mapping from the set to itself. 
Moreover, $\sigma(X)$ is a trivial or non-trivial rearrangement of the elements in $X$.
A standard theorem of abstract algebra states that \emph{every permutation of a finite set can be written as a cycle or as a product of disjoint cycles}~\cite{gallian2021contemporary}.
Therefore, for any permutation $\sigma$ we expressed it as a product of disjoint cycles:
\begin{equation*}
    \sigma = c_0\,c_1\cdots c_m,
\end{equation*}
where $c_i$ are the cycles of $\sigma$ and $m \leq k-1$ is the number of cycles.
We say that a cycle $c_i$ has length $l_i$ if it permutes $l_i$ elements of the set $X$.
For example, $c_i = (x_{i_0}, x_{i_1}, x_{i_2})$ is a cycle of length $3$ that permutes the elements $x_{i_0} \mapsto x_{i_1}$, $x_{i_1} \mapsto x_{i_2}$, and $x_{i_2} \mapsto x_{i_0}$.

Let the elements of a set $X$ be $n$-bit strings, i.e., $X = \{x_0, x_1, \ldots, x_{k-1}\}$, where $x_j \in \{0, 1\}^n$.
Define the computational basis states as $\ket{x_j} = \bigotimes_{r=0}^{n-1} \ket{x_j[r]}$, where $x_j[r]$ is the $r$-th bit of the string $x_j$.
Our goal is to construct a unitary operator $U_\sigma$ that implements the permutation $\sigma$ on computational basis states in the set $X$.
It follows that $U_\sigma = \prod_{i=0}^{m} U_{c_i}$, where $U_{c_i}$ is the unitary operator that implements the cycle $c_i$.

Let $c$ be a cycle of length $l$ such that $c = (x_0, x_1, \ldots, x_{l-1})$.
It follows that we can decompose the unitary operator $U_c$ as $U_c = (\prod_{i=0}^{l-1} g_{i})B_0$, where $g_i$ is a unitary operator that implements a conditional Gray code rotation on the $i$-th and $(i+1)$-th elements of the cycle and $B_0$ is a boundary gate to reset an ancilla qubit.
Moreover, consider the cycle step $x_i \mapsto x_{i+1}$.
The Gray code rotation from $x_i$ to $x_{i+1}$ is defined as a sequence of bit flip operations that transforms the $n$-bit string $x_i$ into the $n$-bit string $x_{i+1}$.
For the $r$-th bit of the strings, define $F_i^r = x_i[r] \oplus x_{i+1}[r]$.
Define a unitary operator
\begin{equation*}
    V_{i} = \prod_{r=0}^{n-1} X_r^{F_i^r},
\end{equation*}
which is a string of at most $n$ Pauli-$X$ operators.
Let $CV_{i}$ be the controlled-$V_i$ operator, which applies $V_i$ to the target qubits if the control qubit is $\ket{1}$.
Let $\boldsymbol{C}_bX$ denote the multi-controlled $X$ gate with $n$ control qubits dictated by the bit string $b$ and one target qubit.
For example, when $b = 1^2$, the multi-controlled $X$ gate $\boldsymbol{C}_{1^2}X$ is equivalent to the Toffoli gate.
Let the unitary operator $g_i$ be defined over $n$ workspace and one ancilla qubit, such that
\begin{equation*}
    g_i = (\boldsymbol{C}_{x_i}X) \cdot (CV_{i}),
\end{equation*}
where for the first gate the target qubits are the $n$ workspace qubits and for the second gate the target qubit is the ancilla qubit.
The operator $g_i$ implements the cycle step $x_i \mapsto x_{i+1}$.
Therefore, the unitary operator $U_c$ can be expressed as above where we include the gate 
\begin{equation*}
    B_0 = \boldsymbol{C}_{x_0}X,
\end{equation*}
to reset the ancilla qubit due to $g_{l-1}$.
The sequence of cycles $c_i$ must be implemented in series over the workspace and ancilla registers.
Hence, we conclude that 
\begin{equation*}
    U_\sigma = \prod_{i=0}^{m} U_{c_i} = \prod_{i=0}^{m} \bigg(\big(\prod_{j=0}^{l_i-1} g_{j_i}\big)B_{0_i} \bigg) = \prod_{i=0}^{m} \bigg(\big(\prod_{j=0}^{l_i-1} (\boldsymbol{C}_{x_{j_i}}X) \cdot (CV_{j_i})\big)\boldsymbol{C}_{x_{0_i}}X \bigg),
\end{equation*}
which is a sequence dominated by \Gate{Toffoli}, \Gate{Cnot} and single-qubit unitaries when decomposed into elementary gates.
A straightforward analysis shows that the bound from above on the gate cost scales as $O(ln)$ for a given cycle of length $l$ and $n$ workspace qubits.
Therefore, to implement the permutation $U_\sigma$ for our subset $C$, we need to implement $m\cdot O(ln) = O(|C| n)$ gates.

\begin{figure}[!ht]
    \centering
    \begin{tikzpicture}
        \pic{cycle};
    \end{tikzpicture}
    \caption{An example of part of the unitary operator $U_c$ that implements the cycle increment $x_i \mapsto x_{i+1}$. The first multi-controlled $X$ gate is applied to the workspace qubits, controlled by the $n$-bit string $x_i$, with the target qubit as the ancilla qubit. Notice that the ancilla qubit is only flipped if the workspace qubits are in the state $\ket{x_i}$. The second gate is a controlled-$V_i$ operator that applies the Gray code rotation $V_i$ to the workspace qubits, conditioned on the ancilla qubit being $\ket{1}$.}
    \label{fig:cycle}
\end{figure}

\EfficientStatePreparation*

\begin{proof}
    We begin by applying the standard {\tt Grover-Rudolph} algorithm to prepare the state
    \begin{equation*}
        \ket{\psi}=\frac{1}{\sqrt{|C|}}\sum_{j=0}^{|C|-1} \ket{j}, 
    \end{equation*}
    which requires $\lceil \log_2(\abs{C}) \rceil$ qubits, each initialised in the state $\ket{0}$. 
    This step has complexity $O(|C|\, \log_2(\abs{C}))$. 
    Next, we pad the register with $r = O(n)$ ancilla qubits initialised in the state $\ket{0}$ to extend the Hilbert space to dimension $2^n$. 
    We then apply the {\tt SparsePermutation} algorithm~\cite[Algorithm 4]{ramacciotti2024simple} to the state $\ket{\psi}\ket{0^r}$.
    The purpose of this algorithm is to permute the computational basis states, mapping each $\ket{j} $ to a corresponding $\ket{x}$for some $x\in C$, by using cycle-based permutations. 
    This classical algorithm has a worst-case runtime of  $O(|C| n)$ that returns a list of cycles.
    The final step uses~\cite[Algorithm 7]{ramacciotti2024simple} and the above procedure to implement the permutation as a quantum circuit.
    Assuming each gate incurs a constant cost, both the classical and quantum complexities of this step scale as $O(|C| n)$.
    
    In conclusion, encompassing all the steps described above into a single algorithm --- {\tt PermutationGrover-Rudolph} --- produces an efficient quantum algorithm that prepares the state $\ket{\hat{C}}$ from a classical description of the subset $C$.
    This proves \cref{thmt@@EfficientStatePreparation}.
\end{proof}

\subsection{Inclusion of Isometries}\label{app:isometries}
Semi-classical encoded subset states are extensions of the semi-classical subset states, where each qubit is acted on by an isometry $V_j$ which is defined from a global isometry $\mathcal{V}$, e.g., $\mathcal{V} = \bigotimes_{j=1}^{n} V_j$.
Each isometry $V_j$ maps a single qubit to a constant number of qubits $m_j$.

\EfficientEncodedStatePreparation*

\begin{proof}
    Isometries are transformations between Hilbert spaces that preserve the inner product. 
    In particular, an isometry $V$ from $p$ to $q$ qubits is a $2^q \times 2^p$ complex matrix satisfying $V^\dagger V = I_{2^p}.$ A complete description of $V$ requires $2^{p+q+1} - 2^{2p} - 1$ real parameters. 
    Since any isometry can be seen as a collection of columns from a unitary matrix, there is considerable freedom in its embedding, and the associated unitary is not unique. 
    Additionally, the isometry can be embedded into a unitary using a block-encoding. 
    As a result, any isometry from $p$ to $q$ qubits can be efficiently decomposed into \Gate{Cnot} and single-qubit gates.
    In the event $p=q$, the isometry is a unitary operator, and the Solovay-Kitaev theorem ensures that it can be approximated to within inverse-exponential error in polynomial time.

    Under our definition of \textit{semi-classical encoded subset states}, each isometry $V_j : \mathbb{C}^2 \to (\mathbb{C}^2)^{\otimes m_j}$ maps single qubit to a set of $m_j\ge2$ qubits. 
    Consequently, we define the global isometry as $\mathcal{V} = \bigotimes_{j=1}^{n} V_j = \prod_{j=1}^{n} (I\otimes \cdots \otimes V_j \otimes \cdots \otimes I)$. 
    Known upper- and lower-bounds on the number of gates required to implement an isometry of size $2^q \times 2^p$ are given in Ref.~\cite[Table 2]{iten2016quantum}.
    In general, the cost is exponential in both $p$ and $q$. 
    However, the cost in our setting will not be exponential in $n$ since both $p$ and $q$ are constant.
    The total number of gates required to implement each isometry is $O(1)$, and thus the sequence of isometries can be efficiently implemented, requiring $O(n)$ gates. 
\end{proof}

It follows trivially that if the isometry $\mathcal{V}$ is replaced with a unitary $U$ comprised of a polynomial number of single-qubit and $2$-local gates, the resulting state can be prepared efficiently.
For simple examples, see \cref{fig:easyprep}, appropriate for perturbative gadgets that require ancilla qubits.
This prompts an alternative reduction to the \sc{Guided Local Hamiltonian} problem, where the guiding state has overlap with a different sector of the history state, see \cref{app:alt_idling_reductions}.

\begin{figure}[!ht]
    \centering
    \begin{tikzpicture}
        \pic{easyprep};
    \end{tikzpicture}
    \caption{
        Simple examples of how to prepare guiding states using the subroutines of \cref{sec:bqp_containment} and additional constant-depth circuits to add the ancillae qubits used in the gadget reductions.
    }
    \label{fig:easyprep}
\end{figure}

\subsection{Preparation of Matrix Product States}\label{app:mps_prep}

\begin{lemma}
    \label{lemma:mps_prep}
    MPSs can be efficiently prepared from the classical description detailed in \cref{sec:state_type_variations}
\end{lemma}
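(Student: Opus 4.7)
The plan is to leverage the sequential preparation paradigm for matrix product states originally introduced by Sch\"on, Solano, Verstraete, Cirac, and Wolf, in which the MPS is built up one physical site at a time using a register of bond qubits plus one (or a constant number of) physical qudit(s). The classical description consists of the tensor set $\{A_j^{\underline{\sigma}_j}\}$ with bond dimension $\chi$ and physical dimension $d$, so the total classical data is $\Theta(n\,\chi^2\,d)$, which is polynomial whenever both $\chi$ and $d$ are polynomial in $n$. My proof will show that this data suffices to build a polynomial-depth circuit on $n + O(\log \chi)$ qubits that prepares the target MPS exactly (or up to inverse-exponential error after gate synthesis).

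First I would reduce the description to the \emph{left-canonical} form. By applying a sequence of QR (or SVD) decompositions to the raw tensors, starting from the leftmost site and sweeping right, each transformed tensor $A_j^{\underline{\sigma}_j}$, viewed as a $\chi d \times \chi$ matrix indexed by $(\underline{\sigma}_j, \alpha_{j})$ on the row side and $\alpha_{j+1}$ on the column side, becomes an isometry $W_j^\dagger W_j = \mathbb{I}_\chi$. This preprocessing is purely classical and runs in $O(n\,\poly{\chi,d})$ time. Next I would embed each isometry $W_j$ into a unitary $U_j$ acting on the bond register (of $\lceil \log_2 \chi \rceil$ qubits) together with a freshly supplied physical qudit initialised in $\ket{0}$; this is a standard isometry-to-unitary completion, and for a $\chi d \times \chi$ block its full unitary extension has size at most $\chi d \times \chi d$.

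With the $U_j$ in hand, the circuit is assembled by initialising the bond register in $\ket{0^{\lceil \log_2 \chi \rceil}}$, then applying $U_1, U_2, \dots, U_n$ in order, each time consuming the next physical qudit; a trivial trace-out (or post-selection on the trivial boundary bond state) at the end yields the MPS on the physical register. Correctness is exactly the sequential-preparation theorem: reading the contractions from left to right recovers the desired amplitudes $\mathrm{Tr}\bigl[\prod_j A_j^{\underline{\sigma}_j}\bigr]$. The only cost analysis left is decomposing each $U_j$, a unitary on $O(\log \chi) + \log d$ qubits, into elementary one- and two-qubit gates. Using the Cosine--Sine (or QR-based) decomposition, each $U_j$ admits an exact decomposition with $O(\chi^2 d^2)$ gates, and the Solovay--Kitaev theorem guarantees synthesis from a fixed universal set with only polylogarithmic overhead in the target precision. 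The overall gate count is $O(n\,\chi^2 d^2\,\mathrm{polylog}(1/\varepsilon))$, polynomial whenever $\chi, d = \poly{n}$, so the resulting state satisfies $\|\ket{\psi}-\ket{\Psi}\| \leq \varepsilon$ for any $\varepsilon \geq 1/\poly{n}$.

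The main obstacle I anticipate is not the mathematical content of the sequential construction, which is classical, but controlling the overhead when the bond dimension $\chi$ scales as a nontrivial polynomial in $n$; in particular, ensuring that the canonicalisation preprocessing preserves the global normalisation and that the isometry completion does not introduce accumulated numerical error across $n$ sites. This is handled by working in the left-canonical gauge throughout, so the norm is governed by a single scalar at the right boundary, and by using a precision budget of $\varepsilon/n$ per local unitary so that the telescoping error in the final state remains bounded by $\varepsilon$. Once these bookkeeping details are in place, the combination of \cref{thmt@@EfficientStatePreparation}-style approximate preparation and \cref{thrm:QPE} immediately yields the corresponding entry in \cref{thm:bqp_contain}.
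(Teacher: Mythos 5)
Your proposal is correct and follows essentially the same route as the paper, which simply cites the sequential preparation result of Sch\"on, Solano, Verstraete, Cirac, and Wolf for a classical preprocessing cost of $O(n\,\poly{\chi})$ and a circuit of $O(n\,\chi\,\log(\chi)^2\,\log(n/\varepsilon))$ gates with $\lceil\log\chi\rceil$ ancillae; you have merely spelled out the construction (canonicalisation, isometry completion, sequential application, gate synthesis) that the cited reference contains. Your gate-count bound is somewhat looser than the one the paper quotes, but both are polynomial whenever $\chi$ and the physical dimension are polynomial in $n$, which is all the lemma requires.
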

\begin{proof}
    The general structure for a Matrix Product State (MPS) is a state of the form
    \begin{equation*}
        \ket{\Psi} = \sum_{\underline{\sigma}} {\rm Tr}\big[(\prod_{j\in[n]} A^{\underline{\sigma}_j}) \big]\, \ket{\underline{\sigma}}.
    \end{equation*}
    MPSs are completely specified by the set of tensors $\{A_j^{\underline{\sigma}_j}\}$ and the physical qudits $\underline{\sigma}_j$.
    It requires a classical space complexity of $\Theta(n\,\chi^2\,\dim{(\underline{\sigma}_j)})$ to specify the state (where $\chi$ is the bond dimension).
    This is, of course, efficient provided both the bond dimension and physical dimension are bounded by a polynomial in $n$.
    It can also be \sc{MPS Preparation} problem can be resolved using $O(n \, \poly{\chi})$-time classical pre-processing followed by a quantum circuit of $O(n \, \chi \, \log(\chi)^2 \, \log(n/\varepsilon))$ gates with $\lceil \log(\chi) \rceil$ ancillae qubits~\cite{schon2005sequential}.
\end{proof}
\subsection{Preparation of Gaussian States}\label{app:Gauss_prep}
\begin{lemma}
    \label{lma:Gauss_prep}
    Gaussian states can be prepared efficiently from the classical description detailed in \cref{sec:state_type_variations}
\end{lemma}
\begin{proof}
    For any orthogonal matrix $Q\in O(2n)$, a match-gate circuit $U_{\tn{Q}}$ is a unitary that satisfies $U_{\tn{Q}} c_v U_{\tn{Q}}^\dagger=\sum_vQ_{v,u}c_u$ for any $v$. 
    From this definition, it follows that $U_{\tn{Q}^T}=U_{\tn{Q}}^\dagger$. 
    Since the product of Majorana operators $c_v$ forms a complete basis for linear operators, it suffices to specify how a unitary acts on the $2n$ Majoranas $\{c_v\}_v$ to uniquely specify the unitary up to a phase~\cite{bittel2025optimal}
    Specifically, for a given orthogonal matrix, there exists a known exact implementation of the associated match-gate circuit, which can be implemented using $O(n^2)$ local 2-qubit gates~\cite{jiang2018quantum}.
\end{proof}

\subsection{Main Result of \cref{sec:bqp_containment}}\label{app:bqp_containment_main}
\begin{theorem}\label{thrm:SCSS-GLHP}
    The \sc{Semi-Classical Subset State Guided Local Hamiltonian} problem is in \cl{BQP}.
\end{theorem}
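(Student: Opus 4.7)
The plan is to combine the efficient state preparation routine for semi-classical subset states (Lemma~\ref{thmt@@EfficientStatePreparation}) with Quantum Phase Estimation (Lemma~\ref{thrm:QPE}) and the geometric overlap bound (Lemma~\ref{lma:geometric-lemma}). Given an instance comprising a $k$-local Hamiltonian $H$ with $\norm{H}\le 1$, thresholds $a<b$ satisfying $b-a\ge 1/\poly{n}$, a classical description of a subset $C\subset\B^n$ with $|C|=\poly{n}$ specifying the SCSS $\ket{\hat{C}}$, and the overlap promise $\norm{\Pi_0\ket{\hat{C}}}^2\ge \delta$ for some $\delta\ge 1/\poly{n}$, I would build a \cl{BQP} algorithm in three stages.

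First, I would invoke the {\tt PermutationGrover-Rudolph} subroutine on the description of $C$ to produce a state $\ket{\psi}$ satisfying $\norm{\ket{\psi}-\ket{\hat{C}}}\le \varepsilon$ for a target precision $\varepsilon$. By Lemma~\ref{thmt@@EfficientStatePreparation} this runs in time $O(|C|\,n)$ up to the single-qubit synthesis overhead, which is polynomial in $n$ whenever $\varepsilon\ge 1/\poly{n}$. Choosing $\varepsilon$ polynomially small enough that $\varepsilon\le \sqrt{\delta}/2$, the geometric bound of Lemma~\ref{lma:geometric-lemma} and the approximate-preparation analysis of Section~\ref{subsec:approx-state-prep} give $F_{\psi,\g}\ge (\sqrt{\delta}-\varepsilon)^2\ge \delta/4$, preserving an inverse-polynomial overlap between the prepared state and the ground space of $H$.

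Second, I would run Quantum Phase Estimation, as in Lemma~\ref{thrm:QPE}, on the pair $(H,\ket{\psi})$ to obtain an estimate $\hat{\lambda}$ of $\lambda_0$ with additive error $\varepsilon'<(b-a)/4$ and failure probability at most $\eta = 1/3$. Because $\delta/4$, $\varepsilon'$, and $\eta$ are all at least inverse-polynomial (with $\eta$ constant), Lemma~\ref{thrm:QPE} guarantees that the total cost, including the $O((1/\delta)\log(1/\eta))$ fresh re-preparations of $\ket{\psi}$ required by QPE, remains polynomial in $n$. The decision rule is to accept if $\hat{\lambda}\le (a+b)/2$ and reject otherwise; the accuracy of $\hat{\lambda}$ ensures correctness in both the \sc{Yes} and \sc{No} cases with probability at least $2/3$, placing the problem in \cl{BQP}.

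The main technical subtlety, rather than a genuine obstacle, is the consistent calibration of the state-preparation precision $\varepsilon$, the QPE precision $\varepsilon'$, and the QPE failure probability $\eta$ against the promise parameters $\delta$ and $b-a$ so that all scale at most inverse-polynomially. One point worth making explicit is that Lemma~\ref{thrm:QPE} consumes the input state on each call, so the preparation circuit from Lemma~\ref{thmt@@EfficientStatePreparation} must be compiled into the QPE routine and re-run each repetition; this only contributes a further polynomial factor, and hence does not affect membership in \cl{BQP}.
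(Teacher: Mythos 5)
Your proposal is correct and follows essentially the same route as the paper: the paper derives this theorem by applying the {\tt PermutationGrover-Rudolph} preparation of Lemma~\ref{thmt@@EfficientStatePreparation}, using the geometric bound of Lemma~\ref{lma:geometric-lemma} (as in Section~\ref{subsec:approx-state-prep}) to preserve inverse-polynomial ground-state overlap under approximate preparation, and then invoking Quantum Phase Estimation (Lemma~\ref{thrm:QPE}). Your write-up simply makes explicit the parameter calibration and decision rule that the paper leaves implicit.
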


\begin{theorem}\label{thrm:SCESS-GLHP}
    The \sc{Semi-Classical Encoded Subset State Guided Local Hamiltonian} problem is in \cl{BQP}.
\end{theorem}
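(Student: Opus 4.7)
The plan is to mirror the proof of \cref{thrm:SCSS-GLHP} with the additional isometry-handling step provided by \cref{thmt@@EfficientEncodedStatePreparation}. Given an instance of the \sc{SCESS-Guided Local Hamiltonian} problem consisting of a $k$-local Hamiltonian $H$ with $\|H\|\leq 1$, thresholds $a,b\in[0,1]$ with $b-a\geq 1/\poly{n}$, and a classical description $(C,\mathcal{V})$ of a semi-classical encoded subset state $\ket{C_\mathcal{V}}$ satisfying $\|\Pi_0\ket{C_\mathcal{V}}\|^2\geq\delta$ with $\delta\geq 1/\poly{n}$, the \cl{BQP} verifier proceeds in three stages: efficient preparation, overlap propagation, and phase estimation.

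First, I would invoke \cref{thmt@@EfficientEncodedStatePreparation} to construct a quantum circuit that prepares a state $\ket{\psi}$ with $\|\ket{\psi}-\ket{C_\mathcal{V}}\|\leq\varepsilon$ for any chosen inverse-polynomial (indeed, inverse-exponential) $\varepsilon$. This proceeds by first running {\tt PermutationGrover-Rudolph} on the subset $C$ to prepare $\ket{\hat C}$ in time $O(|C|n)$, and then applying the layer of isometries $\bigotimes_j V_j$ using the constant-cost decomposition discussed in \cref{subsec:isometries}, for a total overhead of $O(n)$ additional gates on top of the base subset-state circuit. Since $|C|=\poly{n}$ and each $V_j$ acts on $O(1)$ qubits, the entire preparation is polynomial-time.

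Second, I would translate the preparation error into a ground-space overlap bound using the approximate-preparation analysis of \cref{subsec:approx-state-prep}. Choosing $\varepsilon$ sufficiently smaller than $\sqrt{\delta}$ (for instance, $\varepsilon=\sqrt{\delta}/2$, which is still inverse-polynomial) and applying \cref{lma:geometric-lemma} with $\ket{a}=\ket{\psi}$, $\ket{b}=\ket{C_\mathcal{V}}$, $\ket{c}=\ket{\g}$ yields
\begin{equation*}
    F_{\psi,\g}\;\geq\;(\sqrt{\delta}-\varepsilon)^2\;\geq\;1/\poly{n}.
\end{equation*}
Hence the prepared state already meets the hypothesis of \cref{thrm:QPE}.

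Third, I would feed $\ket{\psi}$ into the Quantum Phase Estimation routine with target additive accuracy $\epsilon=(b-a)/3$ and failure probability $\eta=1/3$. By \cref{thrm:QPE} this requires $O(\poly{n})$ repetitions and total cost $O((\epsilon\eta(\sqrt{\delta}-\varepsilon)^4)^{-1}\log^2(1/\eta))$, which is polynomial in $n$ since $\epsilon$, $\eta$, and $(\sqrt{\delta}-\varepsilon)^2$ are all at least inverse-polynomial. Comparing the resulting estimate $\hat\lambda$ to the midpoint $(a+b)/2$ and outputting \sc{Yes} iff $\hat\lambda\leq(a+b)/2$ decides the promise problem with bounded error, placing \sc{SCESS-GLH} in \cl{BQP}.

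The only non-routine step is ensuring that the preparation error $\varepsilon$ propagates correctly into the QPE success guarantee; this is precisely handled by \cref{lma:geometric-lemma}, which shows that an inverse-polynomial preparation error cannot destroy an inverse-polynomial overlap. All other steps are direct invocations of prior results, so there is no genuine obstruction once efficient preparation of SCESSs has been established.
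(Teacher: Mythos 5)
Your proposal is correct and follows essentially the same route as the paper: the paper's own argument for this theorem is precisely an application of \cref{thmt@@EfficientEncodedStatePreparation} (Grover--Rudolph plus permutation for $C$, then the constant-cost isometry layer) followed by \cref{thrm:QPE}, with the approximate-preparation error absorbed via \cref{lma:geometric-lemma} as in \cref{subsec:approx-state-prep}. Your write-up merely makes explicit the parameter choices that the paper leaves implicit.
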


These follow from an application of \cref{lma:qpe} preceded by \cref{thmt@@EfficientStatePreparation} and \cref{thmt@@EfficientEncodedStatePreparation} respectively
Since the \textit{fixed-weight states} defined in \cref{sec:state_type_variations} are a special case of subset states, we can trivially infer the following Corollary.

\begin{corollary}\label{thrm:kweight-GLHP}
    The \sc{Fixed-Weight State Guided Local Hamiltonian} problem is in \cl{BQP}.
\end{corollary}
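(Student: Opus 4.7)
The plan is to observe that, once a classically efficient description of a fixed-weight state $\ket{X_{n,k}}$ is given, the problem essentially reduces to the setting already handled by \cref{thrm:SCSS-GLHP} (semi-classical subset states). The only subtlety is that a fixed-weight state is, a priori, any superposition supported on the Hamming-weight-$k$ slice of $\B^n$, and this slice can have exponentially many elements (e.g.\ $\binom{n}{n/2}$), so one must first argue that an ``efficient classical description'' forces the supporting set $S \coloneqq \mathrm{supp}(\ket{X_{n,k}})$ to have $|S|=\poly{n}$. This is immediate: a succinct description must list the amplitudes $\{\alpha_x\}_{x\in S}$ together with their associated strings, so $|S|$ is at most polynomial; uniform fixed-weight states, if allowed to range over all $\binom{n}{k}$ basis strings, can be handled separately via known Dicke-state preparation circuits.

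With $|S|=\poly{n}$ in hand, I would simply identify $S$ with a sparse subset $C\subset\B^n$ whose every element has Hamming weight $k$. First, I would invoke \cref{thmt@@EfficientStatePreparation} (the \texttt{PermutationGrover-Rudolph} procedure) to build $\ket{\hat{C}}$ in time $O(|C|\, n)$ from the list of strings in $C$. If the amplitudes $\alpha_x$ are uniform, this is precisely $\ket{X_{n,k}}$ and we are done; if the amplitudes are non-uniform, I would use the full (non-uniform) Grover-Rudolph step instead of its uniform variant, which still runs in time $\poly{n,|C|}$ because the prefix sums of the amplitudes can be computed classically from the succinct description. The resulting state $\ket{\psi}$ satisfies $\|\ket{\psi}-\ket{X_{n,k}}\|\le 1/\poly{n}$, and by the geometric argument in \cref{subsec:approx-state-prep} the overlap with the ground state remains in $(1/\poly{n},1-1/\poly{n})$.

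Finally, I would feed $\ket{\psi}$ into \cref{thrm:QPE}, yielding an $\varepsilon$-additive estimate of $\lambda_0$ in time $\poly{n,1/\varepsilon,1/\delta}$, which decides the promise under the required gap $b-a\ge 1/\poly{n}$. Consequently, the \textsc{Fixed-Weight State Guided Local Hamiltonian} problem is in \cl{BQP}, which is the content of the corollary.

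The only non-routine step is ensuring that the reduction to an SCSS-like state is valid when the amplitudes are not uniform. I expect this to be the main (mild) obstacle: one must verify that the permutation phase of \texttt{PermutationGrover-Rudolph} does not interfere with the amplitude-shaping phase, but since the permutation acts as a basis relabelling after the superposition has been prepared, the two commute as required, so no genuine technical complication arises beyond bookkeeping.
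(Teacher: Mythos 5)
Your proposal is correct and follows essentially the same route as the paper, which derives this corollary in one line by noting that fixed-weight states (with a polynomial-size description) are a special case of subset states, so \cref{thmt@@EfficientStatePreparation} and \cref{thrm:QPE} apply directly. Your additional care about the support size forced by the succinct description and about non-uniform amplitudes (handled by the full Grover-Rudolph step, as the paper itself notes in \cref{subsec:permutation-grover-rudolph}) is a faithful elaboration of the same argument rather than a different approach.
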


Combining \cref{lma:qpe} with \cref{thrm:SCSS-GLHP,thrm:kweight-GLHP,thrm:SCESS-GLHP,lemma:mps_prep,lma:Gauss_prep}, is sufficient to conclude \cref{thm:bqp_contain}.

\subsection{History State Preparation}\label{app:history-state-prep}
Here we consider a family of states we call \emph{Unitarily Transformed Subset States}, defined as 
\begin{equation*}
    \ket{S_{U}} = \frac{1}{\sqrt{\abs{S}}}\sum_{z \in S} U_z \ket{z},
\end{equation*}
where $U = \{U_z\}_{z \in S}$ is a set of unitaries that act on the state $\ket{z}$ and $S$ is a subset of at most a polynomial number of computational basis states.
Given the classical description of the set $S$ and the unitaries $U_z$, it is possible to prepare the state $\ket{S_{U}}$ efficiently.

The history state $\ket{\eta}$ is an example of a unitarily transformed subset state.
The description follows as: $S = \{x\} \times \{0^m\} \times D_K$ where $D_K$ is the set of all unary strings from $0$ to $K$ and $U_z = U_t \cdots U_1$ is the sequence of local gates applied to the workspace qubits.
Notice, for simplicity, we are assuming the clock register is unary.
Recall that the history state is defined as
\begin{equation*}
    \ket{\eta} = \frac{1}{\sqrt{K+1}}\sum_{t=0}^{K} U_t \cdots U_1 \ket{x,0^m}\ket{1^t\,0^{K-t}}.
\end{equation*}

\begin{figure}[!ht]
    \centering
    \begin{tikzpicture}
        \pic{unary};
    \end{tikzpicture}
    \caption{An example of a circuit that prepares a uniform superposition of time steps in unary.}
    \label{fig:unary-time-prep}
\end{figure}
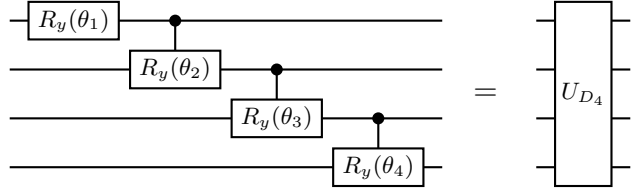

We perform a sequence of controlled rotation gates to prepare the superposition of time steps.
Our state preparation for the clock register has linear depth in $K$ with each rotation $\theta_j$ computable in classical polynomial time.
An example of such a circuit is shown in \cref{fig:unary-time-prep} for $\frac{1}{\sqrt{5}}(\ket{0000} + \ket{1000} + \ket{1100} + \ket{1110} + \ket{1111})$.
Each rotation gate $R_y(\theta_i)$ acts as $R_y(\theta_i) \ket{0} = \alpha_i\ket{0} + \beta_i\ket{1}$, where $\alpha_i = \cos(\theta_i/2)$ and $\beta_i = \sin(\theta_i/2)$.
Our circuit prepares the normalised state 
\begin{equation*}
    \ket{\psi_K} = \sum_{z \in D_K} f(z) \ket{z},
\end{equation*}
such that 
\begin{equation*}
    f(x) = \prod_{j=1}^{K} \alpha_j^{z_j \oplus 1} \beta_j^{z_j}.
\end{equation*}
To prepare the instance $\ket{x}$ we perform a series of Pauli-$X$ gates, i.e., $\boldsymbol{X} = \prod_{i=1}^{n} X_i^{x_i}$.
The history state is then prepared as shown in \cref{fig:history-state-prep}.
The depth of the circuit is $\Theta(K)$, where $K$ is the number of local gates in the circuit.

\begin{figure}[!ht]
    \centering
    \begin{tikzpicture}
        \pic{historystate};
    \end{tikzpicture}
    \caption{An example of a circuit that prepares the history state $\ket{\eta}$ from the instance $\ket{x}$ and the clock mapping $\mu$. The circuit $U$ is a sequence of local gates that are applied to the workspace qubits after an initial state preparation procedure that prepares an underlying computational basis state.}
    \label{fig:history-state-prep}
\end{figure}
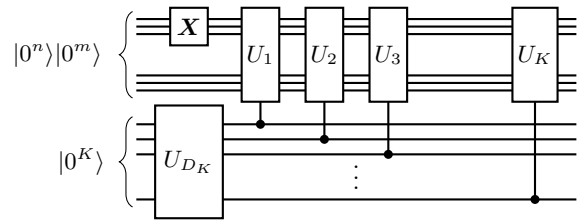

It follows that the ``non-pre-idled'' \clw{BQP}{hardness} proof discussed in \cref{sec:bqp_hardness} follows exactly from the results above.
Though since we define a perturbed Hamiltonian, the history state is not the exact ground state of the final Hamiltonian.
Our fidelity overlap regime is, therefore, the same as that of the pre-idled case.
This implies that though the history state is the optimal candidate for a guiding state, it is not the only candidate capable of producing high-fidelity overlaps.
The interesting point is that we do not require such a complicated state to solve the problem in quantum polynomial-time.

We can also conclude that non-flat Feynman-Kitaev constructions will likely also be \clw{BQP}{hard} for the \sc{Guided Local Hamiltonian} problem when using a guiding state from the family of \emph{Modified Unitarily Transformed Subset States}, i.e., where there is an additional input describing the amplitudes of the state.
For example, $\{(\alpha_z,z)\}_{z \in S}$, where $\alpha_z$ are the amplitudes of the state $\ket{z}$.
The interesting scenario is whether there are subclasses of states that recover the \clw{BQP}{completeness} result.
More importantly, are there any subclasses of states that differ from those already considered?

\section{Sample and Query Access}\label{app:sample-query}
In this appendix, we introduce formal definitions for sample and query access to quantum states.
We then proceed to show that certain classes of guiding states have efficient classical algorithms allowing for both sample and query access.

\begin{definition}[(Classically Efficient) Sample Access]
    \label{def:sample-access}
    Given a normalised state $\ket{\psi}$, we say that there exists classically efficient sample-access to $\ket{\psi}$ if there exists a classical algorithm that, given the description of $\ket{\psi}$, can output a sample from the probability distribution $\abs{\braket{z}{\psi}}^2$.
\end{definition}

\begin{definition}[(Classically Efficient) Query Access]
    \label{def:query-access}
    Given a normalised state $\ket{\psi}$, we say that there exists classically efficient query-access to $\ket{\psi}$ if there exists a classical algorithm that, given the description of $\ket{\psi}$, can compute the amplitude $\braket{z}{\psi}$ for any $z \in \B^n$.
\end{definition}

\begin{definition}[(Classically Efficient) Sample-Query Access]
    Given a normalised state $\ket{\psi}$, we say that there exists sample-query access to $\ket{\psi}$ if \cref{def:sample-access} and \cref{def:query-access} are satisfied.
\end{definition}

\subsection{Proofs of Classically Efficient Sample and Query Access}\label{app:sample-query-proofs}

\subsubsection{Semi-Classical Encoded Subset States}\label{app:SCESS-sample-query}
\begin{lemma}[Efficient Sampling from SCESS States~\cite{cade2023improved}]
    \label{lma:SCESS-sample}
    Given the description of a semi-classical encoded subset state $\ket{C_{\mathcal{V}}}$, one can sample classically from the distribution on $z \in \B^{M}$ given by $\Pr[z] = \abs{\braket{z}{C_{\mathcal{V}}}}^2$ in time $\poly{n}$.
\end{lemma}

\begin{proof}
    Assume we are given the encoding of the subset $C \subset \B^n$ where $|C| = O(\poly{n})$.
    Additionally, assume we are also given an encoding of the $n$ isometries within $\mathcal{V} = \{V_j\}_{j\in[n]}$.
    The SCESS is defined as
    \begin{equation*}
        \ket{C_{\mathcal{V}}} \coloneqq \frac{1}{\sqrt{\abs{C}}} \sum_{x \in C} \bigotimes_{j\in[n]}\, V_j\ket{x_j}.
    \end{equation*}
    Let the computational basis state images lie in $\B^{M}$.
    Let $p(y_0,y_1,\dots,y_{j-1}) = |(\bra{y_0,y_1,\dots,y_{j-1}} \otimes I)\ket{C_{\mathcal{V}}}|^2$, be the probability of measuring the first $j$ qubits of the state $\ket{C_{\mathcal{V}}}$ to be in the state $\ket{y_0,y_1,\dots,y_{j-1}}$.
    For each $j \in [M]$, we can efficiently calculate $p(y_0,y_1,\dots,y_{j-1})$ since $|C| = O(\poly{n})$ and $\bigotimes_{j\in[n]}\, V_j\ket{x_j}$ is a product state of $O(1)$-size qubit states.
    We can therefore also efficiently calculate the conditional probability
    \begin{equation*}
        p(z | y_0,y_1,\dots,y_{j-1}) = \frac{p(z,y_0,y_1,\dots,y_{j-1})}{p(y_0,y_1,\dots,y_{j-1})}.
    \end{equation*}
    If the bits $y_0,y_1,\dots,y_{j-1}$ have already been sampled, we compute $p(z | y_0,y_1,\dots,y_{j-1})$ and sample the next bit by tossing a coin with bias $p(1 | y_0,y_1,\dots,y_{j-1})$.
    This process is repeated until all $M$ bits have been sampled.
    The probability of sampling the string $z$ is then $\abs{\braket{z}{C_{\mathcal{V}}}}^2$.
\end{proof}

\begin{lemma}[Efficient Query Access for SCESS States]
    \label{lma:SCESS-query}
    Given the description of a semi-classical encoded subset state $\ket{C_{\mathcal{V}}}$, one can compute $\braket{z}{C_{\mathcal{V}}}$ and $\abs{\braket{z}{C_{\mathcal{V}}}}^2$ for any $z \in \B^{M}$ in time $\poly{n}$.
\end{lemma}

\begin{proof}
    For a given $z \in \B^M$, we wish to compute the amplitude $\braket{z}{C_{\mathcal{V}}}$.
    Since each isometry $V_j$ maps a single qubit to $m_j$ qubits, we denote the substrings of $z$ as $z_{\boldsymbol{j}} \in \B^{m_j}$ for each $j \in [n]$.
    Trivially,
    \begin{equation*}
        \braket{z}{C_{\mathcal{V}}} = \frac{1}{\sqrt{|C|}} \sum_{x \in C} \prod_{j=1}^n w_j,
    \end{equation*}
    where $w_j = \bra{z_{\boldsymbol{j}}}V_j\ket{x_j}$.
    The summation over $x \in C$ is efficient since $|C| = O(\poly{n})$.
    Furthermore, each term depends on the product of overlaps $\bra{z_{\boldsymbol{j}}}V_j\ket{x_j}$, where $V_j$ is an isometry acting on $O(1)$ qubits.
    For any fixed $x_j \in \B$, the vector $V_j\ket{x_j}$ has size $O(2^{m_j})$, where $m_j = O(1)$.
    Therefore, $\bra{z_{\boldsymbol{j}}}V_j\ket{x_j}$ can be computed in constant time.
    For a given $x \in C$, the product of overlaps $\prod_{j=1}^n \bra{z_{\boldsymbol{j}}}V_j\ket{x_j}$ is efficiently computable with $O(n)$ operations.
    Normalisation follows straightforwardly.
\end{proof}

\subsubsection{Advanced Subset States}\label{app:adv-scess}
Here, we consider a more advanced class of semi-classical encoded subset states.
These are defined with the potential for a more complex perturbative reduction and also to account for the potential need for multiple different few-qubit states (see \cref{sec:lattice_reduction})
Specifically, the isometries defined as per the SCESS are globally set --- this is not the case for the advanced SCESS.

\begin{definition}[Advanced SCESS]
    \label{def:adv-SCESS}
    For any subset $C \subset \B^n$ such that $\abs{C} = O(\poly{n})$, consider a collection $\mathcal{W}_C$ of $\abs{C}$ sets ordered isometries $\mathcal{V}_{x} = \{V_{x,j}\}_{j\in[n]}$ where, for each $j$ we have $V_{x,j} : \mathbb{C}^2 \to (\mathbb{C}^2)^{\otimes m_{j}}$ with $m_{j} = O(1)$.
    The advanced semi-classical encoded subset state $\ket{C_{\mathcal{W}}}$ over $(C,\mathcal{W}_C)$ is defined as
    \begin{equation}
        \ket{C_{\mathcal{W}}} \coloneqq \frac{1}{\sqrt{\abs{C}}} \sum_{x \in C} \mathcal{V}_x \ket{x} = \frac{1}{\sqrt{\abs{C}}} \sum_{x \in C} \bigotimes_{j\in[n]} V_{x,j}\ket{x_j}.
    \end{equation}
\end{definition}

Like SCESS, the images of $\mathcal{V}_x \ket{x}$ are product states over $O(1)$ qubits.
Notice in the definition that each isometry $V_{x,j}$ maps from $\mathbb{C}^2$ to $(\mathbb{C}^2)^{\otimes m_j}$, where $m_j$ is the same for any $x \in C$.
This ensures the resulting state has components with the same dimensionality.
To show the applicability of these states, especially for the \sc{Guided Local Hamiltonian} problem's classical algorithm, we prove the following two lemmas (in \cref{app:sample-query}).

Notice that the sample-query access to advanced SCESS will be inherited by state types that are a subset.
Furthermore, it is clear that unitarily transformed subset states, as defined in the previous section, are a special case of advanced SCESS.
Therefore, the history state can be defined with respect to an isometry of this form.

\begin{lemma}[Efficient Sampling from Advanced SCESS States]
    \label{lma:adv-SCESS-sampling}
    Given the description of an advanced semi-classical encoded subset state $\ket{C_{\mathcal{W}}}$, one can sample classically from the distribution on $z \in \B^{M}$ given by $\Pr[z] = \abs{\braket{z}{C_{\mathcal{W}}}}^2$ in time $\poly{n}$.
\end{lemma}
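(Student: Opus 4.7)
The plan is to mimic the marginal-sampling (prefix-sampling) approach used in the proof of \cref{lma:SCESS-sample}, generalised to handle the $x$-dependent isometries. The main new wrinkle is that the state is no longer a ``subset state dressed by a fixed tensor product of isometries''; instead, each term $\ket{\psi_x} \coloneqq \bigotimes_{j\in[n]} V_{x,j}\ket{x_j}$ in the superposition carries its own product structure. I will therefore need to compute marginal probabilities of the form
\begin{equation*}
    p(y_1,\dots,y_r) \;=\; \bigl\lVert (\bra{y_1\cdots y_r}\otimes I)\ket{C_{\mathcal{W}}}\bigr\rVert^2,
\end{equation*}
where $y_1\cdots y_r$ is a bit string over a prefix of the $M$ output qubits, and then sample bit-by-bit from the conditional distributions $p(y_{r+1}\mid y_1,\dots,y_r)$, exactly as in \cref{lma:SCESS-sample}.

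The key computational step is to show that each such $p(y_1,\dots,y_r)$ can be computed classically in $\poly{n}$ time. Expanding the marginal,
\begin{equation*}
    p(y_1,\dots,y_r) \;=\; \frac{1}{|C|}\sum_{x,x'\in C}\, \bigl\langle \psi_{x'} \big| \bigl(\ketbra{y_1\cdots y_r} \otimes I\bigr) \big| \psi_x \bigr\rangle .
\end{equation*}
Because $\ket{\psi_x}$ and $\ket{\psi_{x'}}$ are product states over blocks of size $m_j = O(1)$, the inner product factorises into a product of at most $n+1$ block-wise terms: on the blocks fully inside the prefix we pick up factors $\bra{y_\text{blk}}V_{x,j}\ket{x_j}\cdot\overline{\bra{y_\text{blk}}V_{x',j}\ket{x'_j}}$, on the block straddling the cut we pick up a partially projected inner product, and on all blocks past the cut we pick up $\bra{x'_j}V_{x',j}^\dagger V_{x,j}\ket{x_j}$. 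Each factor involves only constant-size matrices/vectors and is computable in $O(1)$ time from the classical description of $\mathcal{W}_C$. Summed over the $|C|^2 = \poly{n}$ pairs, each marginal is computed in $\poly{n}$ time.

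Given this primitive, the sampling procedure is standard: iterate over the $M = O(n)$ output qubits; at qubit $r+1$ use the already-fixed prefix $y_1,\dots,y_r$ to compute $p(y_1,\dots,y_r,0)$ and $p(y_1,\dots,y_r,1)$; normalise by $p(y_1,\dots,y_r)$ (already computed in the previous round) to obtain the conditional probability; and flip a classical coin with the appropriate bias to produce $y_{r+1}$. Repeating $M$ times returns a string $z\in\B^M$ distributed according to $|\braket{z}{C_{\mathcal{W}}}|^2$, with overall classical runtime $O(M\cdot|C|^2\cdot n) = \poly{n}$.

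The main obstacle, compared to the SCESS proof, is precisely handling the tail cross-terms $\bra{x'_j}V_{x',j}^\dagger V_{x,j}\ket{x_j}$ that appear because $V_{x,j}\neq V_{x',j}$ in general; it is essential that these do not collapse to $\delta_{x,x'}$ (as they would for fixed isometries), so the double sum over $C\times C$ is genuinely needed. Once one observes, however, that these cross-terms are still products of constant-size inner products and therefore computable in $O(1)$ per block, the argument goes through without further obstacle.
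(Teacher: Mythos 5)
Your proposal is correct and follows essentially the same route as the paper's proof: bit-by-bit prefix-conditional sampling, where each marginal $p(y_1,\dots,y_r)$ is computable in $\poly{n}$ time because $\abs{C}=\poly{n}$ and each $\mathcal{V}_x\ket{x}$ is a product state over $O(1)$-size blocks. In fact you are more explicit than the paper (which simply defers to the SCESS case with the $x$-dependence "incorporated"), since you spell out the double sum over $C\times C$ and the block-wise cross-terms $\bra{x'_j}V_{x',j}^\dagger V_{x,j}\ket{x_j}$ that no longer reduce to Kronecker deltas when the isometries vary with $x$.
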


\begin{proof}
    The proof of this theorem follows directly from the reasoning in \cref{lma:SCESS-sample}. 
    By observing that in the Advanced SCESS state, the isometries $\{V_{x,j}\}_{j \in [n]}$ depend on the subset element $x \in C$, the sampling scheme can be adapted by incorporating the $x$-dependence into the calculation of the probabilities $p(y_0, y_1, \dots, y_{j-1})$ and $p(z \mid y_0, y_1, \dots, y_{j-1})$. 
    The key alteration is ensuring that for each $x$, the ordered isometries $\mathcal{V}_x = \{V_{x,j}\}_{j \in [n]}$ are used in computing the probabilities.

    Since the images of $\mathcal{V}_x \ket{x}$ remain product states over $O(n)$ qubits, and $\abs{C} = O(\poly{n})$, the efficient sampling scheme described in \cref{lma:SCESS-sample} applies without significant modification, ensuring that the string $z \in \B^M$ is sampled with probability $\abs{\braket{z}{C_{\mathcal{W}}}}^2$.
\end{proof}

\begin{lemma}[Efficient Query Access for SCESS States]
    \label{lma:adv-SCESS-query}
    Given the description of an advanced semi-classical encoded subset state $\ket{C_{\mathcal{W}}}$, one can compute $\braket{z}{C_{\mathcal{W}}}$ and $\abs{\braket{z}{C_{\mathcal{W}}}}^2$ for any $z \in \B^{M}$ in time $\poly{n}$.
\end{lemma}

\begin{proof}
    Similar logic to the proof of \cref{lma:SCESS-query} and \cref{lma:adv-SCESS-sampling} can be applied.
\end{proof}

\subsubsection{Matrix Product States}\label{app:mps-sample-query}

\begin{lemma}[Efficient Sampling from $\mathcal{F}(\mathrm{MPS}^*)$]\label{lma:mpsstar-sample}
    Given the classical tensor description of a state $\ket{\xi_{\rm MPS}}\in\mathcal{F}(\mathrm{MPS}^*)$ of the form \cref{eq:good-MPS}, one can sample classically from the distribution on $z\in\B^n$ given by $\Pr[z]=\abs{\braket{z}{\xi_{\rm MPS}}}^2$ in time $\poly{n,\chi}$.
\end{lemma}

\begin{proof}
    Because $\tilde A_j^{1}=0$ for all $j>\lambda$, every basis string with $z_j=1$ for some $j>\lambda$ has probability $0$.
    Hence the sampler may deterministically output $z_{\lambda+1}=\cdots=z_n=0$ and only needs to sample $z_1,\dots,z_\lambda$.

    We sample the first $\lambda$ bits sequentially.
    For a prefix $y_1,\dots,y_{j-1}$ define
    \begin{equation*}
    p(y_1,\dots,y_{j-1}) \coloneqq \|(\bra{y_1\cdots y_{j-1}}\otimes I)\ket{\xi_{\rm MPS}}\|^2.
    \end{equation*}
    For MPS, such prefix probabilities can be computed efficiently by contracting the corresponding partial tensor network using precomputed environments (transfer matrices), with cost $\poly{\chi}$ per step.
    Thus we can compute the conditional probabilities
    \begin{equation*}
    p(z_j=b\mid y_1,\dots,y_{j-1})=
    \frac{p(y_1,\dots,y_{j-1},b)}{p(y_1,\dots,y_{j-1})}
    \qquad (b\in\B),
    \end{equation*}
    sample $z_j$ by a biased coin flip, update the environment, and iterate.
    The resulting string $z$ is produced with probability $|\braket{z}{\xi_{\rm MPS}}|^2$.
    The total runtime is polynomial in $n$ and $\chi$.
\end{proof}

\begin{lemma}[Efficient Query Access for $\mathcal{F}(\mathrm{MPS}^*)$]\label{lma:mpsstar-query}
    Given the classical tensor description of a state $\ket{\xi_{\rm MPS}}\in\mathcal{F}(\mathrm{MPS}^*)$ of the form \cref{eq:good-MPS}, one can compute $\braket{z}{\xi_{\rm MPS}}$ and $\abs{\braket{z}{\xi_{\rm MPS}}}^2$ for any $z\in\B^n$ in time $\poly{n,\chi}$.
\end{lemma}

\begin{proof}
    Let $z\in\B^n$.
    If $z_j=1$ for some $j>\lambda$, then the corresponding tensor $\tilde A_j^{1}$ is zero by definition, hence the full contraction vanishes and $\braket{z}{\xi_{\rm MPS}}=0$.
    Otherwise,
    \begin{equation*}
        \braket{z}{\xi_{\rm MPS}} = {\rm Tr}\Big[\Big(\prod_{j=1}^{\lambda} A^{z_j}\Big)\Big(\prod_{j=\lambda+1}^{n}\tilde A_j^{0}\Big)B\Big],
    \end{equation*}
    which can be evaluated by sequential matrix multiplication and trace evaluation.
    Each multiplication involves matrices of dimension at most $\chi$, and the total cost is $O(n\chi^3)$ (or better depending on boundary conventions).
    Normalisation is computed analogously from the tensor description.
\end{proof}

\subsubsection{Gaussian States}\label{app:gauss-sample-query}

\begin{lemma}[Efficient Sampling from Gaussian States]\label{lma:Gauss-sample}
    Let $\ket{\varphi}$ be an $n$-mode fermionic Gaussian state specified by a classical description of its covariance matrix $M\in\mathbb{R}^{2n\times 2n}$ given to polynomially many bits of precision.
    Then one can sample classically from the distribution on $z\in\B^n$ given by $\Pr[z]=|\braket{z}{\varphi}|^2$ (in the occupation-number/computational basis) in time $\poly{n}$.
\end{lemma}

\begin{proof}
    We give a sequential sampling procedure based on conditional probabilities computed from Gaussian correlation data.

    For a prefix $y_1,\dots,y_{j-1}\in\B^{j-1}$ define
    \begin{equation*}
        p(y_1,\dots,y_{j-1}) \coloneqq \|(\bra{y_1\cdots y_{j-1}}\otimes I)\ket{\varphi}\|^2,
    \end{equation*}
    the probability of observing the first $j-1$ occupation outcomes to be $y_1,\dots,y_{j-1}$ when measuring in the computational basis.
    We aim to compute the conditional probabilities
    \begin{equation*}
        p(z_j=b \mid y_1,\dots,y_{j-1}) = \frac{p(y_1,\dots,y_{j-1},b)}{p(y_1,\dots,y_{j-1})}, \qquad b\in\B,
    \end{equation*}
    and sample $z_j$ by a biased coin flip.

    A key structural property of fermionic Gaussian states is that they are closed under conditioning on occupation-number measurements: conditioning on measurement outcomes on a subset of modes yields a (possibly unnormalised) post-measurement state on the remaining modes that is again Gaussian.
    Moreover, the covariance matrix of this conditioned state can be updated efficiently from the original covariance matrix using only polynomial-time linear-algebra operations on matrices of dimension at most $2n$.
    In particular, at each step $j$ we can:
    (i) compute the marginal probability $\Pr[z_j=1 \mid y_1,\dots,y_{j-1}]$ from the current covariance data of the conditioned Gaussian state, and
    (ii) update the covariance matrix conditioned on the sampled value $z_j\in\B$.

    Thus we can compute $p(z_j=b \mid y_1,\dots,y_{j-1})$ in time $\poly{n}$ at each step and sample $z_j$ accordingly.
    Iterating for $j=1,\dots,n$ produces a full string $z$ distributed exactly according to $\Pr[z]=|\braket{z}{\varphi}|^2$.
    The overall running time is polynomial in $n$.
\end{proof}

\begin{lemma}[Efficient Query Access for Gaussian States]\label{lma:Gauss-query}
    Let $\ket{\varphi}$ be an $n$-mode fermionic Gaussian state specified (up to global phase) by a classical description of its covariance matrix $M\in\mathbb{R}^{2n\times 2n}$ given to polynomially many bits of precision.
    Then for any $z\in\B^n$ one can compute the amplitude $\braket{z}{\varphi}$ and probability $|\braket{z}{\varphi}|^2$ (in the occupation-number/computational basis) in time $\poly{n}$.
\end{lemma}

\begin{proof}
    Fix $z\in\B^n$ with Hamming weight $k$, and let $J=\{j_1<\cdots<j_k\}$ denote the occupied modes.
    Gaussian states are completely determined by their two-point correlators, equivalently by the covariance matrix $M$ of Majorana operators.
    From $M$ one can efficiently compute the reduced two-point correlators restricted to the subset of modes $J$.

    For fermionic Gaussian states, occupation-basis amplitudes admit closed-form expressions in terms of these two-point correlators.
    In the number-conserving case (Slater determinants), $\braket{z}{\varphi}$ equals the determinant of a $k\times k$ matrix obtained from the corresponding one-particle data restricted to $J$.
    In the general case (allowing pairing), $\braket{z}{\varphi}$ equals the Pfaffian of a $2k\times 2k$ antisymmetric matrix $A_J$ whose entries are computable from the covariance matrix restricted to $J$.
    Both determinant and Pfaffian can be computed in polynomial time (e.g., via Gaussian elimination-type procedures in time $O(k^3) \subseteq O(n^3)$).
    Therefore $\braket{z}{\varphi}$ and $|\braket{z}{\varphi}|^2$ can be computed in time $\poly{n}$ from the description of $M$.
\end{proof}

\section{Local Hamiltonian Problems with Different States}\label{app:STLHP}
It is natural to consider variations of the \sc{Local Hamiltonian}, under which we are tasked with deciding whether states of a given type are extremal.
By this we mean, determine if there exists a state, from a family $\mathcal{F}$, such that $\tr(\rho H)$ is minimised, i.e., below a given threshold.
On the other hand, if all states in this family have energy above another threshold.
We formalise this problem with the following definition.

\statetypelocalhamiltonianproblem*

A classification of this problem's complexity has been presented by~\citet{kallaugher2024complexity} concerning the case when the state is a product state.
It was shown that for all families of $2$-local interactions $\mathcal{S}$, the problem \sc{$\mathcal{S}$-ProdLH} is \clw{NP}{complete}.
Additionally, this problem has been studied from the lens of parameterised complexity theory.
\citet{bremner2025parameterized} proved that the \sc{Weight-$k$ $l$-Local Hamiltonian} problem was contained in the class \cl{QW$[1]$} (the quantum analogue to \cl{W$[1]$} ``weft-$1$'') and hard for the class \cl{QM$[1]$}.
We now consider the case where the states are Gaussian.

The fact that Gaussian states have concise classical descriptions allows us to naturally refer to a decision problem with one parameter $a$, rather than a promise problem as for conventional \sc{$k$-Local Hamiltonian} problems. 
By convexity, a Gaussian state achieves an extreme value of $\tr(\rho H)$ if and only if there exists a pure Gaussian state $\ket{\varphi} \in \Gauss$, which achieves that value. 
By the same reasoning, mixtures of Gaussian states also cannot exceed the value attained by some pure Gaussian state.

\begin{theorem}\label{thm:gauss_lhp}
    The \sc{Gauss Local Hamiltonian} problem is \clw{NP}{complete}.
\end{theorem}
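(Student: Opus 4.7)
The plan is to establish \sc{Gauss Local Hamiltonian} as \clw{NP}{complete} by proving containment in \cl{NP} and \clw{NP}{hardness} separately. Both directions exploit the two key facts about Gaussian states established in \cref{app:gaussian}: they admit a succinct classical description via their $2n \times 2n$ covariance matrix $M$, and all expectation values factor via Wick's theorem into Pfaffians of submatrices of $M$.

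For containment, I would take the witness to be the covariance matrix $M$ of a pure Gaussian state $\ket\varphi$ (or, equivalently, an orthogonal matrix $Q \in O(2n)$ specifying the match-gate circuit together with the Fock string used as input). This has size $O(n^2)$. The verifier expands the $k$-local Hamiltonian $H = \sum_j h_j$ in the Majorana basis: applying a Jordan-Wigner transformation to each $h_j$ yields a polynomial in Majorana operators, and although Jordan-Wigner strings can be long, each resulting monomial involves at most $O(n)$ Majoranas. Wick's theorem then evaluates each monomial's expectation against $\ket\varphi$ as a Pfaffian of a submatrix of $M$ in $O(n^3)$ time. Summing over the $\poly(n)$ terms of $H$ and comparing against $a$ gives a polynomial-time verification procedure; this places the problem in \cl{NP}.

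For hardness, I would reduce from a standard \cl{NP}-complete classical optimisation problem such as \sc{Max-Cut}. Given a graph $G = (V,E)$, construct the diagonal $2$-local Hamiltonian
\begin{equation*}
    H_G = \sum_{(i,j) \in E} \tfrac{1}{2}\bigl(\mathbb{I} + Z_i Z_j\bigr),
\end{equation*}
whose ground state over all of $\H$ is a computational basis state $\ket{x^\star}$ with $\bra{x^\star}H_G\ket{x^\star} = \abs{E} - \text{MaxCut}(G)$. The crucial observation is that every computational basis state $\ket{x}$ is a Fock state, and every Fock state is Gaussian (it corresponds to choosing $Q = \mathbb{I}$ and Fock string $x$). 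Hence the minimum of $\bra\varphi H_G \ket\varphi$ over $\varphi \in \Gauss$ coincides with the unconstrained ground energy, so choosing $a = \abs{E} - c$ lets the Gaussian problem decide whether $\text{MaxCut}(G) \ge c$.

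The main obstacle I anticipate is the hardness direction's requirement that no non-classical Gaussian state beats the Fock-state minimum for $H_G$. This is immediate here because $H_G$ is diagonal in the computational basis and its global infimum is attained on a Fock state that already lies in $\Gauss$; however, the subtlety is worth verifying carefully, since in general the Gaussian minimum of a non-diagonal Hamiltonian can differ from the unconstrained minimum. A secondary technical point is handling numerical precision: the covariance matrix entries and the Pfaffian arithmetic need to be carried out at $\poly(n)$ bits to avoid subtle precision issues in the comparison with $a$, but standard estimates on the spectrum of $H$ (bounded by unity, cf.~\cref{def:Guided Local Hamiltonian}) imply that $\poly(n)$ bits suffice, so this is routine.
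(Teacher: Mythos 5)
Your proposal is correct and follows essentially the same route as the paper: containment via efficient classical evaluation of $\tr(\varphi H)$ for Gaussian $\varphi$ using the Jordan--Wigner/Pfaffian formalism, and hardness by observing that a classically \clw{NP}{hard} diagonal Hamiltonian (the paper cites the Ising model, which is your \textsc{Max-Cut} Hamiltonian) has a computational basis state --- hence a Gaussian state --- as its ground state, so the Gaussian minimum coincides with the unconstrained one. Your treatment is somewhat more explicit about the witness format and the Jordan--Wigner string lengths, but the underlying argument is identical.
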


We sketch the proof of this theorem.
The \cl{NP} containment is achieved by the following lemma:

\begin{lemma}\label{lma:gauss_calc}
    Let $H$ be a $k$-local Hamiltonian for $k=O(1)$, and $\varphi\in\H$ be a Gaussian state.
    Then there is a classical efficient algorithm for calculating $\tr(\varphi H)$.
\end{lemma}

\begin{proof}
    Any $k$-local Hamiltonian term is a linear combination of at most $4^k-1$ Pauli terms. 
    The Pfaffian formalism allows us to evaluate the expectation value of any Pauli term for a Gaussian quantum state via the Jordan-Wigner transformation. 
    Since $k=O(1)$, there are only a constant number of Pfaffian calculations is required. 
    Moreover, the upper bound on the number of terms in the $k$-local Hamiltonian is $O(n^k)$. 
    Therefore, there exists an efficient algorithm for calculating the energy of a Gaussian quantum state for any $k$-local Hamiltonian.
\end{proof}

The Ising model is a $k$-local Hamiltonian that is \clw{NP}{hard} to solve and has a basis state as its ground state. 
Since basis states are Gaussian, Lemma~\ref{lma:gauss_calc} implies that the \textsc{Gauss Local Hamiltonian} problem is \clw{NP}{complete}~\cite{whitfield2014np}.

\section{Weight-k Guiding States}\label{app:weight_k}
The \emph{fixed-weight} states considered do not have their Hamming weight as a parameter; rather, the Hamming weight is fixed with respect to the instance.
Alternatively, we can consider a superposition state parameterised by the Hamming weight.
We refer to such states as \emph{weight-$k$} states, $\ket{\psi_k} = \sum_{x : \hw(x) = k} \alpha_x \ket{x}$.
The local Hamiltonian problem variant that asks to determine if there exists an extremal weight-$k$ state $\ket{\psi_k}$ is known to be in the class \cl{XP} \cite{bremner2022quantum} via straightforward projection and diagonalisation arguments.
Furthermore, this problem can be verified in constant depth, using one ``big-\sc{and}'' gate along any given path (weft-$1$), i.e., the problem is in \cl{QW}$[1]$ \cite{bremner2025parameterized}, though completeness for this class is unknown.
Though we have not been able to construct a hardness result for the \sc{Guided Local Hamiltonian} problem using weight-$k$ states, we can comment on the difficulty of proving this and the complexity of the problem in general.

To parametrise the guiding state by Hamming weight would require a modification in the Feynman-Kitaev construction to remove the instance from the input and a careful choice of clock mapping.
Or, construct a reduction that creates a history state with a parameterised Hamming weight.
Assuming guiding states of this type \emph{could} be constructed, it may then be expected that the arguments proving containment in \cl{XP} would be sufficient to prove the problem was classically tractable.
Moreover, assume there exists a state $\ket{\psi_k}$ with overlap $\delta$ with the ground state of a local Hamiltonian $H$.
Let the ground-state energy of $H$ be $\lambda_0$.
Since $k$ is known via an $f(k) \log(n)$-sized classical description of $\ket{\psi_k}$, we define the Hamiltonian $H_k = P_k H P_k$ where $P_k$ is the projector onto the subspace of Hamming weight $k$.
Let the ground-state energy of $H_k$ be $\mu_0$.
Brute force diagonalisation of $H_k$ is then possible in $n^{O(k)}$ time, so $\mu_0$ is computable classically.
In the \sc{Yes} case, if $\mu_0 \leq a$, then Courant-Fischer theorem implies that $\lambda_0 \leq a$.
In the \sc{No} case, if $\mu_0 \geq b$, then is it not clear that $\lambda_0 \geq b$.
The structure of the eigenstates of $H_k$ are difficult to determine analytically, though the guiding state $\ket{\psi_k}$ can be used as a trial state.

\begin{proposition}\label{prop:weight-k-bound}
    Consider a Hamiltonian $H$ with an eigensystem $\{(\lambda_j,\ket{\phi_j})\}_{j=0}^{2^n-1}$, where $\lambda_0$ is the ground-state energy and $\ket{\phi_0}$ is the ground state.
    Assume there exists a weight-$k$ state $\ket{\psi_k}$ such that $F_{\psi_k,\phi_0} = \delta$.
    Then, let $H_k$ be the Hamiltonian projected onto the weight-$k$ subspace with eigensystem $\{(\mu_i,\ket{\nu_i})\}_{i=0}^{\binom{n}{k}-1}$, where $\mu_0$ is the ground-state energy of $H_k$ and $\ket{\nu_0}$ is the ground state of $H_k$.
    The difference between the ground-state energy of $H$ and the ground-state energy of $H_k$ is bounded as
    \begin{equation*}
        (1-\delta)\gamma + \lambda_0 \leq \mu_0 \leq \lambda_0 + (1-\delta)\norm{H},
    \end{equation*}
    where $\gamma = \lambda_1 - \lambda_0$.
\end{proposition}

\begin{proof}
    Consider the weight-$k$ state 
    \begin{equation*}
        \ket{\psi_k} = \sum_{x : \hw(x)=k} \alpha_x \ket{x},
    \end{equation*}
    and the projector $P_k$ onto the weight-$k$ subspace.
    Take $Q_k$ to be the projector onto the orthogonal complement of the weight-$k$ subspace.
    Assume that the overlap between the ground state and the weight-$k$ state is a parameter $\delta$, i.e., $F_{\psi_k,\phi_0} = \delta$.
    It follows from the fact that the set $\{\ket{\phi_j}\}_{j=0}^{2^n-1}$ is an orthonormal basis, that 
    \begin{equation*}
        \ket{\psi_k} = \sum_{j=0}^{2^n-1} c_j \ket{\phi_j} = {\rm e}^{{\rm i} \theta_0} \sqrt{\delta} \ket{\phi_0} + \sum_{j>0} c_j \ket{\phi_j}.
    \end{equation*}
    Since $\ket{\psi_k}$ lies in the weight-$k$ subspace, we have that $P_k \ket{\psi_k} = \ket{\psi_k}$.
    Define the projected Hamiltonian $H_k = P_k H P_k$.
    The eigensystem of $H_k$ is $\{(\mu_i,\ket{\nu_i})\}_{i=0}^{\binom{n}{k}-1}$, where $\mu_0$ is the ground-state energy of $H_k$ and $\ket{\nu_0}$ is the ground state of $H_k$.
    Unless $\delta=1$, then $\ket{\nu_0}$ is not equal to $P_k \ket{\phi_0}$.
    To bound the quantity ${\mu_0 - \lambda_0}$, we use the variational principle and the trial state $\ket{\psi_k}$.
    Specifically, $\mu_0 \leq \bra{\psi_k} H_k \ket{\psi_k}$, and
    \begin{equation*}
        \bra{\psi_k} H_k \ket{\psi_k} = \bra{\psi_k} H \ket{\psi_k} = \sum_{i,j} c_i^* c_j \bra{\phi_i} H \ket{\phi_j} = \sum_{i,j} c_i^* c_j \lambda_j \braket{\phi_i}{\phi_j} = \sum_{j} \abs{c_j}^2 \lambda_j.
    \end{equation*}
    Expanding $\ket{\nu_0}$ in the eigenbasis of $H$, i.e., $\ket{\nu_0} = \sum_{j} d_j \ket{\phi_j}$, we have that 
    \begin{equation*}
        (1-\abs{d_0}^2)\gamma + \lambda_0 \leq \bra{\psi_k} H_k \ket{\psi_k} \leq \lambda_0 + (1-\delta)\norm{H},
    \end{equation*}
    where $\gamma = \lambda_1 - \lambda_0$.
\end{proof}

Therefore, unless the overlap is sufficiently large, the problem is not classically tractable.
Moreover, for the \sc{No} case, even in the event $\psi_k = \nu_0$, we find that $\mu_0 \geq \lambda_0 + (1-\delta)\gamma$ and therefore we require, at least, $\delta > 1 - \varrho_{ab}/\gamma$, where $\varrho_{ab}$ is the polynomially-small promise bound.
Though, this does not rule out a proof of classical tractability via other means.

We note that this proposition is conceptually related to the \emph{projection lemma} of~\citet{kempe2006complexity}: both bound how restricting the Hilbert space (or penalising its complement) affects low-energy eigenvalues. 
However, the projection lemma assumes an explicit large penalty (spectral separation) and treats the remainder as a small perturbation (yielding an error that scales like $\norm{H}^2 / J$), whereas \cref{prop:weight-k-bound} assumes only the existence of a single weight-$k$ state with fidelity $\delta$ to the ground state and produces $1-\delta$ bounds. 
The two results are complementary with that of Ref.~\cite{kempe2006complexity} being perturbative and constructive, whereas our bound is overlap-based and elementary.

\section{Optimality of Uniform Amplitudes}\label{app:uniform}
Here we provide arguments on the optimality of uniform amplitude states against the Feynman-Kitaev construction history state, specifically in the case of flat coefficients.
To show this, we can reframe the overlap problem as a Lagrange multiplier problem.
Specifically, consider a subset state $\ket{S}$ of the form $\ket{S} = \sum_{x\in S} \alpha_x \ket{x}$.
The fidelity of $\ket{S}$ with the history state portion $\ket{\eta_{\mu,1}}$ is given by
\begin{equation*}
    F_{S,\eta_{\mu,1}} = \frac{1}{N+K+1} \abs{\sum_{x \in S \cap E_{\mu,1}} \alpha_x^*}^2.
\end{equation*}
We have denoted $E_{\mu,1}$ as the set of computational basis states in $\ket{\eta_{\mu,1}}$.
The task is the maximisation of $\abs{\sum_{x \in S \cap E_{\mu,1}} \alpha_x^*}^2$, the elementary symmetric polynomial of order-$2$, subject to the constraint $\sum_{x\in S} \abs{\alpha_x}^2 = 1$.
Clearly, $S \neq E_{\mu,1}$ is already sub-optimal.
The Lagrange multiplier problem is then 
\begin{equation*}
    \max_{\alpha_x} \abs{\sum_{x \in S \cap E_{\mu,1}} \alpha_x^*}^2 - \lambda\left(\sum_{x\in S} \abs{\alpha_x}^2 - 1\right).
\end{equation*}
By Cauchy-Schwarz, it can be shown that the optimal solution is $\alpha_x = 1/\sqrt{\abs{S}}$ for $x\in S$.
Hence, a relaxation of the uniform amplitude condition is sub-optimal and will likely result in a smaller upper bound from \clw{BQP}{hardness} proof.
An example of this is seen in Ref.~\cite{gharibian2023dequantizing}.

\section{Hardness Reduction for the Fermi-Hubbard Model}\label{app:fermi_hubbard_complexity}

We provide the details supporting \cref{thm:guided-fermi-hubbard-no-fields}; the local-field case (\cref{thm:guided-fermi-hubbard-local-fields}) follows an identical argument once the guiding-state analysis of \cref{sec:fermi_hubbard_local_fields} is substituted.

\subsection{Jordan-Wigner Transformation and Local Encoding}\label{app:jw}

Consider $n$ lattice sites, each supporting two fermionic modes $(j,\sigma)$ with $j\in[n]$ and $\sigma\in\{\uparrow,\downarrow\}$, ordered lexicographically.
We apply the standard Jordan-Wigner (JW) transformation to represent the Fermi-Hubbard Hamiltonian $\tilde{H}$ as a $2n$-qubit operator.
The JW strings are supported only on qubits preceding $(j,\sigma)$ in the chosen ordering; in particular, nearest-neighbour hopping terms become four-qubit Pauli strings (explicit expressions are standard and omitted).

At half-filling in the singly-occupied subspace $\mathcal{L}_{-} \coloneqq \bigotimes_{j} \mathrm{span}\{\ket{10}_{j},\ket{01}_{j}\}$, we define the local isometry $J_j : \mathbb{C}^2 \to \mathbb{C}^4$ by
\begin{equation*}
  J_j\ket{\boldsymbol{0}}_j = \ket{10}_j, \qquad J_j\ket{\boldsymbol{1}}_j = \ket{01}_j,
\end{equation*}
where $\ket{10}_j$ and $\ket{01}_j$ denote occupation states $(n_{j,\uparrow}, n_{j,\downarrow}) \in \{(1,0),(0,1)\}$ respectively.
The global encoding $\mathcal{J} \coloneqq \bigotimes_{j} J_j$ embeds $n$-qubit spin states into the singly-occupied fermionic subspace.

\subsection{Perturbative Reduction to the Heisenberg Model}\label{app:hubbard_heisenberg}

Let $\Pi_{\pm}$ project onto $\mathcal{L}_{\pm}$ (low/high energy subspaces of $V$).
Since every hopping event creates or removes a doubly-occupied site, $T_{--} = 0$ and $V_{--} = 0$, while $V_{++} \succeq U\,I$.
Standard second-order perturbation theory~\cite{auerbach1994interacting} gives the effective Hamiltonian on $\mathcal{L}_{-}$:
\begin{equation*}
  \tilde{H}_{\mathrm{eff}} = -T_{-+}\,V_{++}^{-1}\,T_{+-} + O\left(U^{-2}t^3\right).
\end{equation*}
A direct local calculation on each edge $\langle i,j\rangle$, using the JW representations above, yields
\begin{equation}\label{eq:heff-heis}
  \tilde{H}_{\mathrm{eff}} = \frac{2t^2}{U}\sum_{\langle i,j\rangle} W_{ij} + k\,I + O\left(U^{-2}t^3\right),
\end{equation}
where $W_{ij} \coloneqq X_iX_j + Y_iY_j + Z_iZ_j$ is the Heisenberg interaction operator acting on the logical qubits via $\mathcal{J}$, and $k$ is an energy constant.

We invoke the second-order perturbation simulation framework of \citet{bravyi2016complexity} with the target Hamiltonian $H \coloneqq \sum_{\langle i,j\rangle} k_{ij} W_{ij}$, $0 \leq k_{ij} \leq \poly{n}$. 
Choosing $t$ and $U$ scaling at most polynomially in $n$ (following~\citet{ogorman2021electronic}), the Fermi-Hubbard Hamiltonian $\tilde{H}$ is an $(\eta,\epsilon)$-simulation of $H$ under $\mathcal{J}$, with $\eta, \epsilon = 1/\poly{n}$.
Manual normalisation then ensures $\|\tilde{H}\|, \|H\| \leq 1$ as required by the problem definition.
The geometry of local interactions is preserved under the reduction, so any lattice structure (square or triangular) present in $\tilde{H}$ is inherited by $H$.

\subsection{Guiding State Construction and Overlap Preservation}\label{app:guiding_state}
We now verify that the semi-classical encoded subset state (SCESS) guiding state $\ket{\xi_{\rm s}}$ for the Heisenberg instance induces a valid fermionic guiding state $\ket{\zeta_{\rm f}}$ for $\tilde{H}$ with sufficient overlap.

\paragraph{Construction.}
Define $\ket{\zeta_{\rm f}} \coloneqq \mathcal{J}\ket{\xi_{\rm s}}$.
If $\ket{\xi_{\rm s}}$ is a SCESS over $(S, \{V_j\}_{j\in[n]})$ with $V_j : \mathbb{C}^2 \to (\mathbb{C}^2)^{m_j}$, then composing each $V_j$ with the local encodings $(J_k)^{\otimes m_j}$ acting on its image gives updated isometries $W_j \coloneqq (J_k)^{\otimes m_j} \circ V_j$, and
\begin{equation*}
  \ket{\zeta_{\rm f}} = \frac{1}{\sqrt{|S|}} \sum_{x \in S} \bigotimes_{j \in [n]} W_j \ket{\boldsymbol{x}_j}.
\end{equation*}
This is a SCESS over $(S, \{W_j\}_{j\in[n]})$; the classical description is extended only by the fixed, efficiently computable local encodings $J_k$, preserving polynomial description complexity.
As a concrete example, the non-trivial isometry used in the Heisenberg reduction~\cite{cade2023improved} satisfies $V_j\ket{\boldsymbol{0}}_j = \frac{1}{\sqrt{2}}(\ket{\boldsymbol{01}} - \ket{\boldsymbol{10}})$, which under $J_{j^1} J_{j^2}$ maps to
\begin{equation*}
  W_j\ket{\boldsymbol{0}}_j = \tfrac{1}{\sqrt{2}}\bigl(\ket{10,01}_{j^1,j^2} - \ket{01,10}_{j^1,j^2}\bigr),
\end{equation*}
a state naturally supported on the singly-occupied subspace.

\paragraph{Overlap bound.}
Let $\ket{\phi_0}$ and $\gamma$ be the ground state and spectral gap of $H$ (with $\gamma \geq 1/\poly{n}$ for the instances produced by the
reduction~\cite{cade2023improved}), and suppose $\abs{\braket{\phi_0}{\xi_{\rm s}}}^2 \geq \delta$ for some $\delta \in (1/\poly{n}, 1 - 1/\poly{n})$.
By \cite[Lemma~2]{bravyi2016complexity}, for our choice of parameters the Fermi-Hubbard ground state $\ket{\psi_0}$ satisfies
\begin{equation*}
  \norm{\ket{\psi_0} - \mathcal{J}\ket{\phi_0}} \leq \eta + O\left(\epsilon/\gamma\right) = 1/\poly{n}.
\end{equation*}
Applying \cref{lma:geometric-lemma} then gives
\begin{equation*}
  \abs{\braket{\zeta_{\rm f}}{\psi_0}}^2 \geq \kappa,
\end{equation*}
for some $\kappa \in (1/\poly{n}, 1-1/\poly{n})$, where the interval is preserved up to the perturbative error (absorbed into the polynomial).
Since $\eta \propto n$, all polynomial bounds carry over to bounds in $\eta$.
This completes the \clw{BQP}{hardness} reduction for \cref{thm:guided-fermi-hubbard-no-fields}.
\end{document}